\newtheorem{Lem}{Lemma}
\newtheorem{theorem}{Theorem}
\newtheorem{Cor}{Corollary}
\newtheorem{Inv}{Invariant}
\def\polylog{\operatorname{polylog}}
\newcommand{\DSSSP}{\ensuremath{\mathcal D_{\mathit{SSSP}}}\xspace}
\newcommand{\DES}{\ensuremath{\mathcal{ES}}\xspace}
\newcommand{\DSCC}{\ensuremath{\mathcal D_{\mathit{SCC}}}\xspace}
\title{Decremental SSSP in Weighted Digraphs: Faster and Against an Adaptive Adversary}
\author{Maximilian Probst Gutenberg
        \footnote{Department of Computer Science,
                  University of Copenhagen,
                  \texttt{probst@di.ku.dk}}
        \and Christian Wulff-Nilsen
        \footnote{Department of Computer Science,
                  University of Copenhagen,
                  \texttt{koolooz@di.ku.dk},
                  \texttt{http://www.diku.dk/$_{\widetilde{~}}$koolooz/}. This research is supported by the Starting Grant 7027-00050B from the Independent Research Fund Denmark under the Sapere Aude research career programme.}}
\date{}
\begin{document}

\maketitle
\begin{abstract}
Given a dynamic digraph $G = (V,E)$ undergoing edge deletions and given $s\in V$ and constant $\epsilon$ with $0 < \epsilon\leq 1$, we consider the problem of maintaining $(1+\epsilon)$-approximate shortest path distances from $s$ to all vertices in $G$ over the sequence of deletions. Even and Shiloach (J.~ACM'$81$) give a deterministic data structure for the exact version of the problem in unweighted graphs with total update time $O(mn)$. Henzinger et al. (STOC'$14$, ICALP'$15$) give a Monte Carlo data structure for the approximate version with an improved total update time bound of $ O(mn^{0.9 + o(1)}\log W)$ with better bounds for sufficiently dense and sufficiently sparse graphs; here $W$ is the ratio between the largest and smallest edge weight. A drawback of their data structure and in fact of all previous randomized data structures is that they only work against an oblivious adversary, meaning that the sequence of deletions needs to be fixed in advance. This severely limits its application as a black box inside algorithms. We present the following $(1+\epsilon)$-approximate data structures:
\begin{enumerate}
\item the first data structure is Las Vegas and works against an adaptive adversary; it has total expected update time $\tilde O(m^{2/3}n^{4/3})$\footnote{Here, $\tilde O$ suppresses logarithmic factors so that $\tilde O(f(n)) = O(f(n)\polylog(f(n)))$; $\tilde\Omega$ and $\tilde\Theta$ are defined similarly.}  for unweighted graphs and $\tilde O(m^{3/4}n^{5/4}\log W)$ for weighted graphs,
\item the second data structure is Las Vegas and assumes an oblivious adversary; it has total expected update time $\tilde O(\sqrt m n^{3/2})$ for unweighted graphs and $\tilde O(m^{2/3}n^{4/3}\log W)$ for weighted graphs,
\item the third data structure is Monte Carlo and is correct w.h.p.~against an oblivious adversary; it has total expected update time $\tilde O((mn)^{7/8}\log W) = \tilde O(mn^{3/4}\log W)$.
\end{enumerate}
Each of our data structures can report the length of a $(1+\epsilon)$-approximate shortest path from $s$ to any query vertex in constant time at any point during the sequence of updates; if the adversary is oblivious, a query can be extended to also report such a path in time proportional to its length. Our update times are faster than those of Henzinger et al.~for all graph densities. For instance, when $m = \Theta(n^2)$, our second result improves their bound from $\tilde O(n^{2 + 3/4 + o(1)}\log W)$ to $\tilde O(n^{2 + 1/2})$ in the unweighted setting and to $\tilde O(n^{2 + 2/3}\log W)$ in the weighted setting. When $m = \Theta(n)$, our third result gives an improvement from $\tilde O(n^{1+5/6+o(1)}\log W)$ to $\tilde O(n^{1+3/4}\log W)$. Furthermore, our first data structure is the first to improve on the $O(mn)$ bound of Even and Shiloach for all but the sparsest graphs while still working against an adaptive adversary and works even in weighted graphs; this answers an open problem by Henzinger et al.
\end{abstract}
\newpage

\section{Introduction}\label{sec:Intro}
Computing shortest paths is a classical algorithmic problems dating back to the $1950$'s. A classical algorithm like BFS (breadth-first search) solves the single-source variant in linear time for unweighted graphs and Dijkstra's algorithm solves it in near-linear time for graphs with non-negative edge weights.

Maintaining shortest paths in a dynamic graph has also received attention for decades. The classical result of Even and Shiloach~\cite{EvenS81} from $1981$ states that there is a deterministic data structure which maintains a BFS tree from a given source vertex $s$ under edge deletions in the underlying graph with total update time $O(mn)$ where $m$ resp.~$n$ is the number of edges resp.~vertices. The BFS tree is maintained explicitly so that at any point during the sequence of updates, the shortest path distance from $s$ to a query vertex can be answered in constant time and the corresponding path can be reported in time proportional to its length.

Even and Shiloach assumed the graph to be undirected and unweighted. Henzinger and King~\cite{HenzingerK95} and King~\cite{King99} generalized this result to directed graphs with integer weights. The total update is $O(mD)$ where $D$ is the largest finite distance from $s$ in any of the graphs obtained during the sequence of deletions.

Focusing on directed unweighted graphs, the $O(mn)$ bound stood until a breakthrough result in $2014$ by Henzinger et al.~\cite{HenzingerKN142}. They obtained a randomized Monte Carlo bound of $\tilde O(mn^{0.984})$ and in a subsequent paper~\cite{HenzingerKN15}, they improved the bound to $O(\min\{m^{7/6}n^{2/3 + o(1)}, m^{3/4}n^{5/4 + o(1)}\})$. This is $O(mn^{0.9 + o(1)})$ for all $m$, is $O(1 + 5/6 + o(1))$ for $m = \Theta(n)$, and is $O(n^{2+3/4 + o(1)})$ for $m = \Theta(n^2)$. At the cost of a factor of $\log W$ in the running time, they generalized the result to weighted graphs where $W$ is the ratio between the largest and smallest edge weight.

For undirected unweighted graphs, further improvements over~\cite{EvenS81} exist. Bernstein and Roditty~\cite{BernsteinR11} showed a total update time bound of $O(n^{2+O(1/\sqrt{\log n})})$. Henzinger et al.~\cite{HenzingerKN141} improved this to $O(n^{1.8 + o(1)} + m^{1+o(1)})$ and later to a near-linear bound of $O(m^{1+o(1)})$~\cite{HenzingerKN143}; their result extends to weighted graphs at the cost of a factor of $\log W$ in the total update time. These improvements are all randomized. Chechik and Bernstein obtained deterministic bounds of $\tilde O(n^2)$~\cite{BernsteinC16} and of $\tilde O(n^{5/4}\sqrt m)  = \tilde O(mn^{3/4})$~\cite{BernsteinC17} for unweighted undirected graphs.

All the improvements over~\cite{EvenS81} mentioned above (excluding the generalization by King) maintain $(1+\epsilon)$-approximate distances rather than exact distances. A result by Roditty and Zwick~\cite{RodittyZ11} suggests that this is necessary since breaking the $O(mn)$ bound while maintaining exact distances would lead to major breakthroughs for, e.g., Boolean matrix multiplication. Henzinger et al.~\cite{HenzingerKNS15} later showed that such a result would give a truly subcubic time algorithm for online Boolean matrix-vector multiplication which again would be a major breakthrough. This suggests that, in order to break the $O(mn)$ bound, approximation is necessary.

The more restricted problem of maintaining reachability from a given source vertex to all vertices of a graph undergoing edge deletions has also been studied. Henzinger et al.~\cite{HenzingerKN15} gave a bound of $\tilde O(\min\{m^{7/6}n^{2/3},m^{3/4}n^{5/4+o(1)},m^{2/3}n^{4/3+o(1)}+m^{3/7}n^{12/7+o(1)}\})$ which is faster than their SSSP result for dense graphs. Significant progress was made by Chechik et al.~\cite{ChechikHILP16}. They showed how to obtain a total expected update time bound of $\tilde O(m\sqrt n)$ with constant query time.

\subsection{Our results}
In this paper, we focus on directed graphs. A limitation of the data structure of Henzinger et al.~\cite{HenzingerKN142}, and in fact of every randomized data structure referred to above, is that it assumes an oblivious adversary which fixes the sequence of updates in advance. This is in contrast to an adaptive adversary which is allowed to perform updates based on answers to previous distance queries. Several papers apply the data structure of Even and Shiloach as a black box inside an algorithm which performs modifications to the underlying graph based on the distances reported by this structure. The oblivious adversary assumption means that the randomized data structures above cannot be plugged in instead as a black box since the algorithm acts as an adaptive adversary.

We improve on the result in~\cite{HenzingerKN142} in two ways. First, we present a data structure which is faster than both~\cite{EvenS81} and~\cite{HenzingerKN142} for dense graphs, which is \emph{Las Vegas} rather than Monte Carlo and which works against an \emph{adaptive adversary}:
\begin{theorem}\label{Thm:SSSPAdaptive}
Let $G = (V,E)$ be a graph undergoing edge deletions by an adaptive adversary, let $s\in V$, and let $0 < \epsilon \leq 1$ be given. Then there is a data structure with total expected update time $\tilde O(m^{2/3}n^{4/3}/\epsilon^{2/3} + n^2/\epsilon^2)$ for unweighted graphs and $\tilde O((m^{3/4}n^{5/4}/\epsilon^{3/4} + n^2/\epsilon^2)\log W)$ for weighted graphs where $W$ is the ratio between the largest and smallest edge weight. At any point, when given any query vertex $u\in V$, the data structure outputs in $O(1)$ time a value $\tilde d_G(s,u)$ such that $d_G(s,u)\leq\tilde d_G(s,u)\leq (1+\epsilon)d_G(s,u)$.
\end{theorem}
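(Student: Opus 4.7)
The proof follows the classical ``depth-truncated ES tree plus sparse emulator'' template, adapted to an adaptive adversary and extended to weighted graphs. Fix a hop parameter $h$ to be optimized later. We maintain (i) an Even-Shiloach out-tree $T_s$ rooted at $s$ in $G$ with depth bound $h$, contributing $O(mh/\eps)$ to the total update time by the standard ES analysis, and (ii) a sparse auxiliary graph $H$ on $V$ together with a shallow ES tree from $s$ over $H$. The emulator $H$ is designed so that (a) $d_G(s,u)\le d_H(s,u)\le(1+\eps)d_G(s,u)$ whenever $d_G(s,u)>h$, (b) $|E(H)|=\tilde O(n/\eps)$, and (c) every $s$-to-$u$ shortest path in $H$ uses $\tilde O(n/h)$ hops; hence the ES computation over $H$ costs $\tilde O(n^2/(h\eps))$. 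A query at $u$ returns the minimum of the two distance estimates in $O(1)$ time. Balancing $mh\approx n^2/h$ would give the oblivious bound $\tilde O(\sqrt m\, n^{3/2})$; setting $h\approx n^{4/3}/(m^{1/3}\eps^{2/3})$ yields the claimed adaptive bound once we pay for a robust emulator described next.

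\textbf{The adaptive emulator.} The heart of the argument is constructing and maintaining $H$. It is built from a randomly sampled pivot set $Z\subseteq V$ of size $\tilde O(n/h)$, together with in- and out-ES-trees of depth $h$ rooted at the pivots; the edges of $H$ are the shortcut pairs $(u,z)$ and $(z,v)$ certified by these trees. A hitting-set argument ensures that every path of at least $h$ hops in the current $G$ passes through some $z\in Z$ with high probability \emph{provided} $Z$ is still fresh, yielding the $\tilde O(n/h)$-hop property. An adaptive adversary can invalidate $Z$ by deleting edges that expose its membership; when this happens we resample $Z$ and rebuild the affected ES trees. A potential argument charges each rebuild to the edge deletions that forced the exposure and shows that the expected number of rebuilds, times the cost $\tilde O(mh)$ of each, stays within $\tilde O(m^{2/3}n^{4/3}/\eps^{2/3})$; the additive $\tilde O(n^2/\eps^2)$ term absorbs global bookkeeping per resample. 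Because the adversary can force more work but never a wrong answer, the data structure is Las Vegas.

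\textbf{Weighted extension and main obstacle.} For weighted graphs we apply the standard $(1+\eps)$-scale bucketing: partition the edge weights into $O(\log W/\eps)$ geometric classes, run a modified version of the unweighted structure on each class, and take the minimum estimate. This yields the $\log W$ factor and, because the balance between $h$ and the pivot-depth shifts when edge weights vary within a class, the slightly weaker exponent $m^{3/4}n^{5/4}$. The principal obstacle is the adaptive-adversary analysis of the emulator: the pivot-exposure potential must be tight enough that total refresh work fits the budget, yet the resampling must preserve unconditional correctness. The crux is a careful amortization that charges each refresh to roughly $\tilde\Omega(mh)$ edge-updates in the vicinity of the exposed pivots and argues that these vicinities are essentially disjoint across pivots; everything else (the $T_s$ maintenance, the shallow ES tree on $H$, the weight bucketing) is either standard or follows from this core accounting.
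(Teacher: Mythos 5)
There is a genuine gap, and it sits exactly at the point your proposal defers to ``a careful amortization.'' Your emulator is the standard sampled-pivot (hitting-set) construction: a random $Z$ of size $\tilde O(n/h)$ with depth-$h$ in/out ES trees and shortcut edges. Against an adaptive adversary this is precisely the construction that breaks, because the approximate distances you answer are a function of $Z$, so the deletion sequence becomes correlated with $Z$ and the hitting-set guarantee no longer holds; moreover the adversary can force a rebuild of a pivot's trees by deleting only the edges incident to (or near) that pivot, which in a sparse graph is far fewer than the $\tilde\Omega(mh)$ deletions your charging scheme requires. Indeed the accounting as stated is arithmetically impossible: there are at most $m$ deletions in total, so charging $\tilde\Omega(mh)$ deletions per rebuild would permit essentially no rebuilds at all, yet rebuilds can clearly be forced cheaply and repeatedly. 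Nothing in the proposal shows that resampling ``preserves unconditional correctness'' at affordable cost, and this is the entire difficulty the theorem is about. The paper avoids pivots altogether for the adaptive case: it maintains a hierarchical low-diameter SCC decomposition of $G\setminus E(S)$ (Theorem~\ref{Thm:LowDiamDecomp} and Corollary~\ref{Cor:LowDiamDecomp}), where randomness is used only for the roots of in-out trees inside SCCs and, crucially, the information revealed to the adversary (the SCC partition, hence the distances computed over the contracted multigraph) is independent of those roots; the two thresholds $d_1<d_2$ implement a delayed, balanced re-partitioning so that fresh roots are drawn whenever a split could leak information. Approximate distances are then obtained not from a pivot emulator but from the contracted multigraph $M$ with a maintained topological order and level-dependent ``flexible'' edge weights (Sections~\ref{sec:Multigraph}--\ref{sec:FastES}), and the query adds $2d_2$ to absorb intra-SCC detours. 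Your plan contains none of this machinery and no substitute for it.

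Two further problems: (i) your truncated out-tree $T_s$ of depth $h$ should cost $O(mh)$, not $O(mh/\epsilon)$, but more importantly the claimed $\tilde O(n^2/(h\epsilon))$ cost for the ES computation on $H$ presupposes both the hop bound and the sparsity of $H$, which are exactly the properties that fail once $Z$ is correlated with the updates; and (ii) the weighted extension by running independent structures on $O(\log W/\epsilon)$ geometric weight classes and taking the minimum estimate is not sound, since a shortest path mixes edges from many classes and per-class distances do not combine by a minimum. The paper instead reduces, per distance scale (a $\log W$ factor), to integer weights in a polynomial range, treats heavy edges (weight at least a threshold $\omega$) directly in the multigraph where they are scanned only $O(D/\omega)$ times, and generalizes the separator layers to ``thickness $\omega$'' layers inside the SCC decomposition (Appendix~\ref{sec:WeightedGraphs}); this is where the weaker exponent $m^{3/4}n^{5/4}$ actually comes from, not from bucketing.
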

We emphasize that this is the first result that works against an adaptive adversary and breaks the $O(mn)$ bound of Even and Shiloach for all but the sparsest graphs. Furthermore, this bound is broken even in the weighted setting. This answers an open problem stated in the follow-up work~\cite{HenzingerKN15a} to~\cite{HenzingerKN142,HenzingerKN15}.

It is relevant to point out that when we refer to an adaptive adversary, we allow it to make updates based on answers to previous queries; however, we do not allow it to somehow measure the time spent on handling individual updates and make further updates based on this information. We believe this is a fairly minor restriction since our motivation for allowing an adaptive adversary is to be able to employ the data structure as a black box inside an algorithm instead of the data structure of Even and Shiloach; as both data structures only give a guarantee on the total update time, it seems reasonable to assume that the choices made by the algorithm is independent of the time spent in individual updates.

Next, we provide a Las Vegas structure which is even faster for dense graphs:
\begin{theorem}\label{Thm:SSSPDense}
Let $G = (V,E)$ be a graph undergoing edge deletions by an oblivious adversary, let $s\in V$, and let $0 < \epsilon \leq 1$ be given. Then there is a data structure with total expected update time $\tilde O(\sqrt mn^{3/2}/\epsilon^{3/2})$ for unweighted graphs and $\tilde O(m^{2/3}n^{4/3}\log W/\epsilon^{5/3})$ for weighted graphs where $W$ is the ratio between the largest and smallest edge weight. At any point, when given any query vertex $u\in V$, the data structure outputs in $O(1)$ time a value $\tilde d_G(s,u)$ such that $d_G(s,u)\leq\tilde d_G(s,u)\leq (1+\epsilon)d_G(s,u)$. The data structure can also report a $(1+\epsilon)$-approximate path from $s$ to $u$ in time proportional to its length.
\end{theorem}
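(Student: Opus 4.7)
My plan is to follow the random-pivot plus Even--Shiloach (ES) tree paradigm, exploiting the oblivious adversary so that a small random sample of vertices hits all long shortest paths with high probability. Fix a depth parameter $h$ to be chosen later and sample a set $P\subseteq V$ of size $\tilde\Theta(n/h)$ uniformly at random. Since the deletion sequence is chosen independently of $P$, a standard Chernoff argument shows that, with high probability, every shortest path of length $\geq h$ that appears at any time during the update sequence contains at least one pivot of $P$. This lets me decompose any long shortest path from $s$ to a target $v$ into a pivot-to-pivot backbone in a compact skeleton graph together with a short suffix of length at most $h$ ending at $v$.

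I would maintain three components: (i) a $(1+\epsilon)$-approximate weighted ES-tree $T_s$ from $s$ of depth $O(h)$, using the directed/weighted extension of Henzinger--King and King, at total cost $\tilde O(mh\,\poly(1/\epsilon))$ over the full deletion sequence; (ii) for each pivot $p\in P$, a local $(1+\epsilon)$-approximate SSSP structure rooted at $p$ that reports $\tilde d(p,v)$ for every $v$ currently within distance $O(h)$ of $p$; and (iii) a dynamic skeleton graph $H$ on $\{s\}\cup P$ whose edges encode the pairwise pivot distances reported by (ii), on which I rerun static Dijkstra whenever its edge set or weights change. For a query vertex $v$, I return $\min\bigl(\tilde d_{T_s}(s,v),\min_{p\in P}\,[d_H(s,p)+\tilde d_p(p,v)]\bigr)$, maintained per vertex so that each query takes $O(1)$. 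For the path-reporting variant, I would concatenate a parent-pointer path in the relevant tree with the pivot backbone extracted from $H$, each piece traced in time proportional to its length.

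Setting $h=\Theta(n^{3/2}/\sqrt m)$ balances the $\tilde O(mh)$ cost of (i) against the aggregate cost $\tilde O(|P|\cdot n^2)=\tilde O((n/h)\,n^2)$ of (ii) and yields $\tilde O(\sqrt m\,n^{3/2}/\epsilon^{3/2})$ in the unweighted case, where the $\epsilon$-dependence absorbs the approximation overhead of the underlying ES-trees plus one extra $\epsilon^{-1/2}$ factor from the geometric rounding needed to stabilize $H$ against spurious edge updates. For weighted graphs, I would run $O(\log W)$ parallel copies of the construction, one per distance scale $[2^i,2^{i+1})$ with weights rounded to resolution $\epsilon 2^i/n$; after reoptimizing $h$ to absorb the additional $\epsilon^{-1}$ factor incurred inside each pivot structure, this yields $\tilde O(m^{2/3}n^{4/3}\log W/\epsilon^{5/3})$.

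The main obstacle is controlling the aggregate cost of the per-pivot structures in (ii). A naive realization that runs one full depth-$h$ ES-tree per pivot on all of $G$ already costs $|P|\cdot mh=\Theta(mn)$, which matches Even--Shiloach and wipes out the improvement. The resolution I envision is to run each pivot structure only on the decremental subgraph induced by the radius-$O(h)$ out-ball of the pivot, and to charge its total work against $\tilde O(n^2/\epsilon)$ rather than $\tilde O(mh/\epsilon)$ via an amortization over the levels that each in-ball vertex can occupy inside the pivot's approximate ES-tree. A secondary subtlety is upgrading the guarantee from Monte Carlo to Las Vegas: I would detect failures of the hitting event via inconsistencies between the skeleton Dijkstra output and the pivot structures, then resample $P$ and rebuild on failure; because the failure probability is inverse polynomial, the expected rebuild overhead is subpolynomial and does not affect the stated expected update time.
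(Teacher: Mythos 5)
There is a genuine gap, and it sits exactly where you placed your one-sentence ``resolution I envision.'' Your budget requires each per-pivot structure to cost $\tilde O(n^2/\epsilon)$ in total rather than $O(m_p h)$, justified by ``an amortization over the levels that each in-ball vertex can occupy inside the pivot's approximate ES-tree.'' In \emph{directed} graphs this amortization is not available: the only way a vertex can occupy few levels is to round/bucket its distance estimate, and rounding loses up to the granularity on \emph{every} relaxed edge, so over a shortest path with up to $h$ hops the additive error accumulates and destroys the $(1+\epsilon)$ guarantee (this is the standard obstruction that makes approximate monotone ES-trees work in undirected graphs with hop sets but not directly in digraphs). In other words, component (ii) of your construction is itself an instance of ``decremental $(1+\epsilon)$-approximate directed SSSP to depth $h$ faster than Even--Shiloach,'' i.e.\ the very problem being solved. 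The paper's proof of this theorem does not use pivots at all: it contracts a hierarchy of low-diameter SCCs (maintained by the oblivious-adversary decomposition of Theorem~\ref{Thm:LowDiamDecompObl}/Corollary~\ref{Cor:LowDiamDecompObl}), works in the resulting almost-DAG multigraph with a topological order, gives each edge an artificial weight proportional to the number of vertices it skips (divided by $\tau$), and runs the flexible-weight ES structure of Section~\ref{sec:FastES}, which scans a weight-$w$ edge only $O(D/w)$ times; the low-diameter/size trade-off of the SCC hierarchy is what bounds the total additive error by $\tilde O(n^2/(d\tau))$ and yields $\tilde O(md+mn/d+Dn\tau)=\tilde O(\sqrt m\,n^{3/2}/\epsilon^{3/2})$. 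Vertex sampling with super-edges appears in the paper only for the sparse-graph result (Theorem~\ref{Thm:SSSPSparse}), and even there it is layered on top of this contraction machinery precisely because the machinery is what makes the per-source structures cheap.

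Two secondary points. First, your Las Vegas upgrade is not substantiated: if the hitting event fails, the query simply returns an overestimate, and there is no obvious efficient certificate (``inconsistencies between the skeleton Dijkstra and the pivot structures'') that witnesses the failure, so as described your scheme is Monte Carlo; the paper gets Las Vegas because the $(1+\epsilon)$ guarantee is deterministic given the maintained decomposition, with randomness affecting only the running time. Second, even granting the per-pivot bound, the bookkeeping for maintaining $\min_{p}[d_H(s,p)+\tilde d_p(p,v)]$ per vertex under repeated skeleton recomputations needs geometric rounding of both terms and per-vertex priority queues to stay within budget; this is fixable but not free, whereas the accounting in the paper's route is already closed. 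The correct high-level takeaway is that the hitting-set decomposition is sound for correctness under an oblivious adversary, but the running-time heart of the theorem is the directed approximate ES-tree speed-up, which your proposal assumes rather than proves.
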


Finally, we present a Monte Carlo structure which is faster than~\cite{HenzingerKN142} for sparse graphs:
\begin{theorem}\label{Thm:SSSPSparse}
Let $G = (V,E)$ be a graph undergoing edge deletions by an oblivious adversary, let $s\in V$, and let $0 < \epsilon \leq 1$ be given. Then there is a data structure with total expected update time $\tilde O((mn)^{7/8}\log W/\epsilon^{3/4})$ where $W$ is the ratio between the largest and smallest edge weight. At any point, when given any query vertex $u\in V$, the data structure outputs in $O(1)$ time a value $\tilde d_G(s,u)$ such that $d_G(s,u)\leq\tilde d_G(s,u)$ and such that w.h.p., $\tilde d_G(s,u)\leq (1+\epsilon)d_G(s,u)$. The data structure can also report a $(1+\epsilon)$-approximate path from $s$ to $u$ in time proportional to its length.
\end{theorem}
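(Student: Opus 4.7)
\emph{Proof proposal.} My plan is to build on the pivot-sampling framework of Henzinger et al.~\cite{HenzingerKN142,HenzingerKN15} but push it further by adding additional sampling levels and compressed ``hop graphs'' at the higher levels, exploiting Monte Carlo correctness against an oblivious adversary. First, I would reduce to unweighted (or polynomially-bounded integer-weight) graphs in the standard way, by partitioning edges into $O(\log W)$ geometric weight classes and running the unweighted algorithm separately on each class with the target distance scaled appropriately; this accounts for the $\log W$ factor, and the constant-factor distortion introduced by rounding is absorbed into $\epsilon$.

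For the unweighted case, fix a constant number of levels $k$ and thresholds $1 = h_0 < h_1 < \cdots < h_k \geq n$. At level $0$ maintain a $(1+\epsilon/(2k))$-approximate ES-tree from $s$ of depth $h_0$ in $G$. For each $i \geq 1$, sample a pivot set $P_i \subseteq V$ of size $\tilde\Theta(n/h_{i-1})$ uniformly at random, and from each $p \in P_i$ maintain a monotone $(1+\epsilon/(2k))$-approximate ES-tree of depth $h_i$. Since the adversary is oblivious, a standard Chernoff argument shows that any fixed shortest path of length $\ell$ in any graph of the deletion sequence contains, with high probability, a pivot from $P_i$ within every window of $h_{i-1}$ consecutive vertices, so the path can be approximated by $O(\ell/h_{i-1})$ consecutive tree-hops between pivots, each of length $\leq h_i$; compounding multiplicative errors gives $(1+\epsilon/(2k))^{k} \leq 1+\epsilon$. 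To keep the higher-level trees cheap, I would maintain, for each $i \geq 2$, an auxiliary hop graph $H_i$ on vertex set $P_i$ whose directed edges encode the current approximate distances $\leq h_i$ between pivots extracted from the level-$(i-1)$ trees, and run the level-$i$ ES-trees on $H_i$ rather than on $G$. A distance query is answered in $O(1)$ time via precomputed closest-pivot pointers at each level, and a $(1+\epsilon)$-approximate path is reported by unrolling the tree paths at each level in time proportional to the reported length.

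The main technical obstacle is parameter balancing together with the amortized bookkeeping for updates across levels. A monotone approximate ES-tree of depth $h$ on a graph with $m'$ edges admits an amortized total update cost of $\tilde O(m' h / \epsilon)$, so the cost of level $i$ is $\tilde O((n/h_{i-1}) \cdot |E(G_i)| \cdot h_i/\epsilon)$ with $G_1 = G$ and $G_i = H_i$ for $i \geq 2$. Choosing $h_1, \ldots, h_k$ in a carefully tuned geometric progression (and $k$ constant) so as to equalize the cost across levels should drive the overall update time to $\tilde O((mn)^{7/8}/\epsilon^{3/4})$ once the $\mathrm{poly}(1/\epsilon)$ arising from splitting the error budget across levels is accounted for. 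The delicate point is arguing that deletions in $G$ do not cause the hop graphs $H_2, \ldots, H_k$ to require more work than the cost of the level that produced them: this follows from a charging argument that exploits the monotonicity of approximate distances (each directed edge of $H_i$ can only have its recorded weight increase before it vanishes) and charges every change in $H_i$ to the level-$(i-1)$ tree update that caused it, so that the work at each level dominates rather than is dominated by the propagation from below.
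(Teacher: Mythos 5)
Your proposal is not the paper's route, and as it stands it has a genuine gap at its core: the claimed bound is never actually derived, and the cost accounting of the multi-level pivot/hop-graph hierarchy does not work as asserted. Concretely: (i) with $h_0=1$ and $|P_1|=\tilde\Theta(n/h_0)$, your own formula for the level-$1$ cost, $\tilde O((n/h_{0})\cdot m\cdot h_1/\epsilon)$, is already $\tilde O(nmh_1/\epsilon)\gg mn$, so the level structure must be redefined before any balancing can begin; (ii) the hop graph $H_i$ on $\tilde\Theta(n/h_{i-1})$ pivots can have $\Theta((n/h_{i-1})^2)$ edges whose weights increase many times, and your charging argument (``charge every change in $H_i$ to the level-$(i-1)$ update that caused it'') ignores that a single weight increase of one hop-edge must then be processed inside up to $|P_i|$ distinct level-$i$ trees, so the propagated work is multiplied by the number of trees rather than dominated by the level below; (iii) most importantly, this is essentially the pivot/hop-set framework of Henzinger et al., and with any constant number of levels and geometric thresholds that framework is known to bottom out around $m^{3/4}n^{5/4+o(1)}$ and $m^{7/6}n^{2/3+o(1)}$ (i.e.\ $n^{1+5/6+o(1)}$ at $m=\Theta(n)$); you give no computation or new mechanism showing the balance lands at $\tilde O((mn)^{7/8}/\epsilon^{3/4})=\tilde O(mn^{3/4})$, and there is good reason to believe it cannot, since that is exactly the barrier the prior work hit.

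The paper's improvement comes from a structurally different source that your plan has no analogue of. It maintains (against deletions) a hierarchy of strongly connected components of $G\setminus E(S)$ whose diameters are proportional to their sizes, contracts them, and works in the resulting multigraph equipped with a topological order; an edge that skips $k$ vertices in this order is given a ``flexible'' weight of roughly $\max\{1,k/\tau\}$ inside a modified Even--Shiloach structure that scans a weight-$w$ edge only $O(D/w)$ times. Sampling is then used only locally: super-edges are created between sampled vertices that are close in the topological order, their multiplicity per vertex is controlled, and a random-priority argument (the paper's Lemma~\ref{Lem:Touch}) bounds in expectation how many local structures ever touch each edge, so that the rescanning blow-up you would face in (ii) never arises. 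The final bound follows from balancing the decomposition cost $\tilde O(md+mn/d)$, the local-structure cost $\tilde O(\delta m/\epsilon)$, and the weighted-ES cost under the constraint $d+n^2\rho D'/(d\delta)=\tilde O(\epsilon D)$, and only then comparing with the $O(mD)$ Even--Shiloach fallback to obtain $\tilde O((mn)^{7/8}/\epsilon^{3/4})$. Without an ingredient playing the role of the topological order and the weight-proportional-to-skip trick (or some other new idea), tuning $h_1,\ldots,h_k$ alone will not reach the stated bound.
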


Together, our results improve on the running time of~\cite{HenzingerKN142} for all graph densities. For instance, when $m = \Theta(n^2)$, we improve their bound from $\tilde O(n^{2+3/4+o(1)}\log W)$ to $\tilde O(n^{2+1/2})$ for unweighted graphs and to $\tilde O(n^{2+2/3}\log W)$ for weighted graphs; our bound for unweighted graphs in fact matches (up to logarithmic factors) the bound for decremental reachability in~\cite{ChechikHILP16}. When $m = \Theta(n)$, we get an improvement from the $\tilde O(n^{1+5/6+o(1)}\log W)$ bound of~\cite{HenzingerKN142} to $\tilde O(n^{1+3/4}\log W)$.

\section{Overview of Techniques}\label{sec:HighLevel}
All three of our data structures fit within the same overall framework. In this section, we give a high-level overview of this framework and explain how to obtain our results within this framework without going into details. In order to avoid too many technical details in this section, some of the calculations below are not quite accurate.

\paragraph{Maintaining low-diameter SCCs:}
Let $G = (V,E)$ be the decremental graph and let $n$ denote the number of vertices of $G$. The main goal is to maintain the SSSP tree in a different graph with some properties that make it easier to maintain the tree, more specifically in a graph with properties similar to those in a directed acyclic graph (DAG). An obvious first attempt might be to simply maintain an SSSP tree in the DAG obtained from $G$ by contracting its strongly-connected components (SCCs) and removing self-loops. This obviously fails since the information about distances between vertices of the same SCC is lost, meaning that distances could be significantly underestimated by the data structure.

This leads to a more refined attempt: split the SCCs into smaller strongly-connected subgraphs each of which has small diameter and then contract these. Using separators obtained from sparse BFS layers, it is easy to show the existence of a set $S$ such that $|S| = \tilde O(n/d)$ and such that all SCCs of $G\setminus E(S)$ have diameter at most $d$, for any chosen parameter $d > 0$; here $E(S)$ denotes the set of edges of $G$ incident to $S$. Chechik et al.~\cite{ChechikHILP16} describe a data structure that maintains such a decomposition efficiently under edge deletions to $G$.

Unfortunately, this does not resolve our issue above of significantly underestimating distances. Suppose for instance that $G$ has $\Theta(n)$ SCCs each of $\Theta(1)$ vertex size and diameter $1$ and $S = \emptyset$. Contracting these SCCs might reduce the length of a shortest path by a factor of $2$ if every second edge is internal to an SCC of $G\setminus E(S) = G$. We aim for an approximation factor of only $1+\epsilon$.

This naturally leads to the third attempt: let the diameter threshold $d$ for an SCC be proportional to its vertex size. Indeed, our data structure maintains a set $S$ such that each SCC of $G\setminus E(S)$ of vertex size roughly $n/2^i$ has diameter at most $d/2^i$ for some suitable value $d$. It it fairly easy to see that with this property, any shortest path in $G$ visits no more than order $d$ edges of the SCCs of $G\setminus E(S)$. Since we only need to focus on long shortest paths, i.e., sufficiently longer than $d$ (as shorter paths can be efficiently maintained by the Even-Shiloach data structure), our data structure can thus safely work on the graph obtained from $G\setminus E(S)$ by contracting its SCCs.

\paragraph{A hierarchy of SCCs:}
To maintain this decomposition efficiently, our data structure maintains the SCCs in a hierarchy of $\lg n$ levels where the $i$th level is responsible for splitting SCCs of size roughly $n/2^i$ when their diameter threshold $d/2^i$ is exceeded.

An important property of this hierarchical structure is that on level $i$, the number of vertices added to $S$ is $O(n/(d/2^i)) = O(2^in/d)$ in total; this follows from our observations above and from the fact that the diameter threshold is $d/2^i$. Hence, the lower the level, the smaller the number of vertices added to $S$. We will explain later in this overview why this property is useful.

\paragraph{Topological ordering and artificial edge weights:}
For now, let us focus on the simpler problem of maintaining an approximate SSSP tree in $G\setminus E(S)$. Consider a topological ordering of the SCCs of this graph. Contracting these SCCs (and removing self-loops), all edges of the resulting multigraph $M$ are forward edges. Again, since we are only interested in long shortest paths, we can afford a worse approximation of the unit weight of edges that go forward by a lot in the topological ordering since the number of these edges must be small. More precisely, the number of forward edges skipping $k$ vertices of $V$ (i.e., skipping SCCs of total vertex size $k$) is at most $n/k$ so if we are approximating shortest paths of length roughly $D$, we can give such edges a weight up to $\max\{1,\Theta(\epsilon Dk/n)\}$.

\paragraph{A faster Even-Shiloach-type structure with weighted edges:}
We present an extension of the data structure of Even and Shiloach which works for multigraphs and, more importantly, only scans an edge of weight up to $w$ a total of $O(D/w)$ times; this is a factor of $w$ better than their structure which may scan an edge $D$ times. The way we ensure this is to allow the weight of such an edge $e = (u,v)$ to be ``flexible'': when the distance estimate associated with $u$ increases, the weight of $e$ shrinks accordingly so that the distance estimate associated with $v$ remains the same. Once the weight of $e$ has shrunk from $w$ down to $0$ due to increases in the estimate for $u$, $e$ is scanned and the estimate for $v$ then increases by $w$, thus resetting the weight of $e$ back to $w$. It follows that $e$ is scanned only $O(D/w)$ times.

Returning to our topological ordering and multigraph $M$, consider any vertex $v$ of this graph. For $k = 1,\ldots,n$, we have at most one forward edge starting in $v$ and skipping $k$ vertices of $V$ and we can assign a weight of $\Theta(Dk/n)$ to this edge. Similarly, we have at most one forward edge ending in $v$ and skipping $k$ vertices of $V$ and as argued above, we can assign a weight of $\Theta(\epsilon Dk/n)$ to this edge. Using our improved Even-Shiloach-type data-structure, the cost of scanning the edges incident to $v$ over all updates is only $O(D\sum_{k = 1}^n1/\max\{1,\epsilon Dk/n\}) = O(n/\epsilon + n\log n/\epsilon) = O(n\log n/\epsilon)$. In comparison, Even-Shiloach would use up to $\Theta(nD)$ time if the vertex degree is $\Theta(n)$. Hence, we get considerable speed-up for denser graphs.

\paragraph{Dealing with back edges:}
So far, we have simplified our problem by disregarding $E(S)$, allowing us to assume that all edges of $M$ are forward edges. Removing this restriction, that is, including the edges of $E(S)$ in $M$, we may introduce back edges. This seemingly makes all our arguments above break down. However, this is where we use the property stated earlier, that the number of vertices added to $S$ on level $i$ is $O(2^in/d)$ in total. Each back edge $e\in E(S)$ incident to such a vertex can only skip $O(n/2^i)$ vertices of $V$; this follows by observing that $e$ must be fully contained in an SCC of $G\setminus E(S)$ (since $e$ is a back edge) and this SCC has vertex size $O(n/2^i)$. Hence, for any simple path $P$ in $M$, the total number of vertices of $V$ skipped by back edges of $P$ is $O(\sum_{i=0}^{\lg n} (2^in/d)(n/2^i)) = O(n^2\log n/d)$ and so the total number of vertices skipped by forward edges is $O(n + n^2\log n/d) = O(n^2\log n/d)$ as well. Above when we ignored back edges, we could set the weight of an edge skipping $k$ vertices equal to $\max\{1,\Theta(Dk/n)\}$. Now, we instead use a weight of $\lceil k/\tau\rceil$ for a suitable $\tau > 0$; hence, edges skipping at most $\tau$ vertices are given (their correct) weight $1$, edges skipping more than $\tau$ and at most $2\tau$ vertices are given weight $2$, and so on. This way, we get an additive error in the approximation to any shortest path in $M$ of $O(n^2\log n/(d\tau))$; to get an approximation factor of $(1+\epsilon)$ for shortest paths of length at least $D$, we pick $\tau$ such that this additive error is no more than $\epsilon D$.

Answering a query for the approximate distance from $s$ to a vertex $v$ in $V$ is now done by reporting the weight of a shortest path in $M$ from the vertex containing $s$ to the vertex containing $v$ and adding $2d$ to this result; as argued earlier, adding $2d$ ensures that the output value does not underestimate the shortest path distance $d_G(s,v)$.

The above describes the overall framework and Theorem~\ref{Thm:SSSPDense} follows easily from it. Note that this theorem is a sparsification result in that it only gives an improvement over Even-Shiloach for denser graphs. This makes sense given our description above where only high-degree vertices are guaranteed to have a large number of incident edges of high weight.

\paragraph{Dealing with an adaptive adversary:}
Theorem~\ref{Thm:SSSPAdaptive} is likewise a sparsification result. The added challenge here is that it needs to handle an adaptive adversary. In the description above, we referred to the approach of Chechik et al.~\cite{ChechikHILP16} to maintain low-diameter SCCs which can be used more or less directly to get Theorem~\ref{Thm:SSSPDense}. This approach maintains, for each SCC $C$, a low-depth in-out-tree from a randomly chosen root vertex $r\in C$ in order to keep track of the diameter of $C$; more specifically, to keep the diameter bounded by $d$, the depth of both the in-tree and the out-tree are kept to be at most $d/2$. The approach is only efficient if each such root $r$ can be kept hidden from an adversary since otherwise, if the adversary keeps deleting the edges incident to the current root $r$, this forces the data structure to make several expensive rebuilds of the in-out-tree of $C$.

In~\cite{ChechikHILP16}, this is not an issue since only the SCCs of $G$ are revealed to the adversary and these reveal no information about the locations of the roots of in-out-trees. In the setting of Theorem~\ref{Thm:SSSPAdaptive}, the approximate distances reported may potentially reveal this information. We therefore need to modify the approach of~\cite{ChechikHILP16}.

The rough idea is the following. Suppose that whenever an SCC $C$ is split by separator vertices being added to $S$, new roots with corresponding in-out trees are computed for all SCCs that $C$ is partitioned into. This will ensure that our data structure works against an adaptive adversary since as soon as there is a risk of information about the root of $C$ being revealed, new roots are chosen. However, this approach is too slow since the partition of $C$ might be very unbalanced, requiring several rebuilds of in-out-trees. What we do instead is to maintain the in-out-tree of $C$ up to a distance threshold $d_1/2$ (more generally, distance threshold $d_1/2^{i+1}$ for SCCs at level $i$ of our hierarchical structure) but we delay the partition of $C$ until $d_2-d_1$ vertices have become unreachable in the in-out-tree, for some other threshold value $d_2 > d_1$. Up to this point, the diameter of $C$ is at most $d_2$; this follows since $C$ is strongly-connected so any shortest path in $C$ uses at most $2d_1/2 = d_1$ edges of the in-out-tree and at most $d_2 - d_1$ additional edges through the vertices not in the in-out-tree.

The advantage of this is that new roots and ES-trees are only computed once a chunk of $d_2 - d_1$ vertices are far from $r$ in $C$ in which case $C$ is partitioned into SCCs of more balanced size. This leads to faster update time and we show that this delayed partitioning ensures that an adaptive adversary cannot gain information about the roots of the in-out-trees. The disadvantage is that we get a worse trade-off between SCC diameters and the size of $S$, namely the SCC diameter can be up to $d_2$ while $|S| = \tilde O(n/d_1)$. This is why we get a worse time bound in Theorem~\ref{Thm:SSSPAdaptive} compared to Theorem~\ref{Thm:SSSPDense}.

In our description above, we assumed that $C$ can only break apart into smaller SCCs when vertices are added to $S$. However, $C$ may also break apart simply because edges are deleted from $G$ and hence from $G\setminus E(S)$. Fortunately, this case is easy to deal with since the partition of $C$ does not depend on the choice of random root, implying that no random bits are revealed to the adversary.

\paragraph{A speed-up for sparse graphs:}
The data structure of Theorem~\ref{Thm:SSSPSparse} gives a speed-up over Even-Shiloach also for sparse graphs. This does not fit directly into the sparsification framework sketched above so we need to modify it. Consider again the multigraph $M$. The idea is to randomly sample a subset of vertices of $M$ and maintain locally shortest paths between these sampled vertices. More specifically, for each sampled vertex $s$, we essentially keep an Even-Shiloach data structure with source $s$ for the subgraph $M(s)$ of $M$ induced by the vertices that are ``close'' to $s$ in the topological order, i.e., for every $v\in M(s)$, the number of vertices of $V$ between $s$ and $v$ in the topological ordering is at most some suitable value $\delta$ . For each sampled vertex $s'\neq s$ in $M(s)$, we add a super edge $(s,s')$ to our fast weighted version of Even-Shiloach described earlier and the weight of $(s,s')$ is the length of the shortest $s$-to-$s'$ path in $M(s)$.

In expectation, the total number of super-edges is small and their weights are high. From this, it follows that the total cost of scanning these over all updates is asymptotically less than $n^2$, improving on the bound of Even and Shiloach. However, not all shortest paths in $M$ consist of super-edges exclusively. Consider a shortest path $P$ and decompose it into maximal subpaths with no interior sampled vertices. For one such subpath $Q$, consider the interesting case where both its start point $s_1$ and endpoint $s_2$ are sampled (only the first and last subpath of $P$ does not have this property and their contribution to the distance approximation is negligible). If $s_1$ and $s_2$ are close together in the topological ordering of $M$, i.e., if $s_1\in M(s_2)$ or $s_2\in M(s_1)$ then there is a super-edge $(s_1,s_2)$ of weight roughly equal to $Q$. Otherwise, the average number of vertices of $V$ skipped by an edge of $Q$ in the topological ordering must be large; using similar arguments as before, we may assign a high weight to these edges of $Q$ which gives a speed-up using our weighted version of Even-Shiloach. Theorem~\ref{Thm:SSSPSparse} follows from these observations.

\paragraph{Decremental SSSP in weighted graphs:}
Above we considered only the unweighted setting. Extending to the case where $G$ is weighted is fairly straightforward since we have already introduced artificial edge weights above. The main difficulty is to generalize the data structure for maintaining low-diameter SCCs to the weighted setting. Recall that SCCs were split using sparse BFS layers that were added to $S$. For weighted graphs, this trick no longer works since BFS layers need not be separators.

To deal with this, we first give a fairly standard reduction, at the cost of a factor of $\log W$ in the update time, to the case where edge weights of $G$ are between $1$ and $n$ and where the largest distance allowed is order $n$. Edges of weight larger than some suitable threshold value $\rho\in (1,n)$ are easy to deal with in our weighted Even-Shiloach structure so we simply omit them in the contracted SCCs of $G\setminus E(S)$ and include them all in our multigraph $M$. The total cost of scanning these high-weight edges is only $O(mD/\rho)$.

Now, the data structure that maintains SCCs only needs to deal with edges of weight between $1$ and $\rho$. To separate an out-tree (equivalently, in-tree) with edge weights bounded by $\rho$, consider picking all vertices whose distances from the root in the out-tree belongs to some range of the form $[k\rho,(k+1)\rho]$. This set is a separator similar to a BFS layer in the unweighted setting: the reason is that there can be no edge $(u,v)$ where $u$ has distance less than $k\rho$ and $v$ has distance more than $(k+1)\rho$ from the root of the in-out-tree. In a sense, we consider layers of ``thickness'' $\rho$ rather than $1$. The downside of this is that we get a worse bound on $|S|$. However, we still get a speed-up for weighted graphs, as can be seen in our theorems.

This completes our high-level description. The rest of the paper is organized as follows. Section~\ref{sec:Prelim} gives some basic definitions and introduces notation that will be used throughout the paper. Section~\ref{sec:LowDiamDecomp} describes the algorithm that maintains a low-diameter decomposition which works against an adaptive adversary. Section~\ref{sec:Multigraph} presents a data structure for maintaining a multigraph under various updates. Then an Even-Shiloach-type structure is given in Section~\ref{sec:FastES} which maintains a shortest path tree of such a multigraph for a special edge weight function. These sections together describe the general framework for our data structures which are presented in Sections~\ref{sec:SSSPAdaptive},~\ref{sec:SSSPDenseObl} and~\ref{sec:SSSPSparse}, respectively. We only consider unweighted graphs in the main part of the paper; the extension to weighted graphs is described in the appendix. Finally, we make some concluding remarks in Section~\ref{sec:ConclRem}.

\section{Preliminaries}\label{sec:Prelim}
All graphs that we consider are directed. For a graph $G$ with edge weight function $w$, we denote its vertex set by $V(G)$ and edge set by $E(G)$. Let $V = V(G)$ and $E = E(G)$ in the following. An edge $e\in E$ is \emph{incident} to a vertex $v\in V$ if $v$ is one of the endpoints of $e$. The set of edges of $E$ incident to a vertex $v$ is denoted $E(v)$ and $|E(v)|$ is the \emph{degree} of $v$. The \emph{weight} $w(P)$ of a path $P$ in $G$ is the sum of weights of its edges and the \emph{length} $|P|$ of $P$ is its number of edges. For two vertices $u$ and $v$ in $V$, $d_G(u,v)$ denotes the shortest path distance in $G$ from $u$ to $v$ w.r.t.~edge-weight function $w$. Given an $r\in V$ and a $d\in\mathbb R_+$, we define $B_{\mathit{in}}(r,G,d) = \{v\in V\vert d_G(v,r)\leq d\}$ and $B_{\mathit{out}}(r,G,d) = \{v\in V\vert d_G(r,v)\leq d\}$.

A \emph{$q$-quality separator} of a graph $G = (V,E)$ is a set $S\subseteq V$ such that every SCC of $H\setminus E(S)$ contains at most $|V| - q|S|$ vertices.

We let $\lg n$ denote the base $2$-logarithm of $n$.

For a graph $G$, a $d\in\mathbb N$, and an $r\in V(G)$, an \emph{ES-structure} of $G$ for distance $d$ and root $r$ is a data structure $\mathcal E_r$ consisting of two instances of the data structure of Even and Shiloach (with the generalization to weighted directed graphs by~\cite{HenzingerK95,King99} when needed). Both structures are initialized for root $r$ and distance threshold $d$; one structure is initialized for $G$, the other for the graph $G^-$ obtained from $G$ by reversing the orientations of all edges. The former resp.~latter is referred to as the \emph{out-tree} resp.~\emph{in-tree} of $\mathcal E_r$. We require $\mathcal E_r$ to keep track of the number of vertices of $G$ unreachable from $r$ in the in-tree and the number of vertices unreachable from $r$ in the out-tree. It follows easily from the analysis in~\cite{HenzingerK95,King99} (see also~\cite{ChechikHILP16}) that $\mathcal E_r$ can be initialized and maintained over edge deletions using a total of $O(|E(G)| d)$ time such that a query for the number of vertices unreachable from the in-tree or the out-tree can be answered in $O(1)$ time.

\section{Maintaining a Low-Diameter Decomposition}\label{sec:LowDiamDecomp}
In this section, we consider a graph $G = (V,E)$ undergoing edge deletions and give a data structure that maintains a growing vertex set $S\subseteq V$ such that $G\setminus E(S)$ contains only SCCs of small diameter and such that $S$ is small. A similar result was shown by Chechik et al.~\cite{ChechikHILP16}. However, in their application, the sequence of edge deletions is independent of the random bits used since only the SCCs of $G$ are revealed to the adversary. Since our decremental SSSP structure will output approximate distances that may reveal the structure of SCCs of $G\setminus E(S)$, we have the added challenge of having to deal with an adaptive adversary. We show the following result.
\begin{theorem}\label{Thm:LowDiamDecomp}
Let $G = (V,E)$ be an unweighted graph undergoing edge deletions, let $m = |E|$ and $n = |V|$, and let integers $0 < d_1 < d_2\leq n$ be given with $d_2 - d_1\geq 2\lg n$. Then there is a Las Vegas data structure which maintains a pair $(S,\mathcal V)$ where $S\subseteq V$ is a growing set and where $\mathcal V$ is the family of vertex sets of the SCCs of $G\setminus E(S)$ such that at any point, all these SCCs have diameter at most $d_2$ and $|S| = \tilde{O}(n/d_1)$.

After the initialization step, the data structure outputs the initial pair $(S,\mathcal V)$. After each update, the data structure outputs the set $S'$ of new vertices of $S$ where $S'\subseteq V'$ for some $V'\in\mathcal V$. Additionally, it updates $\mathcal V$ by replacing at most one $V'\in\mathcal V$ by the vertex sets $W_1,\ldots,W_p$ of the new SCCs of $G\setminus E(S)$ where $|W_i|\leq\frac 1 2 |V'|$ for $i = 1,\ldots,p-1$. Pointers to $W_1,\ldots,W_p$ are returned.


The total expected time is $\tilde O(m\sqrt n + mn/d_1 + mnd_1/(d_2 - d_1))$ and the data structure works against an adaptive adversary.
\end{theorem}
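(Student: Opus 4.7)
The approach is to adapt the low-diameter SCC decomposition of Chechik et al.~\cite{ChechikHILP16} by decoupling the ``diameter monitor'' radius $d_1/2$ from the ``refresh'' trigger threshold $d_2-d_1$, which is exactly the maneuver that buys robustness against an adaptive adversary. For each $C \in \mathcal V$ I maintain in parallel: (i) a standard decremental-SCC structure on $G \setminus E(S)$, accounting for the $\tilde O(m\sqrt n)$ term and detecting any splits of $C$ caused purely by edge deletions, which reveal no random bits; and (ii) an ES-structure $\mathcal E_{r_C}$ of $C$ rooted at a uniformly random $r_C \in C$ with in-/out-tree depth threshold $d_1/2$, which exposes the number of vertices of $C$ currently unreachable from or to $r_C$ within $d_1/2$ steps. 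The key invariant is that this ``far count'' never exceeds $d_2-d_1$; under it I will argue $\operatorname{diam}(C) \leq d_2$, since any $u,v \in C$ can be joined by routing close portions through $r_C$ on in-tree/out-tree edges at cost $\leq d_1$ together with at most $d_2-d_1$ extra edges that detour via far vertices (each appearing at most once on a simple path).

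Once the far count reaches $d_2-d_1$, I trigger a \emph{refresh} of $C$. The in-tree or out-tree has depth $\leq d_1/2$, so by pigeonhole some layer has size $\leq 2|C|/d_1$; iterating this $O(\log n)$ times on appropriately peeled sub-levels produces a separator $S_C \subseteq C$ of size $\tilde O(|C|/d_1)$ whose removal splits $C$ into sub-SCCs $W_1,\ldots,W_p$ with $|W_i| \leq |C|/2$ for $i<p$. I add $S_C$ to $S$, recompute SCCs inside $C\setminus S_C$ via the decremental-SCC structure, draw fresh random roots for each $W_i$, and build new ES-structures with threshold $d_1/2$. Pure edge-deletion-induced splits flagged by the decremental-SCC structure are handled analogously but add nothing to $S$. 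Because each separator contributes $\tilde O(|C|/d_1)$ vertices and each vertex lies in at most $O(\log n)$ strictly size-halving ancestors along the split tree, the total satisfies $|S| = \tilde O(n/d_1)$.

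For the running time, the $\tilde O(m\sqrt n)$ term is the decremental-SCC cost, the $\tilde O(mn/d_1)$ term absorbs the $O(m)$ bookkeeping cost per vertex newly added to $S$ (of which there are $|S| = \tilde O(n/d_1)$), and the ES-rebuild cost is $\tilde O(|E(C)| \cdot d_1)$ per refresh. Along the split tree an SCC can undergo at most $\tilde O(|C|/(d_2-d_1))$ refreshes, because each consumes $d_2-d_1$ fresh far vertices that are subsequently separated off; summing $\tilde O(|E(C)| \cdot d_1 \cdot |C|/(d_2-d_1))$ over all SCCs yields the $\tilde O(mnd_1/(d_2-d_1))$ contribution.

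The main obstacle I foresee is the adaptive-adversary analysis. The worry is that the adversary could use outputs generated downstream of this structure to deduce information about $r_C$ and then concentrate deletions so as to inflate the refresh count. My plan is to show that between two consecutive refreshes of $C$, the adversary's entire view (the pair $(S,\mathcal V)$ and query outputs derived from it) is independent of $r_C$: $S$ does not change between refreshes, and edge-deletion-induced updates to $\mathcal V$ depend only on the deleted edges, not on which vertex of $C$ happens to be the root. Consequently, whatever adversarial deletion strategy is applied produces ``new far vertices'' at the same rate as it would against an oblivious adversary seeing the same history, so the expected waiting time between refreshes is preserved. A fresh uniform root is drawn at every refresh, re-randomizing the system, and the argument iterates. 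The data structure is Las Vegas since randomness affects only the running time and never the correctness of the output pair $(S,\mathcal V)$.
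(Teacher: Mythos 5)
Your overall architecture matches the paper's (Chechik et al.'s decremental SCC structure plus randomly rooted ES in/out-trees of depth $d_1/2$, with a refresh delayed until $d_2-d_1$ vertices are far, and thin BFS layers as separators), and your observation that the output $(S,\mathcal V)$ between refreshes is independent of the root is indeed the right starting point for the adaptive-adversary argument. However, there is a genuine gap at the heart of the running-time (and, as a side effect, the $|S|$) analysis: the claim that each refresh ``consumes $d_2-d_1$ fresh far vertices that are subsequently separated off'' is not justified and is false as stated. A vertex that is unreachable within distance $d_1/2$ from the \emph{current} root need not leave the SCC when you refresh: the SCC may still be strongly connected (even with small diameter from a different root), and after you insert a thin layer the piece $W_p$ containing the new root may retain essentially all of $C$, including all the previously far vertices. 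For example, deletions concentrated near the (unknown) root can drive the far count past $d_2-d_1$ while the SCC structure barely changes; then your refresh pays $\tilde O(|E(C)|d_1)$, the SCC does not shrink, and nothing in your argument prevents this from recurring, so the bound of $\tilde O(|C|/(d_2-d_1))$ refreshes per SCC does not follow. The same issue undermines the $|S|$ bound: your pigeonhole layer of size $O(|C|/d_1)$ is not guaranteed to be sparse relative to \emph{both} sides of the cut, so you cannot charge it to vertices that land in halved pieces, and with unboundedly many non-shrinking refreshes the charges do not telescope.

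This is exactly the difficulty the paper's proof is organized around. At refresh time it samples a \emph{second} uniformly random test vertex $r'$, computes BFS in/out-trees from $r'$ to depth $d_1/2$, and branches: if at least $(d_2-d_1)/2$ vertices are missing and the radius-$d_1/4$ ball around $r'$ is large, it calls \texttt{ThinLayer} on the range $[d_1/4+1,\,d_1/2]$ with the guarantee (the parameter $k=\min\{n_r/2,(d_2-d_1)/2\}$ in Lemma~\ref{Lem:ThinLayer}) that \emph{every} resulting SCC loses at least $k$ vertices; otherwise it falls back to \texttt{Partition} with diameter parameter $d_1/8$, which forces all pieces below $\tfrac34 n_r$. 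The probabilistic Lemma~\ref{Lem:SCCSizeReduction} — a counterfactual-root argument exploiting precisely the independence you identified, via $\Pr(t_r\geq i)=1-u_{i-1}/n_r$ — then shows that with probability $>1/8$ each new ES-structure lives on an SCC of size at most $\max\{\tfrac34 n_r,\,n_r-(d_2-d_1)/2\}$, which is what yields the $\tilde O(mnd_1/(d_2-d_1))$ term and, together with the two-sided ($q$-quality) separator property, the $\tilde O(n/d_1)$ bound on $|S|$. Your proposal needs such a progress guarantee (or a substitute for it); the independence-between-refreshes observation alone does not provide it.
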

The proof is in many ways similar to that in~\cite{ChechikHILP16} but with some important differences that enable our structure to deal with an adaptive adversary that at any point knows the SCCs of $G\setminus E(S)$. We have highlighted these differences in the overview section above. A detailed proof of the theorem can be found in Appendix~\ref{sec:LowDiamDecompProof}.

\subsection{A hierarchical decomposition}
We will not use Theorem~\ref{Thm:LowDiamDecomp} directly in our SSSP data structure but rather the following corollary which gives the hierarchical decomposition of SCCs that we referred to in our overview.
\begin{Cor}\label{Cor:LowDiamDecomp}
Let $G = (V,E)$ be an unweighted graph undergoing edge deletions, let $m = |E|$ and $n = |V|$, and let integers $0 < d_1 < d_2\leq n$ be given with $d_2 - d_1\geq 2\lg n$. Then there is a Las Vegas data structure which maintains pairwise disjoint growing subsets $S_0, S_1,\ldots, S_{\lceil\lg d_1\rceil}$ of $V$ and a family $\mathcal V$ of subsets of $V$ with the following properties. For $i = 0,\ldots,\lceil\lg d_1\rceil$, let $G_i = G\setminus(\cup_{j = 0}^i E(S_j))$. Then over all updates, $\mathcal V$ is the family of vertex sets of the SCCs of $G_{\lceil\lg d_1\rceil}$ and for $i = 0,\ldots,\lceil\lg d_1\rceil$,
\begin{enumerate}
\item each SCC of $G_i$ of vertex size at most $n/2^i$ has diameter at most $d_2/2^i$,
\item if $i > 0$, every vertex of $S_i$ belongs to an SCC of $G_{i-1}$ of vertex size at most $n/2^i$, and
\item $|S_i| = \tilde O(n2^i/d_1)$.
\end{enumerate}
Furthermore, the sum of diameters of all SCCs of $G_{\lceil\lg d_1\rceil}$ is at most $2d_2$.

After the initialization step, the data structure outputs the initial sets $S_0, S_1,\ldots, S_{\lceil\lg d_1\rceil}$ and pointers to the sets of $\mathcal V$. After each update, the data structure outputs the new vertices of $S_0,\ldots,S_{\lceil\lg d_1\rceil}$. Additionally, it updates $\mathcal V$ by replacing at most one $V'\in\mathcal V$ by the vertex sets $W_1,\ldots,W_p$ of the new SCCs of $G_{\lceil\lg d_1\rceil}$ where $|W_i|\leq\frac 1 2 |V'|$ for $i = 1,\ldots,p-1$. Pointers to both the old set $V'$ and to the new sets $W_1,\ldots,W_p$ are returned. 

The total expected time is $\tilde O(m\sqrt n + mn/d_1 + mnd_1/(d_2 - d_1))$ and the data structure works against an adaptive adversary.
\end{Cor}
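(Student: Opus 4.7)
The plan is to apply Theorem~\ref{Thm:LowDiamDecomp} recursively across $\lceil\lg d_1\rceil+1$ levels. At level $0$ I would run a single instance of the theorem on all of $G$ with parameters $(d_1,d_2)$; this yields $S_0$ and the initial SCCs of $G_0 = G\setminus E(S_0)$. Whenever an SCC $C$ of $G_{i-1}$ drops (either at initialization or after a subsequent split) to size at most $n/2^i$ for the first time, I spawn a fresh level-$i$ instance of Theorem~\ref{Thm:LowDiamDecomp} on the induced subgraph of $C$ with scaled parameters $(d_1/2^i,\, d_2/2^i)$. Edge deletions in $G$ and edges incident to newly added separator vertices from the upper levels $0,\ldots,i-1$ are forwarded as deletions into each live level-$i$ instance whose underlying vertex set contains them. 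I take $S_i$ to be the union of the separators produced by all level-$i$ instances, and let $\mathcal V$ be the family of SCC vertex sets tracked at the bottom level $\lceil\lg d_1\rceil$. The output protocol for each update consists of emitting the cascade of new separator vertices per level followed by the final split of a single $V'\in\mathcal V$.

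Properties 1 and 2 follow directly from the construction: an SCC of $G_i$ of size at most $n/2^i$ is being tracked by some level-$j$ instance with $j\leq i$ that has already been spawned, and Theorem~\ref{Thm:LowDiamDecomp} guarantees its diameter is at most $d_2/2^j\leq d_2/2^i$; and level-$i$ separators are produced only by instances spawned on SCCs of size $\leq n/2^i$. For Property 3, at any moment at most $2^i$ disjoint level-$i$ instances can coexist, each contributing $\tilde O((n/2^i)/(d_1/2^i))=\tilde O(n/d_1)$ separator vertices by Theorem~\ref{Thm:LowDiamDecomp}, giving $|S_i|=\tilde O(n2^i/d_1)$. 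For the sum-of-diameters bound, each final-level SCC $C$ of size $n_C\in(n/2^{i+1},n/2^i]$ has diameter at most $d_2/2^i\leq 2d_2 n_C/n$ by Property 1, so $\sum_C\operatorname{diam}(C)\leq (2d_2/n)\sum_C n_C\leq 2d_2$.

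For the total running time, the level-$i$ instances operate on vertex-disjoint subgraphs with total edge count at most $m$ and individual vertex sizes at most $n/2^i$; substituting into the time bound of Theorem~\ref{Thm:LowDiamDecomp} and summing over the instances at level $i$ yields an aggregate level-$i$ cost of
\[
\tilde O\!\left(m\sqrt{n/2^i} \;+\; \frac{m(n/2^i)}{d_1/2^i} \;+\; \frac{m(n/2^i)(d_1/2^i)}{(d_2-d_1)/2^i}\right) \;=\; \tilde O\!\left(\frac{m\sqrt n}{2^{i/2}} \;+\; \frac{mn}{d_1} \;+\; \frac{mnd_1}{2^i(d_2-d_1)}\right).
\]
The first and third terms form convergent geometric sums in $i$, and the second contributes the extra $\log$ factor absorbed by $\tilde O$, so summing over the $O(\log n)$ levels gives the claimed bound $\tilde O(m\sqrt n + mn/d_1 + mnd_1/(d_2-d_1))$. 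Since each level's instance is itself Las Vegas and adaptive-adversary-proof, and the only information revealed about any level-$i$ instance is the split structure it announces (the same information already revealed by the corresponding $\mathcal V$-updates), the composed structure inherits these guarantees.

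The main obstacle I foresee is that the precondition $d_2-d_1\geq 2\lg n$ of Theorem~\ref{Thm:LowDiamDecomp} does not survive the rescaling: at level $i$ we require $(d_2-d_1)/2^i\geq 2\lg(n/2^i)$, which fails for deep levels. I would handle this by terminating the recursion at the deepest level where the precondition still holds and managing the residual SCCs (which have only $\operatorname{polylog}(n)$ vertices) with a trivial per-deletion recomputation, whose total cost is absorbed by $\tilde O$. A secondary bookkeeping burden is correctly plumbing deletions between levels and ensuring that when a level-$i$ split cascades to spawn level-$(i+1),\ldots,\lceil\lg d_1\rceil$ instances the net effect on $\mathcal V$ is a single replacement of some $V'$ by $W_1,\ldots,W_p$ satisfying the size halving constraint; this is handled by parent pointers from each live instance to the SCC in the previous level and by relying on the size bound $|W_i|\leq\tfrac12|V'|$ already guaranteed at every level by Theorem~\ref{Thm:LowDiamDecomp}.
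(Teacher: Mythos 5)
Your construction is essentially the paper's: spawn an instance of Theorem~\ref{Thm:LowDiamDecomp} with parameters $(d_1/2^i,d_2/2^i)$ on an SCC when it first enters level $i$, let $S_i$ collect the separators produced at level $i$, derive properties 1--3 from the per-instance guarantees plus vertex-disjointness of the level-$i$ instances, prove the sum-of-diameters bound by charging each vertex less than $2d_2/n$, and sum the rescaled time bounds over the $O(\log n)$ levels. The small deviations (you spawn an instance at every level an SCC passes through rather than only at its own level, and your ``at most $2^i$ coexisting instances'' count is not quite right) are immaterial, since disjointness within a level is all that the size and time bounds need.

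The genuine gap is how you bottom out the recursion. The paper never runs Theorem~\ref{Thm:LowDiamDecomp} at level $\lceil\lg d_1\rceil$: whenever an SCC of that level appears, it simply adds \emph{all} of its vertices to $S_{\lceil\lg d_1\rceil}$, which makes the bottom-level SCCs singletons (diameter $0$) and satisfies $|S_{\lceil\lg d_1\rceil}|\leq n=\tilde O(n2^{\lceil\lg d_1\rceil}/d_1)$ trivially. Your replacement --- truncate at the deepest level where the rescaled hypothesis $d_2'-d_1'\geq 2\lg n'$ holds and handle the residual SCCs by per-deletion recomputation, on the grounds that they have only polylogarithmic size --- does not work. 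The first level $i$ at which $(d_2-d_1)/2^i\geq 2\lg(n/2^i)$ fails can already be $i=1$: the hypothesis of the corollary allows, say, $d_1=4\lg n$ and $d_2=6\lg n$, in which case the residual SCCs can have up to $n/2$ vertices. Then per-deletion recomputation costs $\Theta(m)$ per update, $\Theta(m^2)$ in total, and, more importantly, recomputing SCCs produces no separator sets $S_i$ at the truncated levels, so property 1 for those $i$ (diameter at most $d_2/2^i$ for SCCs of size at most $n/2^i$) and the $2d_2$ sum-of-diameters bound are simply not maintained. Your instinct that the rescaled precondition is delicate is fair --- the paper itself invokes Theorem~\ref{Thm:LowDiamDecomp} at intermediate levels with the scaled parameters without revisiting that hypothesis --- but the concrete repair you propose is quantitatively wrong, and the paper's device at the bottom level (dump every vertex of a level-$\lceil\lg d_1\rceil$ SCC into $S_{\lceil\lg d_1\rceil}$, exploiting that the size bound is vacuous there) is the step your write-up is missing.
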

\begin{proof}
We say that a subgraph of $G$ has level $i\in\mathbb N_0$ if it contains more than $n/2^{i+1}$ vertices and at most $n/2^i$ vertices.

The data structure will generate strongly connected subgraphs during its execution. At initialization, it first generates the SCCs of $G$ and sets $S_0 = S_1 = \ldots = S_{\lceil\lg d_1\rceil} = \emptyset$.

Whenever an SCC $C$ has been generated (either in one of the graphs $G_0,\ldots,G_{\lceil\lg d_1\rceil}$ or during the initialization step above), let $i$ be its level. If $i = \lceil\lg d_1\rceil$ then add all vertices of $C$ to $S_i$, thereby partitioning $C$ into single-vertex SCCs in $G_i$. Now assume that $i < \lceil\lg d_1\rceil$. If $C$ was not created due to a level $i$-SCC being partitioned, an instance of the data structure $\mathcal D_i(C)$ from Theorem~\ref{Thm:LowDiamDecomp} is initialized and maintained for $C$ with parameters $d_1/2^i$ and $d_2/2^i$ instead of $d_1$ and $d_2$, respectively. Whenever $\mathcal D_i(C)$ partitions a level $i$ SCC $C'$, the new separator vertices are added to $S_i$.

The data structure outputs new vertices of $S_0,\ldots,S_{\lceil\lg d_1\rceil}$ and pointers to $W_1,\ldots,W_p$ in the same way as the data structure of Theorem~\ref{Thm:LowDiamDecomp}.

Correctness of the three items follows from the correctness of Theorem~\ref{Thm:LowDiamDecomp}, from the fact that trivially, $|S_i| = \tilde O(n2^i/d_1)$ for $i = \lceil\lg d_1\rceil$, and from observing that whenever a data structure $\mathcal D_i(C)$ is initialized, $C$ is an SCC of $G_{i-1}$ of level $i$ and hence of size at most $n/2^i$.

To finish the correctness proof, we need to show that the sum of diameters of all SCCs of $G_{\lceil\lg d_1\rceil}$ is at most $2d_2$ after each edge deletion in $G$. Consider one such SCC $C$ and let $i$ be its level. If $i = \lceil\lg d_1\rceil$ then $C$ has diameter $0$ so assume that $i < \lceil\lg d_1\rceil$. Then there is a data structure that was previously initialized for some level $i$-SCC which ensures that $C$ has diameter at most $d_2/2^i$ in $G_i$ and thus in $G_{\lceil\lg d_1\rceil}$. Since $C$ has level $i$, we also have $|C| > n/2^{i+1}$. If we think of the diameter of $C$ as its cost and distribute this cost evenly among the vertices of $C$, each vertex of $C$ is assigned a cost of less than $(d_2/2^i)/(n/2^{i+1}) = 2d_2/n$. The sum of diameters of all SCCs of $G_{\lceil\lg d_1\rceil}$ equals the total cost assigned to all vertices of $V$ which is less than $n(2d_2/n) = 2d_2$, as desired.

For the time bound, keeping track of levels of SCCs can easily be done within the time spent on instances of data structures of Theorem~\ref{Thm:LowDiamDecomp}. Also, observe that the vertex sets of all instances of data structures of Theorem~\ref{Thm:LowDiamDecomp} initialized for level $i$-SCCs are pairwise vertex-disjoint. Since there are only $O(\log n)$ levels, the time bound follows from Theorem~\ref{Thm:LowDiamDecomp}.
\end{proof}

\section{A Multigraph Data Structure}\label{sec:Multigraph}
In this section, we present a data structure $\mathcal M$ for maintaining a  multigraph $M = (V_M, E_M)$ of an underlying decremental simple digraph $G = (V,E)$ where $M$ is obtained from $G$ by contracting subsets of pairwise disjoint subsets of $V$ and removing self-loops. Each edge $e\in E$ has a \emph{level} $\ell(e)\in\{0,\ldots,k\}$ for some given parameter $k\in\mathbb N_0$; this level may increase over time. If for two distinct vertices $u$ and $v$ in $V_M$ there are multiple edges in $E_M$ from $u$ to $v$, let $i$ be the minimum level of these. The \emph{representative} for this multi-edge is a single edge $(u,v)$ with the associated level $\ell(u,v) = i$. Hence, replacing the edges of $M$ by the representative edges yields a simple graph.

The data structure $\mathcal M$ supports the following operations:
\begin{description}
\item [\texttt{Init}$(G = (V,E), \{V_1,\ldots,V_{\ell}\},\{E_0,\ldots,E_k\},\{\Delta_0,\ldots,\Delta_k\})$:] initializes $M$ for the graph $G = (V,E)$ where $V_M = \{V_1,\ldots,V_{\ell}\}$ is a partition of $V$, $\{E_0,\ldots,E_k\}$ is a partition of $E$, and $\Delta_0,\ldots,\Delta_k$ belong to $\mathbb N_0$. For $i = 0,\ldots,k$, each edge of $E_i$ is assigned the level $i$.
\item [\texttt{Delete}$(e)$:] deletes edge $e$ from $E$ and updates $M$ accordingly.
\item [\texttt{Increase}$(e,i)$:] assuming $e\in E$ has level $\ell(e) < i$, updates $\ell(e)\leftarrow i$.
\item [\texttt{Split}$(V',\{W_1,\ldots,W_{p-1}\})$:] splits the subset $V'\in V_M$ of $V$ into pairwise disjoint subsets $W_1,\ldots,W_{p-1},W_p$ where $W_p = V'\setminus\cup_{i = 1}^{p-1} W_i$ and updates $M$ accordingly by replacing $V'$ in $V_M$ by $W_1,\ldots,W_p$; it is assumed that $|W_i|\leq\frac 1 2 |V'|$ for $i = 1,\ldots,p-1$. The new vertex $W_p$ is identified with $V'$.
\end{description}
For each type of operation, $\mathcal M$ outputs a pair of sets, $(E_{\mathit{old}},E_{\mathit{new}})$. Set $E_{\mathit{old}}$ resp.~$E_{\mathit{new}}$ consists of the representative edges that disappear resp.~appear due to the update. If an edge changes level, we assume that it appears in $E_{\mathit{old}}$ resp.~$E_{\mathit{new}}$ with the old resp.~new level.

In addition, $\mathcal M$ in addition provides constant-time access to:
\begin{itemize}
\item the vertex of $V_M$ containing a given query vertex of $V$,
\item $\ell(e)$ for a given query edge $e\in E$,
\item pointers to linked lists $E_{\mathit{in}}(V',i)$ and $E_{\mathit{out}}(V',i)$ for a given query pair $(V',i)$ where $V'\in V_M$ and $i\in\{0,\ldots,k\}$; these lists contain, respectively, the representative level $i$-edges that are ingoing to and outgoing from $V'$, and
\item pointers to linked lists $V_{\mathit{in}}^{\Delta_i}(i)$ and $V_{\mathit{out}}^{\Delta_i}(i)$ for $i = 0,\ldots,k$; these lists contain, respectively, the set of vertices $V'$ of $M$ such that more than $\Delta_i$ representative level $i$-edges are ingoing to and outgoing from $V'$.
\end{itemize}

When convenient, we will view each vertex of $M$ as the subset of $V$ that was contracted to form this vertex.

We can implement $\mathcal M$ with the following performance guarantees (a proof can be found in Appendix~\ref{sec:ImplementationM}).
\begin{Lem}\label{Lem:M}
Let $m$ resp.~$n$ be the initial number of edges resp.~vertices of $G$. Then $\mathcal M$ can be implemented to support any sequence of operations using a total of $O(km\log n + m\log^2n + n)$ deterministic time.
\end{Lem}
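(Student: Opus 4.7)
The plan is to bound each update by the combination of three amortisation arguments: each edge can have its level increased at most $k$ times (giving the $km\log n$ term), the ``smaller-half'' condition $|W_i|\le|V'|/2$ for $i<p$ means every vertex of $V$ can be moved to a new cluster at most $O(\log n)$ times (giving $O(m\log n)$ edge-reinsertions and hence $O(m\log^2 n)$), and Init together with pointer setup is linear in $m+n$.

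I would maintain the following data structures. For every $u\in V$ a pointer $cl(u)\in V_M$ together with membership of $u$ in a doubly-linked list storing the vertices of its current cluster; for every $e\in E$ its current level $\ell(e)$; for every ordered pair $(V',V'')\in V_M\times V_M$ which currently has at least one edge between it, a structure $P(V',V'')$ consisting of $k+1$ doubly-linked lists of edges indexed by level, plus a balanced BST keyed by the non-empty levels so that the current representative level (the minimum key) can be read in $O(1)$ and updated in $O(\log n)$; for every pair $(V',i)$ the lists $E_{\mathit{in}}(V',i)$ and $E_{\mathit{out}}(V',i)$ of representative edges, with a counter of their sizes; and for every $i$ the heavy-vertex lists $V_{\mathit{in}}^{\Delta_i}(i)$ and $V_{\mathit{out}}^{\Delta_i}(i)$, each vertex storing a back-pointer into its entry when present so that (de)insertion on a counter crossing $\Delta_i$ is $O(1)$.

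Init scans $E$ once, routing each edge into the appropriate $P(V',V'')$ and $E_{\mathit{in}},E_{\mathit{out}}$ list, in $O(m\log n+n)$ time. Delete and Increase each touch a single edge: it is removed (and in Increase re-inserted at a higher level) inside one $P(V',V'')$ at $O(\log n)$ cost via the BST, and if the representative level changes we update the affected $E_{\mathit{in}},E_{\mathit{out}}$ list, which in turn may cross a threshold $\Delta_i$ and trigger an $O(1)$ heavy-vertex list update; all emitted to $E_{\mathit{old}},E_{\mathit{new}}$. For Split$(V',\{W_1,\ldots,W_{p-1}\})$ I first create new cluster objects $W_1,\ldots,W_{p-1}$ (with empty $E_{\mathit{in}},E_{\mathit{out}}$ lists at every level) and identify $W_p$ with $V'$; then I iterate only over the vertices of the small parts $W_1,\ldots,W_{p-1}$, and for each such $u$ walk its adjacency list in $G$, deleting every incident edge from its old $P$-structure and inserting it into the new one, emitting the representative changes to $E_{\mathit{old}}$ and $E_{\mathit{new}}$. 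Vertices in $W_p$ keep their pointer unchanged, so no work is spent on them.

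Summing costs: Init contributes $O(m\log n+n)$; Delete is called at most $m$ times at $O(\log n)$ each; each edge $e$ sees at most $k$ Increase calls, so Increases cost $O(km\log n)$; and by the smaller-half property each $u\in V$ is moved at most $\lceil\lg n\rceil$ times, so the total number of edge-reinsertions across all Splits is $\sum_u \deg(u)\cdot O(\log n)=O(m\log n)$, each at cost $O(\log n)$, giving $O(m\log^2 n)$. The total is $O(km\log n+m\log^2 n+n)$, as claimed. The main obstacle I anticipate is correctness of the $(E_{\mathit{old}},E_{\mathit{new}})$ output during Split: because both the old and the new endpoint pair of every reinserted edge may undergo representative changes, and because the representative encodes both an edge and a level, I must read the BST minimum before and after every individual insertion and deletion and emit exactly those changes — avoiding, in particular, emitting a spurious pair when the representative edge is unchanged but merely reclassified due to a temporary empty state during the reinsertion.
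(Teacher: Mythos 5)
Your proposal is correct and matches the paper's implementation in all essentials: a per-pair priority structure whose minimum level defines the representative edge, representative lists with counters and back-pointers for the threshold lists, and the same three amortization arguments (at most $k$ level increases per edge, the smaller-half rule charging each edge $O(\log n)$ split-moves at $O(\log n)$ each, and near-linear initialization). The only cosmetic differences are that the paper stores each pair's edges in a min-heap keyed by level (rather than level-indexed lists plus a BST over non-empty levels) and makes explicit a global balanced BST over ordered cluster-id pairs to locate the pair structure -- a lookup your $O(\log n)$ per-operation budget already absorbs.
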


\section{Fast Approximate Edge-Weighted ES-Trees}\label{sec:FastES}
In this section, we consider an instance $\mathcal M$ of the multigraph structure of Section~\ref{sec:Multigraph}. We shall ignore the underlying simple graph $G$ and only focus on the multigraph $M$ maintained by $\mathcal M$ and the changes made to $M$. We will present a data structure $\mathcal ES$ associated with $M$ which maintains an exact SSSP in $M$ from a given source $s\in V(M)$ up to a given threshold distance $D\in\mathbb N_0$ for a particular edge weight function $w:E_M\rightarrow\mathbb N_0$ specified below; this is the ``flexible'' edge weight function that we referred to in the overview. We assume that the vertex $s$ is never split by an update to $\mathcal M$.

The behavior and performance of the data structure will depend on two functions, $w_{M},W_{M}:\{0,\ldots,k\}\rightarrow \mathbb N$ where $w_{M} \leq W_{M}$ and where $k$ is as in Section~\ref{sec:Multigraph}. We leave these two functions unspecified in this section since the choice of these will depend on the applications in later sections. For any real number $x$ and $y\in\mathbb N$, denote by $\lceil x\rceil_i$ the smallest integer of value at least $x + w_{M}(i)$ which is divisible by $1 + W_{M}(i) - w_{M}(i)$; here we abuse notation and omit the functions $w_{M}$ and $W_{M}$ in the notation $\lceil x\rceil_i$.

The edge weight function $w:E_M\rightarrow \mathbb N$ is recursively defined by
\[
w(u,v) = \lceil d_M(s,u)\rceil_i - d_M(s,u),
\]
where $i = \ell(u,v)$ and $d_M(s,u)$ is the shortest path distance function in $M$ w.r.t.~$w$.

\begin{Lem}\label{Lem:EdgeWeightFunction}
The edge weight function $w$ above exists and is unique for all edges $(u,v)\in E_M$ where $u$ (and hence $v$) is reachable from $s$ in $M$. Furthermore, for each $(u,v)\in E_M$, $w_{M}(i)\leq w(u,v)\leq W_{M}(i)$ where $i = \ell(u,v)$.
\end{Lem}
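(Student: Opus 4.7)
The plan is to first pin down a structural property of the rounding operator $\lceil\cdot\rceil_i$ and then run a well-founded strong induction on shortest-path distance that simultaneously defines $w$ on the reachable part of $M$ and proves uniqueness. The key observation that breaks the apparent circularity (since $w$ depends on $d_M$, which in turn depends on $w$) is that the rounding bound forces every edge weight to be at least $w_{M}(i)\geq 1$, so distances strictly increase along any shortest path.

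First I would set $\alpha_i := 1 + W_{M}(i) - w_{M}(i)\geq 1$ and verify that for every integer $x$,
\[
w_{M}(i)\;\le\;\lceil x\rceil_i - x\;\le\;W_{M}(i).
\]
Writing $x + w_{M}(i) = q\alpha_i + r$ with $0\leq r < \alpha_i$, either $r = 0$ and $\lceil x\rceil_i - x = w_{M}(i)$, or $r > 0$ and $\lceil x\rceil_i - x = w_{M}(i) + (\alpha_i - r)$, which lies in $[w_{M}(i)+1,\,W_{M}(i)]$. This inequality yields the claimed bound on $w(u,v)$ as soon as $d_M(s,u)$ is a well-defined integer, and in particular guarantees $w(u,v)\geq 1$ everywhere it is defined.

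For existence and uniqueness I would use strong induction on $D\in\mathbb N_0$ with the hypothesis: for every $u$ reachable from $s$ with $d_M(s,u)\leq D$, the value $d_M(s,u)$ is a uniquely determined nonnegative integer, and $w(u,v)$ is uniquely determined for every outgoing edge $(u,v)\in E_M$ via the formula. The base case $D=0$ only concerns $u=s$ and is immediate, since $w(s,v)=\lceil 0\rceil_{\ell(s,v)}$ is a fixed integer in $[w_{M}(\ell(s,v)),W_{M}(\ell(s,v))]$. For the inductive step, let $v$ be reachable with $d_M(s,v) = D+1$. Because every edge weight is $\ge 1$, some shortest $s$-to-$v$ path ends with an edge $(u,v)$ satisfying $d_M(s,u)\leq D$, so by induction both $d_M(s,u)$ and $w(u,v)$ are already uniquely determined. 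Hence
\[
d_M(s,v) \;=\; \min\bigl\{\,d_M(s,u)+w(u,v)\;:\;(u,v)\in E_M,\ d_M(s,u)\leq D\,\bigr\}
\]
is itself uniquely determined (predecessors at larger distance would contribute values strictly greater than $D+1$ and can be ignored), and then $w(v,x) = \lceil d_M(s,v)\rceil_{\ell(v,x)} - d_M(s,v)$ fixes the outgoing weights from $v$.

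The main obstacle is precisely the well-foundedness of this induction; once the uniform lower bound $w_{M}(i)\geq 1$ is extracted from the rounding computation, the rest is a layered Dijkstra-style argument with only routine verification. A minor subtlety to handle explicitly is that the minimum defining $d_M(s,v)$ in the inductive step is actually attained (which it is, since $v$ is reachable and has at least one in-neighbor at distance $\leq D$), so there is no issue with the recursion being undefined on an empty set of predecessors.
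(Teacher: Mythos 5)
Your proposal is correct and essentially mirrors the paper's argument: the paper fixes $w$ by running a modified Dijkstra that assigns $w(u,v)=\lceil d(u)\rceil_{\ell(u,v)}-d(u)$ when $u$ is extracted and invokes Dijkstra's correctness for existence and uniqueness, which is the algorithmic form of your strong induction on distance (both hinging on $w_M(i)\geq 1$, resp.\ nonnegativity, to make the processing order well-founded). Your quotient-remainder computation for $w_M(i)\leq\lceil x\rceil_i-x\leq W_M(i)$ is the same arithmetic fact the paper establishes by stepping through the residues $x, x+1,\ldots$ between consecutive multiples of $1+W_M(i)-w_M(i)$.
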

\begin{proof}
If $w$ exists, it clearly satisfies $d_M(s,s) = 0$. Now, consider running Dijkstra's algorithm on $M$ with the following modification: initially, all edge weights are unspecified. Whenever a vertex $u$ is extracted from the min-priority queue, the weight $w(u,v)$ of each outgoing edge $(u,v)$ in $M$ is set to $\lceil d(u)\rceil_i$ where $i = \ell(u,v)$ and where $d(u)$ is the distance estimate for $u$ when it is extracted.

For every edge $(u,v)\in E(M)$ where $u$ is reachable from $s$ in $G$, the weight $w(u,v)$ is specified at termination of Dijkstra's algorithm. The first part of the lemma now follows from the correctness proof of Dijkstra's algorithm.

To show the second part, let $i\in\{0,\ldots,k\}$ be given and let $f_i:\mathbb N_0\rightarrow\mathbb N$ be defined by $f_i(x) = \lceil x\rceil_i - x$. We will show that $w_{M}(i)\leq f_i\leq W_{M}(i)$. This is clear if $w_M(i) = W_M(i)$ since then $f_i = w_M(i)$ so assume that $w_M(i) < W_M(i)$ and consider an $x\in\mathbb N_0$ such that $x + w_{M}(i)$ is divisible by $1 + W_{M}(i) - w_{M}(i)$. Then $f_i(x) = f_i(x + 1 + W_{M}(i) - w_{M}(i)) = w_{M}(i)$. Since $w_{M}(i) < W_{M}(i)$, we have $1 + W_{M}(i) - w_{M}(i)\geq 2$ so $(x + 1) + w_{M}(i)$ is not divisible by $1 + W_{M}(i) - w_{M}(i)$. Hence, $f_i(x+1) = (x + w_{M}(i) + (1 + W_{M}(i) - w_{M}(i))) - (x+1) = W_{M}(i)$. Furthermore, $\lceil x + c\rceil_i = \lceil x + 1\rceil_i$ for $c\in\{2,3,\ldots,W_{M}(i) - w_{M}(i)\}$ so $f_i(x+c) = W_{M}(i) - c + 1\geq w_{M}(i) + 1$. Hence, $w_{M}(i)\leq f_i\leq W_{M}(i)$ and the second part of the lemma follows.
\end{proof}

\subsection{The data structure}\label{subsec:ES}
We now present our dynamic data structure $\mathcal ES$ which maintains an SSSP tree $T$ from $s$ in $M$ up to distance $D$ w.r.t.~weight function $w$. The total update time is smaller than that of the data structure of Even and Shiloach~\cite{EvenS81} under certain assumptions about the distribution of edge levels. Tree $T$ is maintained under the types of updates to $\mathcal M$ specified in Section~\ref{sec:Multigraph}. The intuition behind the speed-up is that for any given edge $(u,v)$, when $d_M(s,u)$ increases, $(u,v)$ only needs to be scanned if $d_M(s,u)+w_M(i)$ is divisible by $1+W_M(i) - w_M(i)$ where $i = \ell(u,v)$. Hence, if $1+W_M(i) - w_M(i)$ is large, this will give a significant improvement over the data structure in~\cite{EvenS81} where a vertex pays its degree every time its distance from $s$ increases.


$\mathcal ES$ is a modification of the data structure of Even and Shiloach. When $\mathcal M$ has been initialized, the initial SSSP tree $T$ is found using the Dijkstra variant in the proof of Lemma~\ref{Lem:EdgeWeightFunction} on $M$. Each vertex $v$ of $V(M)$ is initialized with and maintains the following information:
\begin{itemize}
\item a distance estimate $d(v)$ which is initially $d_T(s,v)$,
\item an edge $(p(v),v)$ if $v\neq s$ where $p(v)$ is the parent of $v$ in $T$,
\item the set $P(v)$ of all representative edges $(u,v)$ with $\lceil d(u)\rceil_i = d(v)$ where $\ell(u,v) = i$.
\end{itemize}
Given a vertex $v\in V_M$, \DES can be queried in constant time for both $d(v)$ and $p(v)$.



\DES has two procedures, \texttt{UpdateP} and \texttt{UpdateDistances}. The procedure \texttt{UpdateP} takes no input and is automatically called after every operation applied to $\mathcal M$ and updates the $P$-sets without changing the distance estimate function $d$; the only exception to the latter is that after a \texttt{Split}-operation, for each new vertex $v$ in $M$ \texttt{UpdateP} sets $d(v)$ equal to the distance estimate for the vertex that was split. The procedure \texttt{UpdateDistances} may be called at any point and updates SSSP tree $T$ and the information associated with each vertex.

Internally, \DES maintains an initially empty min-priority queue $Q$ containing all vertices of $M$ whose $P$-sets changed since the last execution of \texttt{UpdateDistances}; the key values of these vertices are their distance estimates. This queue is updated by \texttt{UpdateP} and is emptied by \texttt{UpdateDistances}.

\paragraph{The \texttt{UpdateP} procedure:}
The procedure \texttt{UpdateP} is executed immediately after an update to $\mathcal M$ and does the following. Let $(E_{\mathit{old}},E_{\mathit{new}})$ be the output from $\mathcal M$. If the operation is a \texttt{Split}$(v,\{w_1,\ldots,w_{p-1}\})$-operation, then first set $d(w_j) = d(v)$ for $j = 1,\ldots,p-1$ and insert $w_1,\ldots,w_{p-1}$ and $v$ into $Q$ with their distance estimates as key values (ignoring insertions of vertices already present in $Q$). The remaining description of \texttt{UpdateP} in the following is shared among all types of operations applied to $\mathcal M$.

The $P$-sets are updated as follows. For each $(v_1,v_2)\in E_{\mathit{old}}$, update $P(v_2)\leftarrow P(v_2)\setminus\{(v_1,v_2)\}$. For each $(v_1,v_2)\in E_{\mathit{new}}$, if $\lceil d(v_1)\rceil_i = d(v_2)$ where $i = \ell(v_1,v_2)$, update $P(v_2)\leftarrow P(v_2)\cup\{(v_1,v_2)\}$.

For all $v$ for which $P(v)$ changed due to the above updates, $v$ is inserted into $Q$ with key value $d(v)$.

\paragraph{The \texttt{UpdateDistances} procedure:}
Procedure \texttt{UpdateDistances} is iterative. While $Q$ is non-empty, it does the following. A vertex $v$ with minimum key value is extracted from $Q$. If $d(v) > D$, the procedure proceeds to the next iteration.

Now, assume that $d(v)\leq D$ and $P(v) \neq \emptyset$. Then \texttt{UpdateDistances} sets $(p(v),v)$ equal to an arbitrary edge of $P(v)$ and proceeds to the next iteration.

Finally, assume that $d(v)\leq D$ and $P(v) = \emptyset$. Then the update $d(v)\leftarrow d(v) + 1$ is made. For each $i\in\{0,\ldots,k\}$ such that $d(v) - 1 + w_{M}(i)$ is divisible by $1 + W_{M}(i) - w_{M}(i)$, the following is done. For each $(v,w)\in E_{\mathit{out}}(v,i)$, if $\lceil d(v)\rceil_i\neq d(w)$ then $(v,w)$ is removed from $P(w)$ and $w$ is inserted into $Q$. For each $(w',v)\in E_{\mathit{in}}(v,i)$, if $\lceil d(w')\rceil_i = d(v)$ then $(w',v)$ is inserted into $P(v)$. Finally, $v$ is inserted into $Q$ and \texttt{UpdateDistances} proceeds to the next iteration.

\begin{Lem}\label{Lem:ES}
For $i = 0,\ldots,k$, let $\delta_i$ be an upper bound on the number of level $i$-representative edges incident to any vertex of $M$ during any execution of \texttt{UpdateDistances}. Then \DES is deterministic and after each call to \texttt{UpdateDistances}, it contains the current SSSP tree $T$ rooted at $s$ in $M$ up to distance $D$. The total time over all updates is
\[
  O((m+n)\log n + Dn(k + \log n) + Dn\sum_{i = 0}^k\delta_i/(1 + W_{M}(i) - w_{M}(i))).
\]

\end{Lem}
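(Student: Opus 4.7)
I split the argument into correctness and running time. For correctness I maintain three invariants throughout every operation and every call to \texttt{UpdateDistances}: (I1) $d(v) \le d_M(s,v)$ at all times; (I2) after each completed \texttt{UpdateDistances}, $d(v) = d_M(s,v)$ whenever $d_M(s,v) \le D$ and $d(v) > D$ otherwise; (I3) after each completed \texttt{UpdateDistances}, $P(v)$ equals the set of representative edges $(u,v)$ with $\lceil d(u)\rceil_i = d(v)$, where $i = \ell(u,v)$. By Lemma~\ref{Lem:EdgeWeightFunction}, (I3) says every $(u,v) \in P(v)$ is a tight edge with $d(v) = d(u) + w(u,v)$, so picking $p(v)$ from $P(v)$ automatically yields a valid SSSP-tree parent.

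The base case is the modified Dijkstra run from Lemma~\ref{Lem:EdgeWeightFunction}, which establishes all three invariants in $O((m+n)\log n)$ time. For the inductive step, a \texttt{Delete} or \texttt{Increase} can only increase true distances, preserving (I1); the pair $(E_{\text{old}}, E_{\text{new}})$ returned by $\mathcal M$ tells \texttt{UpdateP} exactly which representative $P$-memberships must change. For a \texttt{Split}$(v,\{w_1,\ldots,w_{p-1}\})$, the rule $d(w_j) \leftarrow d(v)$ preserves (I1) since contracting the new vertices back into $v$ can only shorten paths, giving $d_{M^{\text{new}}}(s, w_j) \ge d_{M^{\text{old}}}(s, v) \ge d(v)$. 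Inside \texttt{UpdateDistances} I induct on the extraction order from $Q$: when $v$ with $d(v) \le D$ is extracted and $P(v) \ne \emptyset$, the inductive hypothesis $d(u) = d_M(s,u)$ applied to any $(u,v) \in P(v)$ forces $d(v) = d_M(s,v)$; when $P(v) = \emptyset$, no tight incoming edge exists at the current $d(v)$, so $d_M(s,v) > d(v)$ and the increment is safe, after which the level-$i$ rescans restore (I3) locally for the firing levels.

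For the running time, by (I1) and monotonicity each $d(v)$ undergoes at most $D$ increments, for a total of $Dn$ increments across all vertices. Each increment performs $O(k)$ work on the divisibility checks over all $k+1$ levels and $O(\log n)$ work on bookkeeping the re-insertion of $v$ into $Q$, accounting for the $Dn(k + \log n)$ term. For a fixed vertex $v$ and fixed level $i$, the condition that $d(v) - 1 + w_M(i)$ is divisible by $1 + W_M(i) - w_M(i)$ fires at most $D/(1 + W_M(i) - w_M(i))$ times, and each firing scans $O(\delta_i)$ incident representative edges; summing over $v$ and $i$ yields the third term of the bound. The priority-queue insertions triggered by these scans are absorbed by implementing $Q$ as a monotone bucket queue over integer keys in $[0, D+1]$, which supports $O(1)$-time inserts and a non-decreasing extract-min sweep that runs in total $O(D)$ time; this is legal precisely because $d$-values never decrease.

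The main obstacle is establishing that (I3) can be maintained under an increment of $d(v)$ with only $O(\sum_{i \text{ firing}} \delta_i)$ work, rather than touching every incident edge. The pivotal observation is that $\lceil d(v)\rceil_i$ changes exactly at a firing step of level $i$, so between two consecutive firings no level-$i$ edge incident to $v$ can flip its $P$-membership from the $v$-side, making the targeted level-wise rescan sufficient. A closely related delicate point is that \texttt{UpdateP}'s reaction to $(E_{\text{old}}, E_{\text{new}})$ must correctly refresh $P$-sets for edges whose identity as representatives changed during a \texttt{Split}, even though the levels of those edges did not change. Putting these two pieces together gives the stated bound.
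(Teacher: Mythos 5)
Your proposal follows essentially the same route as the paper's proof: correctness via maintaining the $P$-sets exactly and a Dijkstra-style induction showing that extracted estimates settle to the exact distances (with tight edges in $P(v)$ supplying valid parents), and the time bound by charging $O(k+\log n)$ per estimate increment and $O(\delta_i)$ per level-$i$ ``firing'', whose frequency per vertex is at most $D/(1+W_M(i)-w_M(i))$. The only real deviation is your monotone bucket queue, whose claimed $O(D)$ total sweep is questionable across multiple \texttt{UpdateDistances} calls interleaved with \texttt{UpdateP} insertions, but this is immaterial since a Fibonacci heap (as in the paper) already fits within the $Dn\log n$ term of the stated bound.
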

\begin{proof}
$\mathcal ES$ is clearly deterministic. For the correctness, we first show that all the $P$-sets are correctly maintained over all updates to $\mathcal M$. This is clear just prior to the first update so consider an update and assume that the $P$-sets are correct at the beginning of the update.

Regardless of the type of update, it is easy to see that the $P$-sets are correctly updated by procedure \texttt{UpdateP}. Now, consider an iteration of \texttt{UpdateDistances} in which a vertex $v$ is extracted from $Q$ and assume that at this point, all $P$-sets are correct. Just after the update $d(v)\leftarrow d(v) + 1$, an edge $(v,w)$ needs to leave $P(w)$ exactly when $\lceil d(v)-1\rceil_i = d(w)$, and $\lceil d(v)\rceil_i \neq d(w)$ where $i = \ell(v,w)$. This can only happen when $d(v) - 1 + w_{M}(i)$ is divisible by $1 + W_{M}(i) - w_{M}(i)$. Hence, \DES correctly updates $P(w)$ for each $w\neq v$ in the current iteration. A similar argument shows that $P(v)$ is correctly updated in the current iteration.



Next, we show the following: for each $d = 1,\ldots,D+1$, when all elements of $Q$ have key value at least $d$ then for all $v\in V_M$, $d(v) = d_M(s,v)$ if $d(v) < d$ and $d(v)\leq d_M(s,v)$ if $d(v)\geq d$.

The proof is by induction on the number of times procedures \texttt{UpdateP} and \texttt{UpdateDistances} have been executed so far. The claim holds initially since then $d(v) = d_M(s,v)$ for all $v\in V_M$. To show the induction step, consider first a single execution of \texttt{UpdateP} and assume that the claim holds at the beginning of this execution. \texttt{UpdateP} does not delete elements from $Q$ so the minimum key-value in $Q$ can only decrease during its execution. Furthermore, \texttt{UpdateP} does not change distance estimates. For new vertices $v$ obtained due to a \texttt{Split}-operation, \texttt{UpdateP} initializes $d(v)$ to a value of at most $d_M(s,v)$ and adds $v$ to $Q$. Thus, the claim also holds at the end of the execution of \texttt{UpdateP}.

It remains to show that the claim is maintained during a single execution of \texttt{UpdatesDistances}. This is done by induction on $d$. The claim is trivial for $d = 1$ so assume that $d > 1$ and that the claim holds for $d-1$. Consider a point in time when all elements of $Q$ have key value at least $d$. By the induction hypothesis, $d(u) = d_G(s,u)$ for all $u\in V$ with $d(u) < d-1$ and $d(u)\leq d_G(s,u)$ for all other $u$. Now, consider a $v\in V$ with $d(v) = d-1$. By the induction hypothesis, it suffices to show that $d(v) = d_G(s,v)$. Since $d(v)\leq d_G(s,v)$ we only need to show $d(v)\geq d_G(s,v)$.

We claim that $P(v)\neq\emptyset$. To see this, note that when \DES was initialized, all $P$-sets other than $P(s)$ were non-empty. Whenever a set $P(w)$ was changed in an execution of \texttt{UpdateP} or \texttt{UpdateDistances}, $w$ was inserted into $Q$ and when an element $w$ was extracted from $Q$ with $P(w) =  \emptyset$ and $d(w)\leq D$ then $d(w)$ was incremented and $w$ was inserted back into $Q$. Since currently, every element of $Q$ has key value at least $d$ and $d(v) = d-1\leq D$, we must have $P(v)\neq\emptyset$.

Let $(u,v)\in P(v)$ be given and let $i = \ell(u,v)$ be its level. We have $\lceil d(u)\rceil_i = d(v)$ so $d(u) < d(v) = d-1$. The induction hypothesis gives $d(u) = d_G(s,u)$ and we get $d(v) = \lceil d_G(s,u)\rceil_i = d_G(s,u) + w(u,v)\geq d_G(s,v)$. This completes the induction step.




At termination of any execution of \texttt{UpdateDistances}, $Q$ is empty. By the above, all elements of $Q$ trivially have key value at least $d = D+1$ at this point. Hence, $d(v) = d_G(s,v)$ for all $v\in V$. Since $(p(v),v)\in P(v)$ for each $v\in V\setminus s$, it follows that $T$ is a SSSP tree in $G$ with root $s$. This shows the correctness of \texttt{UpdateDistances} and thus of \DES.

It remains to show the time bound.  We implement $Q$ as a Fibonacci heap. Excluding the time spent in \texttt{UpdateDistances}, \DES spends time proportional to the total output size of $\mathcal M$ which is bounded by the total $O((m+n)\log n)$ time spent by $\mathcal M$.

To bound the total time spent in \texttt{UpdateDistances}, assume for the analysis that at all times, each vertex $v\in V$ has a distance estimate $d(v)$ equal to $d(v_M)$ where $v\in v_M\in V_M$. Consider an iteration of \texttt{UpdateDistances} in which a vertex $v_M$ is extracted from $Q$. The time spent in that iteration, including updates to $Q$ is $O(k + \log n + \sum_{i}\delta_i)$ where the sum is over those $i$ for which $d(v_M) - 1 + w_{M}(i)$ is divisible by $1 + W_{M}(i) - w_{M}(i)$. We charge this cost to an arbitrary vertex $v\in v_M$ of $V$; note that $d(v) - 1 + w_{M}(i) = d(v_M) - 1 + w_{M}(i)$ is divisible by $1 + W_{M}(i) - w_{M}(i)$. It follows that the total cost charged to vertices of $V$ over all updates is $O(Dn(k + \log n) + \sum_{i = 0}^k\delta_i/(1+W_M(i)-w_M(i)))$. This shows the desired.


\end{proof}

\section{Decremental SSSP Versus an Adaptive Adversary}\label{sec:SSSPAdaptive}
In this section, we present our first main result, Theorem~\ref{Thm:SSSPAdaptive}, in the unweighted setting. We generalize this result to the weighted setting in Appendix~\ref{sec:WeightedGraphs}.

\subsection{A reduction}\label{subsec:ReductionD}
In the following, let $D$ be a given power of $2$ between $1$ and $n$. We will present a data structure \DSSSP with total expected update time $\tilde O(m^{2/3}n^{4/3}/\epsilon^{2/3} + n^2/\epsilon^2)$ which can answer any intermediate query for $d_G(s,u)$ within an approximation factor of $(1+\epsilon)$, assuming $D\leq d_G(s,u) < 2D$; furthermore, if $d_G(s,u)\geq 2D$, the structure will output a value of at least $d_G(s,u)$ if queried with vertex $u$. The query time of the data structure is $O(1)$ and it works against an adaptive adversary. Furthermore, it outputs, after each update, the set of vertices whose approximate distances changed due to the update.

We now show that this suffices to show Theorem~\ref{Thm:SSSPAdaptive}. Consider the following decremental SSSP structure $\mathcal D$. This structure initializes and maintains each of the $O(\log n)$ structures above over the sequence of deletions. After each structure has been initialized, $\mathcal D$ queries, for all $u\in V$, the approximate $s$-to-$u$ distance of each structure and stores these $O(\log n)$ estimates in a min-priority queue $Q(u)$. The minimum key value in $Q(u)$ is a $(1+\epsilon)$-approximation of $d_G(s,u)$; this follows since none of the structures output a value below $d_G(s,u)$ and since either $D\leq d_G(s,u) < 2D$ for one of the $O(\log n)$ values of $D$ or $d_G(s,u) = \infty$.

At the end of each update, $\mathcal D$ queries each of the $O(\log n)$ structures for the approximate distances to those vertices $u$ whose estimate changed in that structure and the corresponding key in $Q(u)$ is updated. By maintaining this information after each update, $\mathcal D$ can thus answer any query in $O(1)$ time.

The total update time of $\mathcal D$ is $\tilde O(m^{2/3}n^{4/3}/\epsilon^{2/3} + n^2/\epsilon^2)$ since the total number of changes to estimates reported from the $O(\log n)$ structures cannot exceed their total update time. This shows the desired.

\subsection{Initialization}\label{subsec:Init}
We shall assume w.l.o.g.~that $G$ has no edges ingoing to $s$. Our data structure \DSSSP is initialized for graph $G = (V,E)$ as follows. First, an instance \DSCC of the data structure in Corollary~\ref{Cor:LowDiamDecomp} is initialized with parameters $0\leq d_1 < d_2$ to be fixed later. Let $S_0,\ldots,S_{\lceil\lg d_1\rceil}$ and (pointers to sets of) $\mathcal V$ be the output of \DSCC. \DSSSP computes an ordered list $\mathcal C = \{C_1,\ldots,C_{\ell}\}$ of the sets from $\mathcal V$ with properties that will be maintained throughout the sequence of updates. We therefore state these properties as an invariant:
\begin{Inv}\label{Inv:TopOrder}
During the sequence of updates,
\begin{enumerate}
\item $\mathcal C$ is a topological ordering of the graph obtained from $G_{\lceil\lg d_1\rceil}$ by contracting the vertex sets of $\mathcal V$, and
\item for $i = 0,\ldots,\lceil\lg d_1\rceil$ and for each SCC $C$ of $G_i$, the subsets of $\mathcal V$ contained in $C$ are consecutive in $\mathcal C$.
\end{enumerate}
\end{Inv}

We say that an edge of $G$ (or of multigraph $M$ below) is a \emph{forward edge} of $\mathcal C$ if the vertex set of $\mathcal C$ containing the start point of the edge does not appear later in $\mathcal C$ than the vertex set containing the endpoint of the edge. Invariant~\ref{Inv:TopOrder} ensures that all edges of $G_{\lceil\lg d_1\rceil}$ are forward edges of $\mathcal C$.

For each $C\in\mathcal C$, denote by $r(C)$ the sum of sizes of sets strictly preceding $C$ in $\mathcal C$; these values are computed and stored by \DSSSP. For any two $C,C'\in\mathcal C$ where $C$ precedes $C'$, we let $r(C,C') = r(C',C)$ denote the total size of sets strictly between $C$ and $C'$ in $\mathcal C$, i.e., $r(C,C') = r(C',C) = r(C') - r(C) - |C|$. For each $u\in V$, denote by $C(u)$ the set in $\mathcal C$ containing $u$.


Let $\tau > 0$ be some parameter to be fixed later. Define the function $\eta:(0,\infty)\mapsto\mathbb N_0$ by $\eta(x) = \lfloor\lg(x/\tau + 1)\rfloor$ and let $k = \eta(n)$. Note that $\eta([0,n]) = \{0,1,\ldots,k\}$.

\DSSSP initializes empty sets $E_0,\ldots,E_k$. Then it adds every edge $(u,v)\in E$ to $E_d$ where $d = \eta(r(C(u),C(v)))$. Below we will set an upper bound of $2^i$ on weights of edges of $M$ belonging to $E_i$, for $i = 0,\ldots,k$. Hence, each edge can have weight roughly up to the total size of sets of $\mathcal C$ strictly between its endpoints divided by $\tau$.

\DSSSP then sets the degree threshold $\Delta_i = 2^{i+2}\tau$ for $i = 0,\ldots,k$ which will be passed on as parameters when initializing multigraph structure $\mathcal M$ below. The intuition for choosing such a threshold is that when a vertex of $M$ is incident to more than $\Delta_i$ level $i$-representative edges, at least half of them should have their level increased due to the total size of SCCs between their endpoints being large.


Next, \DSSSP sets up an instance $\mathcal M$ of the multigraph structure from Section~\ref{sec:Multigraph} with the call \texttt{Init}$(G,\mathcal C,\{E_0,\ldots,E_k\}, \{\Delta_0,\ldots,\Delta_k)$. Denote by $M$ the multigraph that $\mathcal M$ maintains. Due to the assumption that $G$ has no ingoing edges to $s$, the vertex of $M$ containing $s$ thus represents the subset $\{s\}$ of $V$. We let this be the source of $M$ and for convenience, we shall refer to it as $s$ in the following. Note that the requirement from Section~\ref{sec:FastES} that the source of $M$ cannot be split is satisfied.

Finally, \DSSSP initializes an instance \DES of the data structure from Section~\ref{sec:FastES} associated with $\mathcal M$ with distance threshold $2D(1+\epsilon)$ and weight functions defined by $w_{M}(i) = 1$ and $W_{M}(i) = 2^i$, for $i = 0,\ldots,k$. The choice of distance threshold follows since we consider shortest path distances up to $2D$ and hence approximate shortest path distances up to $2D(1+\epsilon)$. This completes the description of the initialization step.

\subsection{Handling updates and queries}\label{subsec:UpdateQuery}
Now, consider the deletion of an edge $e = (u,v)$ from $E$. \DSSSP deletes $e$ from both \DSCC and $\mathcal M$.

If \DSCC does not output pointers to any new sets of $\mathcal V$, \DSSSP applies \texttt{UpdateDistances} to \DES and then finishes handling the deletion of $e$.

Now, consider the case where \DSCC outputs pointers to $W_1,\ldots,W_p$ and $V'$ where $W_1,\ldots,W_p$ replace $V'$ in $\mathcal V$. \DSSSP applies the operation \texttt{Split}$(V',\{W_1,\ldots,W_{p-1}\})$ to $\mathcal M$. Then \DSSSP computes a topological ordering of the multigraph obtained from $G_{\lceil\lg d_1\rceil}[W_1\cup\ldots\cup W_p]$ by contracting $W_1,\ldots,W_p$. Letting $W_{\pi(1)},\ldots,W_{\pi(p)}$ denote the corresponding ordering of $W_1,\ldots,W_p$, \DSSSP replaces $V'$ in $\mathcal C$ by the sublist $\langle W_{\pi(1)},\ldots,W_{\pi(p)}\rangle$. For each $W_j$, $r(W_j)$ and $|W_j|$ are computed and stored.



Next, as long as $\mathcal M$ contains a non-empty list of the form $V_{\mathit{in}}^{\Delta_i}(i)$ or $V_{\mathit{out}}^{\Delta_i}(i)$, \DSSSP picks such a list. If the list is of the form $V_{\mathit{in}}^{\Delta_i}(i)$ then let $V'$ be an arbitrary vertex of $V_{\mathit{in}}^{\Delta_i}(i)$; for each edge $(U,V')\in E_{\mathit{in}}(V',i)$, the value $j = \eta(r(U,V'))$ is computed and if $j > \ell(U,V')$, \DSSSP applies the operation \texttt{Increase}$((U,V'),j)$ to $\mathcal M$. Similarly, if the list is of the form $V_{\mathit{out}}^{\Delta_i}(i)$ then let $V'$ be an arbitrary vertex of $V_{\mathit{out}}^{\Delta_i}(i)$; for each edge $(V',U)\in E_{\mathit{out}}(V',i)$, the value $j = \eta(r(V',U))$ is computed and if $j > \ell(V',U)$, \DSSSP applies the operation \texttt{Increase}$((V',U),j)$ to $\mathcal M$.

Finally, \DSSSP applies \texttt{UpdateDistances} to \DES. This finishes the description of how our data structure handles an edge deletion.



To answer a query for the approximate distance from $s$ to a vertex $v$ in $G$, \DSSSP queries $\mathcal M$ to obtain the vertex $V'$ in $M$ containing $v$, queries \DES for $d(V')$, and returns $d(V') + 2d_2$.

\subsection{Correctness}\label{subsec:CorrectnessAdaptive}
It is easy to see from the description of \DSSSP that Invariant~\ref{Inv:TopOrder} is maintained.

We now introduce constraints on the parameters $\tau$, $d_1$, and $d_2$, ensuring that \DSSSP, when queried with a vertex $v$, outputs a value $\tilde d_G(s,v)\geq d_G(s,v)$ such that if $D\leq d_G(s,v) < 2D$ then $\tilde d_G(s,v)\leq (1+\epsilon)d_G(s,v)$. Later we optimize the values of the parameters within these constraints to minimize total update time.

In the following, we consider a query for a vertex $v$ in the current graph $G$ and let $\tilde d_G(s,v)$ be the estimate output by \DSSSP.

\paragraph{The lower bound on $\tilde d_G(s,v)$:}
We first show that $d_G(s,v)\leq\tilde d_G(s,v)$. Let $V'$ be the vertex of $M$ containing $v$.


We may assume that $\tilde d_G(s,v) < \infty$. Let $P_M = U_1,U_2,\ldots,U_p$ be a shortest path from $s$ to $V'$ in $M$; we may assume that $P_M$ consists of representative edges only. Form a path $P$ from $s$ to $v$ in $G$ from $P_M$ as follows. Let $u_1 = s$ and let $u_p' = v$. Pick for each edge $(U_i,U_{i+1})$ of $P_M$ an edge $(u_i',u_{i+1})$ of $G$ from $U_i$ to $U_{i+1}$. For $i = 1,\ldots,p$, let $P_i$ be a shortest path in $G[U_i]$ from $u_i$ to $u_i'$. Then $P$ is the concatenation $P_1\circ (u_1',u_2)\circ P_2\circ (u_2',u_3)\circ\cdots\circ (u_{p-1}',u_p)\circ P_p$. Since each edge of $M$ has weight at least $1$, we have $\tilde d_G(s,v)\geq |P_M| + 2d_2$. We also have $|P|\geq d_G(s,v)$ so it remains to show $|P_M| + 2d_2\geq |P|$. But this follows from Corollary~\ref{Cor:LowDiamDecomp} which implies $\sum_{i = 1}^p |P_i|\leq 2d_2$ and hence $|P| = |P_M| + \sum_{i = 1}^p |P_i|\leq |P_M| + 2d_2$.


\paragraph{The upper bound on $\tilde d_G(s,v)$:}
Next, assume that $D\leq d_G(s,v) < 2D$. We will put constraints on the parameters to ensure that $\tilde d_G(s,v)\leq (1+\epsilon)d_G(s,v)$. With $P_M$ defined as above, observe that $|P_M|\leq d_G(s,v)$. Since also $D\leq d_G(s,v)$, it suffices to ensure that $2d_2 + w(P_M) - |P_M| \leq \epsilon D$.

Let $E_1$ be the set of edges of $P_M$ incident to $\cup_{i = 0}^{\lceil\lg d_1\rceil} S_i$ and let $E_2$ be the remaining set of edges of $P_M$, i.e., those belonging to $G_{\lceil\lg d_1\rceil}$. Note that $w(P_M) - |P_M| = w(E_1) - |E_1| + w(E_2) - |E_2|$. In the following, we bound $w(E_1) - |E_1|$ and $w(E_2) - |E_2|$ separately.

For $i = 0,\ldots,\lceil\lg d_1\rceil$, it follows from Corollary~\ref{Cor:LowDiamDecomp} and from the fact that $P_M$ is simple that the number of edges of $E_1$ belonging to $E(S_i)\setminus\cup_{j < i}E(S_j)$ is at most $2|S_i| = \tilde O(n2^i/d_1)$ and by Corollary~\ref{Cor:LowDiamDecomp} and Invariant~\ref{Inv:TopOrder}, each of them belongs to level $\eta(n/2^i)$ and thus have weight $O(n/(2^i\tau))$ in $M$. Hence, $w(E_1) - |E_1| \leq w(E_1) = \tilde O(n^2/(d_1\tau))$.

To bound $w(E_2) - |E_2|$, define a potential function $\Phi:\mathcal C\rightarrow\mathbb R_0$ by $\Phi(C) = r(C)/\tau$. We will consider the changes to $\Phi$ as edges of $P_M$ are traversed in the order they appear along this path.

Consider one such edge $(C_1,C_2)$ of $P_M$. If $C_1$ belongs to some set $S_i$ and $C_2$ does not belong to $\cup_{j < i}S_j$ then by Corollary~\ref{Cor:LowDiamDecomp} and Invariant~\ref{Inv:TopOrder}, traversing $(C_1,C_2)$ reduces $\Phi$ by at most $n/(2^i\tau)$. Similarly, if $C_2$ belongs to some set $S_i$ and $C_1$ does not belong to $\cup_{j < i}S_j$ then traversing $(C_1,C_2)$ reduces $\Phi$ by at most $n/(2^i\tau)$.

The only remaining case is when $(C_1,C_2)\in E_2$. By Invariant~\ref{Inv:TopOrder}, traversing $(C_1,C_2)$ increases $\Phi$ by at least $r(C_1,C_2)/\tau$. Note that with $j = \eta(r(C_1,C_2))$, the weight of $(C_1,C_2)$ in $M$ is at most $2^j = 2^{\lfloor\lg(r(C_1,C_2)/\tau + 1)\rfloor}\leq 1 + r(C_1,C_2)/\tau$; this follows since $r(C_1,C_2)$ can only increase over time. Hence, an upper bound on the total increase in $\Phi$ over all edges $(C_1,C_2)$ of $P_M$ belonging to $G_{\lceil\lg d_1\rceil}$ will thus be an upper bound on $w(E_2) - |E_2|$.

Note that $\Phi$ can never exceed $n/\tau$ and by Corollary~\ref{Cor:LowDiamDecomp}, the total reduction in $\Phi$ over all edges of $P_M$ is bounded by $O(\sum_{i = 0}^{\lceil\lg d_1\rceil}|S_i|n/(2^i\tau)) = \tilde O(n^2/(d_1\tau))$. Hence, the total increase in $\Phi$ is bounded by $n/\tau + \tilde O(n^2/(d_1\tau)) = \tilde O(n^2/(d_1\tau))$.

We conclude that $w(P_M) - |P_M| = \tilde O(n^2/(d_1\tau))$. We can thus ensure that $\tilde d_G(s,v)\leq (1+\epsilon)d_G(s,v)$ with the following constraint:
\[
  d_2 + n^2/(d_1\tau) = \tilde O(\epsilon D),
\]
which should be interpreted as $d_2 + n^2/(d_1\tau)\leq \epsilon D/\log^c n$ for sufficiently large constant $c > 0$.



\subsection{Running time}\label{subsec:TimeAdaptive}
We now give the implementation details that will allow us to obtain an expression for the running time in the parameters introduced earlier. We will then minimize this expression under the constraint $d_2 + n^2/(\tau d_1) = \tilde O(\epsilon D)$ from the previous subsection.

The total time spent by \DSCC is $\tilde O(m\sqrt n + mnd_1/(d_2 - d_1))$ by Corollary~\ref{Cor:LowDiamDecomp} and the total time spent by $\mathcal M$ is $\tilde O(m)$ by Lemma~\ref{Lem:M}.

To bound the total time spent by \DES, we will bound $\delta_i$ in Lemma~\ref{Lem:ES} for $i = 0,\ldots,k$.

During initialization of \DSSSP, each edge $(u,v)\in E$ is added to $E_i$ where $i = \eta(r(C(u),C(v))) = \lfloor\lg(r(C(u),C(v))/\tau + 1)\rfloor$. Note that $r(C(u),C(v))/\tau\leq 2^{i+1} - 1$ and hence $r(C(u),C(v))\leq (2^{i+1}-1)\tau$. Since there is at most one representative edge between every ordered pair of vertices of $M$, it follows that at the end of the initialization, each vertex of $M$ is incident to at most $2(2^{i+1} - 1)\tau$ representative edges of level $i$, for $i = 0,\ldots,k$. The procedure \texttt{UpdateDistances} of \DES is only applied by \DSSSP once all lists of the form $V_{\mathit{in}}^{\Delta_i}(i)$ or $V_{\mathit{out}}^{\Delta_i}(i)$ are empty. Hence, we can choose $\delta_i = \max\{2(2^{i+1}-1)\tau,2\Delta_i\} = 2^{i+3}\tau$. It follows from Lemma~\ref{Lem:ES} that \DES takes total time $O((m+n)\log n + Dn\log n + Dn\sum_{i = 0}^k 8\tau) = \tilde O(m + Dn\tau)$.



We now bound the time for the work done during the initialization step (Section~\ref{subsec:Init}) and during an update (Section~\ref{subsec:UpdateQuery}), excluding the time spent by \DSCC, \DES, and $\mathcal M$.

It is straightforward to implement the initialization step in $O(m)$ time. For updates, consider first the time spent on updating $\mathcal C$ after a \texttt{Split}$(V',\{W_1,\ldots,W_{p-1}\})$-operation to $\mathcal M$. This can be done in time proportional to the size of the induced graph $G_{\lceil d_1\rceil}[W_1\cup\ldots\cup W_p]$ where $W_p = V'\setminus\cup_{i = 0}^{p-1}W_i$. Observe that each edge of this graph is incident to a set $W_i$ with $1\leq i\leq p-1$ and that $|W_i|\leq\frac 1 2 |V'|$. Hence, the total size of these induced subgraphs over all updates is $O(m\log n)$ which is thus a bound on the total time spent on updates to $\mathcal C$.

The remaining amount of time spent by \DSSSP is dominated by the total number of edges $(U,V')$ visited for $V'$ in a set $V_{\mathit{in}}^{\Delta_i}(i)$ plus the total number of edges $(V',U)$ visited for $V'$ in a set $V_{\mathit{out}}^{\Delta_i}(i)$. We only bound the former due to symmetry. Let $V'$ be a vertex of a set $V_{\mathit{in}}^{\Delta_i}(i)$ inspected by \DSSSP. By definition, $V'$ has more than $\Delta_i = 2^{i+2}\tau$ ingoing representative edges of level $i$. It follows that for more than $2^{i+1}\tau$ of these edges $(U,V')$, $r(U,V') > 2^{i+1}\tau$ and so $\eta(r(U,V'))\geq i+1$. Hence, at least half of these edges will have their level increased by at least $1$ with a call to \texttt{Increase}. Since an edge level can increase at most $k = O(\log n)$ times and since the total number of distinct representative edges over all updates is $\tilde O(m)$, the total number of edges $(U,V')$ visited by \DSSSP over all updates is $\tilde O(m)$.


Putting everything together, we have the constraint $d_2 + n^2/(d_1\tau) = \tilde O(\epsilon D)$. We also introduce the constraint $d_2\geq 2d_1$ so that $mnd_1/(d_2 - d_1) = O(mnd_1/d_2)$ and we have a total update time of
\[
  \tilde O(m\sqrt n + mn/d_1 + mnd_1/d_2 + Dn\tau).
\]

In Appendix~\ref{subsec:OptParSSSPAdaptive}, we show how to optimize the values of the various parameters under the given constraints to get the time bound of Theorem~\ref{Thm:SSSPAdaptive}.

\section{A Faster Data Structure for Dense Graphs}\label{sec:SSSPDenseObl}
In this section, we show our second main result, Theorem~\ref{Thm:SSSPDense}, in the unweighted setting. The generalization to weighted graphs can be found in Appendix~\ref{sec:WeightedGraphs}.

The result is obtained in much the same way as Theorem~\ref{Thm:SSSPAdaptive}. The main difference is that we use a faster version of the data structure of Theorem~\ref{Thm:LowDiamDecomp} which only works against an oblivious adversary. The result is described in the following theorem:
\begin{theorem}\label{Thm:LowDiamDecompObl}
Let $G = (V,E)$ be an unweighted graph undergoing edge deletions, let $m = |E|$ and $n = |V|$, and let integer $d$ with $0 < d \leq n$ be given. Then there is a Las Vegas data structure which maintains a pair $(S,\mathcal V)$ where $S\subseteq V$ is a growing set and where $\mathcal V$ is the family of vertex sets of the SCCs of $G\setminus E(S)$ such that at any point, all these SCCs have diameter at most $d$ and $|S| = \tilde{O}(n/d)$.

After the initialization step, the data structure outputs the initial pair $(S,\mathcal V)$. After each update, the data structure outputs the set $S'$ of new vertices of $S$ where $S'\subseteq V'$ for some $V'\in\mathcal V$. Additionally, it updates $\mathcal V$ by replacing at most one $V'\in\mathcal V$ by the vertex sets $W_1,\ldots,W_p$ of the new SCCs of $G\setminus E(S)$ where $|W_i|\leq\frac 1 2 |V'|$ for $i = 1,\ldots,p-1$. Pointers to $W_1,\ldots,W_p$ are returned.

The total expected time is $\tilde O(md + mn/d)$ assuming an oblivious adversary.
\end{theorem}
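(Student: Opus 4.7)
The plan is to adapt the scheme of Chechik, Hansen, Italiano, Lacki, and Parotsidis~\cite{ChechikHILP16}, which is designed for exactly this oblivious setting. Since we no longer need to hide a random root from an adaptive adversary, the delayed-partitioning trick used in the proof of Theorem~\ref{Thm:LowDiamDecomp} becomes unnecessary and we may collapse the two thresholds $d_1$ and $d_2$ into a single threshold $d$; this is what yields the cleaner bound.

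For each SCC $C$ of the current $G\setminus E(S)$, pick a root $r(C)\in C$ uniformly at random and maintain an ES-structure $\mathcal{E}_{r(C)}$ of depth $d/2$ as defined in Section~\ref{sec:Prelim}. So long as every vertex of $C$ lies within distance $d/2$ of $r(C)$ in both directions, the diameter of $C$ is at most $d$. When some vertex leaves the in-ball or out-ball of radius $d/2$ (detected in $O(1)$ via the unreachability counters of $\mathcal{E}_{r(C)}$), scan the $d/2$ BFS layers on the failing side and pick one of size $O(|C|/d)$, which exists by pigeonhole; add its vertices to $S$. This disconnects the newly unreachable portion from the rest of $C$. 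We then run Tarjan's algorithm on the affected region to obtain the new SCCs $W_1,\ldots,W_p$, label them so that any SCC of size exceeding $|V'|/2$ becomes $W_p$, and for each $W_i$ with $i<p$ initialize a fresh ES-structure with a freshly sampled random root. The piece $W_p$ reuses $\mathcal{E}_{r(V')}$ if $r(V')\in W_p$, and otherwise also gets a new random root.

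Correctness of the diameter bound is immediate by construction, and $|S|=\tilde O(n/d)$ follows because each split contributes at most $O(|V'|/d)$ vertices and the size-halving property of $W_1,\ldots,W_{p-1}$ caps the recursion depth per vertex at $O(\log n)$. For the time bound, the key observation, exploiting obliviousness, is that the edge-deletion sequence is independent of the random roots, so at the moment when the ball of radius $d/2$ around $r(C)$ first fails to cover $C$, the set of surviving vertices is exchangeable within $C$. This implies that with constant probability the newly split-off pieces together contain at least a constant fraction of $|C|$, yielding an amortized $O(\log n)$ ES-rebuilds per vertex. Each ES-structure on a piece $C'$ costs $O(|E(C')|\,d)$ to initialize and maintain, giving $\tilde O(md)$ summed over all pieces; the $\tilde O(n/d)$ separator additions together with the Tarjan reruns they trigger sum to $\tilde O(mn/d)$.

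The main technical obstacle is making this balance claim rigorous. Specifically, one must argue that under a uniformly random $r(C)$ and an oblivious deletion sequence, the expected work expended on $\mathcal{E}_{r(C)}$ before a split occurs amortizes correctly against the number of vertices that are peeled off into the small pieces $W_1,\ldots,W_{p-1}$. This is precisely the step that breaks for an adaptive adversary (motivating the two-threshold, delayed-partitioning approach of Theorem~\ref{Thm:LowDiamDecomp}); in the oblivious setting, however, the standard exchangeability argument of~\cite{ChechikHILP16} applies essentially verbatim, and coupled with the size-halving property ensures that the expected total recomputation cost charged to each vertex is $\tilde O(1)$ ES-rebuilds and $\tilde O(\deg(v))$ Tarjan work, summing to the claimed $\tilde O(md+mn/d)$.
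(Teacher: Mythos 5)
Your high-level plan (single threshold $d$, random roots, exchangeability under an oblivious adversary) matches the paper's, but two of your concrete steps have genuine gaps. First, the separator step: you cut an arbitrary BFS layer from the root of size $O(|C|/d)$ found by pigeonhole. This layer is only sparse relative to the \emph{whole} component, not relative to the side it cuts off, and the cut-off side may be a single vertex. A component of size $\Theta(n)$ can then shed one vertex per deletion over $\Theta(n)$ deletions, each time contributing $\Theta(n/d)$ vertices to $S$, giving $|S|=\Theta(n^2/d)$; your claim that ``each split contributes $O(|V'|/d)$ and halving of $W_1,\ldots,W_{p-1}$ caps the recursion depth'' does not bound how many times the large residual piece $W_p$ is split, since nothing forces $W_p$ to shrink. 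The paper avoids this by growing the search from the newly \emph{far} vertex $v$ (not the root) via \texttt{FastSeparator}/Lemma~\ref{Lem:ThinLayer}, which guarantees the separator is sparse relative to the side being removed (the $q$-quality condition, charged as in Lemma~\ref{Lem:Partition}) and, crucially, has running time proportional to the degrees of the vertices that become unreachable from $r$, so it can be charged to ES-structure maintenance. Your scan of all $d/2$ layers from the root has neither property.

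Second, the probabilistic rebuild argument is not established. The paper only destroys $\mathcal E_r$ when \emph{more than half} of the original component has drifted beyond distance $d/2$ (so pieces split off while $\mathcal E_r$ is alive automatically have size $\le |V(C)|/2$), and then proves, by ordering the potential roots $r'$ by their hypothetical destruction times, that with probability $\ge 1/2$ every surviving subgraph of size $\ge |V(C)|/2$ has diameter $>d/4$, so that re-partitioning with the \emph{smaller} parameter $d/4$ halves all new pieces; this is what yields $O(\log n)$ expected ES-rebuilds per vertex and the $\tilde O(md+mn/d)$ bound. Your scheme has no destroy-at-half rule, re-roots the large piece $W_p$ (at cost $\Theta(|E(W_p)|d)$) whenever the old root happens to land outside it -- an event whose frequency you do not control, since your layer is chosen without reference to the root's side or to balance -- and re-partitions with threshold $d$ rather than $d/4$, so the halving guarantee does not follow. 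The statement ``with constant probability the newly split-off pieces together contain a constant fraction of $|C|$'' is not what is needed (it says nothing about the size of the piece that keeps or receives a rebuilt structure) and is not proved; the exchangeability argument of~\cite{ChechikHILP16} does not apply ``verbatim'' to your stopping rule. These are exactly the points where the paper's proof does real work (Appendix~\ref{sec:LowDiamDecompOblProof}), so as written the proposal does not yield either the $|S|=\tilde O(n/d)$ bound or the claimed total update time.
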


This theorem follows using techniques very similar to those in~\cite{ChechikHILP16}. We have therefore moved the proof to Appendix~\ref{sec:LowDiamDecompOblProof}.

We can now get a corollary similar to Corollary~\ref{Cor:LowDiamDecomp}.
\begin{Cor}\label{Cor:LowDiamDecompObl}
Let $G = (V,E)$ be an unweighted graph undergoing edge deletions, let $m = |E|$ and $n = |V|$, and let integer $d$ with $0 < d\leq n$ be given. Then there is a Las Vegas data structure which maintains pairwise disjoint growing subsets $S_0, S_1,\ldots, S_{\lceil\lg d\rceil}$ of $V$ and a family $\mathcal V$ of subsets of $V$ with the following properties. For $i = 0,\ldots,\lceil\lg d\rceil$, let $G_i = G\setminus(\cup_{j = 0}^i E(S_j))$. Then over all updates, $\mathcal V$ is the family of vertex sets of the SCCs of $G_{\lceil\lg d\rceil}$ and for $i = 0,\ldots,\lceil\lg d\rceil$,
\begin{enumerate}
\item each SCC of $G_i$ of vertex size at most $n/2^i$ has diameter at most $d/2^i$,
\item if $i > 0$, every vertex of $S_i$ belongs to an SCC of $G_{i-1}$ of vertex size at most $n/2^i$, and
\item $|S_i| = \tilde O(n2^i/d)$.
\end{enumerate}
Furthermore, the sum of diameters of all SCCs of $G_{\lceil\lg d\rceil}$ is at most $2d$.

After the initialization step, the data structure outputs the inital sets $S_0, S_1,\ldots, S_{\lceil\lg d\rceil}$ and pointers to the sets of $\mathcal V$. After each update, the data structure outputs the new vertices of $S_0,\ldots,S_{\lceil\lg d\rceil}$. Additionally, it updates $\mathcal V$ by replacing at most one $V'\in\mathcal V$ by the vertex sets $W_1,\ldots,W_p$ of the new SCCs of $G_{\lceil\lg d\rceil}$ where $|W_i|\leq\frac 1 2 |V'|$ for $i = 1,\ldots,p-1$. Pointers to both the old set $V'$ and to the new sets $W_1,\ldots,W_p$ are returned.

The total expected time is $\tilde O(md + mn/d)$ assuming an oblivious adversary.
\end{Cor}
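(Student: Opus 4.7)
The plan is to follow the proof of Corollary~\ref{Cor:LowDiamDecomp} almost verbatim, substituting Theorem~\ref{Thm:LowDiamDecompObl} for Theorem~\ref{Thm:LowDiamDecomp}. Declare a subgraph of $G$ to have \emph{level} $i \in \mathbb{N}_0$ if it has more than $n/2^{i+1}$ and at most $n/2^i$ vertices. At initialization, compute the SCCs of $G$ and set $S_0 = S_1 = \cdots = S_{\lceil\lg d\rceil} = \emptyset$. Whenever an SCC $C$ of level $i$ is generated (either at initialization, by an edge deletion in $G$, or by a split produced by a higher-level data structure), if $i = \lceil\lg d\rceil$ simply add all vertices of $C$ to $S_i$; otherwise, provided $C$ was \emph{not} created by a split of a level $i$ SCC, initialize a fresh instance $\mathcal{D}_i(C)$ of the data structure from Theorem~\ref{Thm:LowDiamDecompObl} on $C$ with parameter $d/2^i$ in place of $d$, and funnel all separators it produces into $S_i$. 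The outputs at each step are exactly those produced by the underlying instances, just as in the proof of Corollary~\ref{Cor:LowDiamDecomp}.

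To verify the three numbered properties: property~1 is immediate from Theorem~\ref{Thm:LowDiamDecompObl} applied at level $i$ with threshold $d/2^i$, together with the trivial case $i = \lceil\lg d\rceil$; property~2 follows from the fact that $\mathcal{D}_i(C)$ is only ever initialized on a level $i$ SCC of $G_{i-1}$, so its separators lie in sets of size at most $n/2^i$; property~3 follows since the vertex sets of all level $i$ instances are pairwise disjoint, giving $|S_i| = \tilde O\!\bigl((n/2^i) \cdot 2^i / d \cdot 2^i\bigr) = \tilde O(n 2^i/d)$ when summed across instances (or trivially for $i = \lceil\lg d\rceil$). The sum-of-diameters bound is established by the same charging argument: each SCC of $G_{\lceil\lg d\rceil}$ of level $i < \lceil\lg d\rceil$ has diameter at most $d/2^i$ and size greater than $n/2^{i+1}$, so it contributes less than $(d/2^i)/(n/2^{i+1}) = 2d/n$ per vertex, and the total over all vertices is less than $2d$.

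For the running time, observe that within any fixed level $i$, the instances $\mathcal{D}_i(C)$ are initialized on vertex-disjoint subgraphs of $G$ of total size at most $m$ and each uses diameter threshold $d/2^i$ applied to a vertex set of size at most $n/2^i$. By Theorem~\ref{Thm:LowDiamDecompObl} the total expected time across all instances at level $i$ is therefore $\tilde O(m \cdot d/2^i + m \cdot (n/2^i)/(d/2^i)) = \tilde O(m d/2^i + mn/d)$. Summing the geometric series in $i$ and multiplying the second term by the $O(\log d)$ number of levels yields the claimed $\tilde O(md + mn/d)$. Level bookkeeping is subsumed by this cost.

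The only step that requires genuine thought, as opposed to mechanical transcription, is confirming that the oblivious adversary assumption is inherited by each $\mathcal{D}_i(C)$. This is fine: the overall adversary is oblivious, so the sequence of edge deletions is fixed in advance independently of the random bits used inside the hierarchy; in particular, the sub-sequence of deletions touching any particular instance $\mathcal{D}_i(C)$ is likewise fixed independently of its random choices, which is exactly what Theorem~\ref{Thm:LowDiamDecompObl} requires. I expect this to be the only real obstacle; the rest is routine.
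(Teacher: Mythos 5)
Your proposal is correct and matches the paper's proof, which is stated there in one line as ``identical to that of Corollary~\ref{Cor:LowDiamDecomp} except that Theorem~\ref{Thm:LowDiamDecompObl} is applied instead of Theorem~\ref{Thm:LowDiamDecomp} and $d$ is used instead of $d_1$ and $d_2$''. You simply spell out that substitution (hierarchy of levels, per-level instances with parameter $d/2^i$, the same charging argument for the sum of diameters, and the geometric-sum time bound), so there is nothing to add.
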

\begin{proof}
The proof is identical to that of Corollary~\ref{Cor:LowDiamDecomp} except that Theorem~\ref{Thm:LowDiamDecompObl} is applied instead of Theorem~\ref{Thm:LowDiamDecomp} and $d$ is used instead of $d_1$ and $d_2$.
\end{proof}

\subsection{The SSSP structure}\label{subsec:SSSPDenseObl}
To obtain Theorem~\ref{Thm:SSSPDense}, we use a slight modification of \DSSSP which implements \DSCC as the data structure of Corollary~\ref{Cor:LowDiamDecompObl} rather than the data structure of Corollary~\ref{Cor:LowDiamDecomp}. Correctness follows from the same arguments as before with the constraint $d_2 + n^2/(\tau d_1) = \tilde O(\epsilon D)$ replaced by $d + n^2/(\tau d) = \tilde O(\epsilon D)$ to ensure the desired $(1+\epsilon)$-approximation factor. The time bound for distances $d_G(s,u)$ with $D\leq d_G(s,u) < 2D$ becomes
\[
\tilde O(md + mn/d + Dn\tau).
\]

In Appendix~\ref{subsec:OptParSSSPDense}, we optimize the values of the parameters to get the desired update time bound of Theorem~\ref{Thm:SSSPDense}.

\subsection{Reporting paths}\label{subsec:PathReport}
Consider a query vertex $v$. We now extend our data structure to be able to report a $(1+\epsilon)$-approximate path from $s$ to $v$ in $G$ in time proportional to its length, thereby showing the second part of Theorem~\ref{Thm:SSSPDense}.

From the min-priority queue for $u$ described in Section~\ref{subsec:ReductionD}, a structure \DSSSP can be obtained in $O(1)$ time such that the value $\tilde d_G(s,v)$ output by \DSSSP satisfies $d_G(s,v)\leq \tilde d_G(s,v)\leq (1+\epsilon)d_G(s,v)$. Let $D$ be the value associated with \DSSSP. If $D < n^{3/2}/(\sqrt m\epsilon^{3/2})$, \DSSSP is the normal structure of Even and Shiloach which allows for the desired path to be reported within the desired time bound by traversing parent pointers from $v$ back to $s$.

Now, assume that $D \geq n^{3/2}/(\sqrt m\epsilon^{3/2})$. We assume in the following that \DSSSP has access to the in- and out-trees maintained by \DSCC together with parent pointers in the two trees. We also use $d' = \frac 1 2 d$ instead of $d$ in Corollary~\ref{Cor:LowDiamDecompObl}. This does not affect the total asymptotic update time of \DSSSP.

The structure \DES maintains an SSSP-tree with parent pointers in multigraph $M$ so \DSSSP obtains from \DES the edges of the path $P_M = \langle C_1 = s, C_2,\ldots,C_k\rangle$ where $C_k$ is the vertex of $M$ containing $v$. The edges $e_1 = (v_1' = s,v_2), e_2 = (v_2',v_3),\ldots, e_{k-1} = (v_{k-1}',v_k)$ of $E$ corresponding to the edges $(C_1=s,C_2),(C_2,C_3),\ldots,(C_{k-1},C_k)$ of $P_M$ are obtained. For $i = 1,\ldots,k$, let $r_i$ be the root of the in-out-tree maintained by \DSCC in $C_i$. Also, let $v_1 = s$ and $v_k' = v$. For $i = 1,\ldots,k$, \DSSSP obtains the path $P_i$ from $v_i'$ to $r_i$ in the in-tree of $C_i$ and the path $P_i'$ from $r_i$ to $v_i$ in the out-tree of $C_i$. Finally, \DSSSP outputs the path $P = P_1\circ P_1'\circ e_1\circ P_2\circ P_2'\circ e_2\circ\cdots\circ e_{k-1}\circ P_k\circ P_k'$ as the answer to the path query.

We now show correctness and running time. It is clear from the description that the running time is $O(|P|)$. For correctness, recall that $\tilde d_G(s,v) = d(C_k) + 2d$ where $d(C_k)$ is the distance from $s$ to $C_k$ in \DES. We have that the total weight of edges $e_1,\ldots,e_{k-1}$ is $k-1\leq d(C_k)$. It remains to show that $\sum_{i = 1}^k|P_i|\leq 2d' = d$ and $\sum_{i = 1}^k|P_i'|\leq 2d' = d$ since then $|P|\leq d(C_k) + 2d = \tilde d_G(s,v)$. By symmetry, we only show the former inequality. But this follows from Corollary~\ref{Cor:LowDiamDecompObl} which implies that the sum of diameters of all SCCs of $G_{\lceil\lg d\rceil}$ is at most $2d'$.

\section{A Faster Data Structure for Sparse Graphs}\label{sec:SSSPSparse}
In this section, we show our third and final main result, Theorem~\ref{Thm:SSSPSparse}. As before, we focus only on unweighted graphs here and generalize to weighted graphs in Appendix~\ref{sec:WeightedGraphs}.

To simplify calculations, we will show an approximation factor of $(1+c_1\epsilon)^{c_2}$ for constants $c_1,c_2\geq 1$ instead of $(1+\epsilon)$; this suffices since we can always pick another $\epsilon' = \Theta(\epsilon)$ to ensure a factor of $(1+\epsilon)$.


\subsection{A modified multigraph structure}\label{subsec:ModM}
We need a slightly different multigraph structure $\mathcal M$ than the one in Section~\ref{sec:Multigraph}. In this subsection, we describe the changes made. Firstly, we no longer need parameters $\Delta_0,\ldots,\Delta_k$ and we shall ignore them in the call to \texttt{Init}, i.e., they are given arbitrary values in this call. Secondly, we need $\mathcal M$ to handle super-edges representing paths in $G$. With $M$ denoting the multigraph represented by $\mathcal M$, we introduce the following two additional operations for $\mathcal M$:
\begin{description}
\item [\texttt{SInsert}$(u,v,i)$:] inserts a super-edge $e$ from vertex $u\in V(M)$ to vertex $v\in V(M)$ and assigns $\ell(e) = i$; returns a pointer to $e$,
\item [\texttt{SDelete}$(e)$:] deletes super-edge $e$ from $M$.
\end{description}
Furthermore, $\mathcal M$ also supports the operation \texttt{Increase}$(e,i)$ for super-edges $e$.

In addition to the data in Section~\ref{sec:Multigraph}, $\mathcal M$ supports constant-time access to $\ell(e)$ also for super-edges $e$ and the $E_{\mathit{in}}$- and $E_{\mathit{out}}$-linked lists are extended to also include super-edges. A \texttt{Split}$(V',\{W_1,\ldots,W_{p-1}\})$-operation will not affect super-edges incident to $V'$; these all become incident to the new vertex $W_p$ that is identified with $V'$.

It is easy to see that with these modifications, $\mathcal M$ can be implemented to support all existing operations within the time time bounds of the following lemma and can support each \texttt{SInsert}- and \texttt{SDelete}-operation in constant time.
\begin{Lem}\label{Lem:ESSuper}
For $i = 0,\ldots,k$, let $\delta_i$ be an upper bound on the expected number of level $i$-super-edges incident to any vertex of $M$ during any execution of \texttt{UpdateDistances}. Let $x_i$ count the expected number of times \DES increases the distance estimate of any vertex of $M$ which at the moment its estimate increases is incident to at least one level $i$-super-edge. Let $m_i$ be the initial number of level $i$-edges of $E$ and assume that these edges never change level. Then \DES correctly maintains SSSP tree $T$ in $M$ up to distance $D$ over all updates using a total expected time of
\[
  O((m+n)\log n + (Dn(k + \log n) + \sum_{i = 0}^k(x_i\delta_i + m_iD))/(1 + W_{M}(i) - w_{M}(i))).
\]
\end{Lem}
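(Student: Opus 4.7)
My plan is to extend the proof of Lemma~\ref{Lem:ES} by reusing its correctness argument essentially verbatim and refining its charging argument to account separately for regular edges and super-edges at each level.

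For correctness, I would observe that the new operations \texttt{SInsert}, \texttt{SDelete}, and \texttt{Increase} applied to super-edges all surface in $\mathcal{M}$'s output as ordinary $(E_{\mathit{old}},E_{\mathit{new}})$ pairs, so \texttt{UpdateP} handles them with the same code path that handled \texttt{Delete} and edge-level changes previously; after \texttt{SInsert}$(u,v,i)$, the edge $(u,v)$ appears in $E_{\mathit{new}}$ at level $i$, and after \texttt{SDelete}$(e)$, the corresponding representative appears in $E_{\mathit{old}}$. Consequently, the $P$-sets remain correctly maintained, and the inductive claim from Lemma~\ref{Lem:ES}, namely that once every key in $Q$ is at least $d$ we have $d(v)=d_M(s,v)$ for all $v$ with $d(v)<d$, survives unchanged since its proof only uses correctness of the $P$-sets together with the structure of $Q$.

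For the running time, I would keep the $O((m+n)\log n)$ bookkeeping term and the $O(Dn(k+\log n))$ term for the $O(k+\log n)$ per-iteration overhead charged to the at most $Dn$ distance-estimate increments. The remaining cost is scanning level-$i$ edges in \texttt{UpdateDistances}, which occurs only when $d(v)-1+w_{M}(i)$ is divisible by $1+W_{M}(i)-w_{M}(i)$, and I would split this cost by edge type. A regular level-$i$ edge is scanned only in iterations in which its endpoint has an estimate increment satisfying the divisibility condition, hence at most $O(D/(1+W_{M}(i)-w_{M}(i)))$ times; summed over the $m_i$ initial level-$i$ edges, this contributes $O(m_i D/(1+W_{M}(i)-w_{M}(i)))$. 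For super-edges of level $i$, the cost of an iteration triggered at a vertex $v$ is the current number $c_v$ of level-$i$ super-edges incident to $v$, whose expectation is at most $\delta_i$ by hypothesis. Since $v$ contributes only when it has such an incident super-edge and divisibility holds, by the defining property of $x_i$ the total number of such qualifying increments is at most $x_i/(1+W_{M}(i)-w_{M}(i))$ in expectation, giving expected super-edge cost $O(x_i\delta_i/(1+W_{M}(i)-w_{M}(i)))$ per level and the claimed total.

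The main obstacle is the probabilistic argument for super-edges: to conclude $E[\text{cost}]\leq \delta_i\cdot E[\#\text{qualifying increments}]$ one needs the bound $E[c_v]\leq\delta_i$ to survive conditioning on a qualifying increment having occurred, and to conclude the divisibility thinning one needs the increments with incident level-$i$ super-edges not to concentrate on the residue class where divisibility holds. Both are justified by the fact that super-edges are inserted and deleted by external operations on $\mathcal{M}$ whose law is independent of the integer phase of the distance counters internal to \DES, so the uniform $1/(1+W_{M}(i)-w_{M}(i))$ thinning applies and the bound on $\delta_i$ transfers into the conditional setting; this is the step where I would have to be careful in the formal write-up.
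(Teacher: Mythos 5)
Your proposal is correct and takes essentially the same route as the paper: correctness is inherited unchanged from Lemma~\ref{Lem:ES} (super-edge operations surface through $\mathcal M$'s $(E_{\mathit{old}},E_{\mathit{new}})$ output, so the $P$-set maintenance argument is untouched), and the time analysis splits each iteration's cost into the $O(k+\log n)$ overhead, the ordinary level-$i$ edges contributing $O(m_iD/(1+W_M(i)-w_M(i)))$, and the level-$i$ super-edges contributing $O(x_i\delta_i/(1+W_M(i)-w_M(i)))$, which is exactly the paper's charging scheme. The only difference is that you explicitly isolate and discuss the divisibility-thinning and conditional-expectation step for the super-edge term, a point the paper's proof asserts without comment.
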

\begin{proof}
Only the time bound analysis for \texttt{UpdateDistances} differs from the proof of Lemma~\ref{Lem:ES}. Assume for the analysis that at all times, each vertex $v\in V$ has a distance estimate $d(v)$ equal to $d(v_M)$ where $v\in v_M\in V_M$. Consider an iteration of \texttt{UpdateDistances} in which a vertex $v_M$ is extracted from $Q$. For $i = 0,\ldots,k$, let $\delta_i(v_M)$ be the number of level $i$-edges of $E$ incident to vertices of $v_M$ and let $\delta_i'(v_M)$ be the number of level $i$-super-edges incident to $v_M$. The time spent in that iteration, including updates to $Q$ is $O(k + \log n + \sum_{i}(\delta_i(v_M) + \delta_i'(v_M)))$ where the sum is over those $i$ for which $d(v_M) - 1 + w_{M}(i)$ is divisible by $1 + W_{M}(i) - w_{M}(i)$. We charge the $O(k + \log n)$ cost to an arbitrary vertex $v\in v_M$ of $V$; note that $d(v) - 1 + w_{M}(i) = d(v_M) - 1 + w_{M}(i)$ is divisible by $1 + W_{M}(i) - w_{M}(i)$. It follows that the total cost charged to vertices of $V$ over all updates is $O(Dn(k + \log n)/(1+W_M(i) - w_M(i)))$. Furthermore, $\sum_i\delta_i(v_M)$ summed over all extractions from $Q$ is $O(\sum_{i = 0}^k m_iD/(1 + W_M(i) - w_M(i)))$. Finally, $\sum_i\delta_i'(v_M)$ summed over all extractions from $Q$ is $O(\sum_{i = 0}^kx_i\delta_i/(1+W_M(i) - w_M(i)))$. This shows the desired.
\end{proof}

\subsection{Initialization}
As in previous sections, we let \DSSSP refer to the data structure approximating distances $d_G(u,v)$ with $D\leq d_G(u,v) < 2D$. We start by describing its initialization step.

First, an augmented graph $G_+$ is formed from $G$ as follows. Sample each vertex of $V$ independently with a certain probability $p = (c\ln n)/((1+\epsilon)D')$ where $c > 0$ is a constant and $D'\in\mathbb N$ is a value to be specified later. Let $V_+$ be $V$ together with a new vertex $s_v$ for each sampled vertex $v$. Let $E_+$ be $E$ together with edges $(s_v,v)$ for each sampled vertex $v$. Then $G_+ = (V_+,E_+)$.

\DSSSP initializes an instance \DSCC of the data structure of Corollary~\ref{Cor:LowDiamDecompObl} for $G_+$ with a parameter $d$ to be specified later. Note that each vertex of $V_+\setminus V$ forms a singleton set in the collection $\mathcal V$ output by \DSCC. \DSSSP computes $\mathcal C$ as in Section~\ref{subsec:Init}; since we still want the definition of $r(C)$ and $r(C,C')$ from Section~\ref{subsec:Init} to be w.r.t.~vertices of $V$, \DSSSP keeps all singleton sets $\{s_v\}$ at the end of the list $\mathcal C$.

With $k = \lceil\log_{1+\epsilon} n\rceil$, empty sets $E_0,\ldots,E_k$ are initialized and all edges of $E$ are added to $E_{\log_{1+\epsilon} \rho}$, for some value $\rho\leq n$ which is a power of $(1+\epsilon)$; this value will be specified later. All edges of $E_+\setminus E$ are added to $E_0$. Then two multigraph structures, $\mathcal M$ and $\mathcal M_+$, are initialized, both with the call \texttt{Init}$(G_+,\mathcal C, \{E_0,\ldots,E_k\})$. Let $M$ resp.~$M_+$ denote the multigraph maintained by $\mathcal M$ resp.~$\mathcal M_+$. At all times, these two multigraphs will be identical except that $M_+$ may in addition contain super-edges.

Each vertex $s_v$ is contained in a vertex of multigraph $M$ containing no other vertex of $V$; we shall refer to each such vertex of $M$ as a \emph{super-source}. If $z$ is the number of super-sources, consider a random permutation $\pi$ of $\{1,2,\ldots,z\}$. For some ordering of the super-sources not depending on $\pi$, assign the $i$th super-source the random \emph{priority} $\pi(i)$.

In the following, we say that a vertex of $M$ is sampled if it contains at least one of the vertices of $V$ sampled above. We denote by $N(s')$ the unique neighbor of super-source $s'$ in $\mathcal M$. The same notation is used for $\mathcal M_+$ and $M_+$.

Let $\delta\in\mathbb R_+$ be a value to be specified later. For each super-source $s'$, denote by $M(s')$ the following subgraph of the current multigraph $M$. $M(s')$ consists of the vertices $v$ where either $v = s'$ or where $r(N(s'),v)\leq\delta$. The edges of $M(s')$ are those of $M$ having both endpoints in $V(M(s'))$ such that at least one of the endpoints $v_M$ satisfies $r(N(s'),v_M) + |v_M|\leq\delta$. Associate unit-weight functions $w_{M(s')}$ and $W_{M(s')}$ with $M(s')$, i.e., $w_{M(s')}(i) = W_{M(s')}(i) = 1$ for $i = 0,\ldots,k$.

Let $D'\in\mathbb N$ be a value specified later. For each sampled vertex $v$ of $\mathcal C$, \DSSSP picks a super-source $s'$ of highest priority from the set of super-sources with $v$ as neighbor in $M_+$. Then \DSSSP initializes an instance \DES$(s')$ of a slightly modified version (see below) of the data structure of Section~\ref{sec:FastES} with source vertex $s'$, distance threshold $D'(1+\epsilon) + 1$, and for graph $M(s')$. We shall refer to super-sources $s'$ having an associated instance \DES$(s')$ as \emph{active}; all other super-sources are \emph{inactive}.

The modifications of \DES$(s')$ compared to that of Section~\ref{sec:FastES} are
\begin{enumerate}
\item \texttt{UpdateP} is applied just after an update to $\mathcal M$ iff that update changes $M(s')$, and
\item whenever a vertex $v$ of \DES$(s')$ is extracted from $Q$ in \texttt{UpdateDistances}, if either $d(v) > D'(1+\epsilon) + 1$ or $r(N(s'),v) > \delta$ (i.e., if $v$ is no longer in $M(s')$), the procedure sets $d(v) = \infty$ and continues to the next iteration.
\end{enumerate}

An instance \DES of the data structure of Section~\ref{sec:FastES} is initialized for $M_+$ with distance threshold $2D(1+\epsilon)$, and weight functions $w_{M_+}$ and $W_{M_+}$ defined by $w_{M_+}(i) = \lfloor(1+\epsilon)^i\rfloor$ and $W_{M_+}(i) = \lfloor(1+\epsilon)^{i+1}\rfloor$ for $i = 0,\ldots,k$.

Finally, \DSSSP adds super-edges to $\mathcal M_+$ as follows. Each structure \DES$(s')$ is queried for the distance $d_{M(s')}(N(s'),v) = d_{M(s')}(s',v) - 1$ from $N(s')$ to each sampled vertex $v\neq s'$ reachable from $s'$ in $M(s')$. For each such pair $(N(s'),v)$, if $D'(1+\epsilon)\leq d_{M(s')}(N(s'),v)\leq 2D'(1+\epsilon)$, the operation \texttt{SInsert}$(N(s'),v,i)$ is applied to $\mathcal M_+$ with $i = \lceil\log_{1+\epsilon}d_{M(s')}(N(s'),v)\rceil$. This completes the description of the initialization step for \DSSSP.

\subsection{Handling updates and queries}
Now, consider an update in which an edge $e$ is deleted from $G$. \DSSSP first deletes $e$ from \DSCC, $\mathcal M$, and $\mathcal M_+$.

Assume first that \DSCC does not output pointers to any new sets of the family $\mathcal V$ that it maintains. Then \DSSSP applies \texttt{UpdateDistances} to all structures \DES$(s')$ with non-empty queues. Next, for every super-edge $e' = (N(s'),v)$ of $\mathcal M_+$ for which $d_{M(s')}(N(s'),v)$ changed, the following is done. If $d_{M(s')}(N(s'),v) = \infty$, the operation \texttt{SDelete}$(e')$ is applied to $\mathcal M_+$. Otherwise, let $i = \lceil\log_{1+\epsilon}d_{M(s')}(N(s'),v)\rceil$; if $i > \ell(e')$, \DSSSP applies the operation \texttt{Increase}$(e',i)$ to $\mathcal M_+$.

Finally, \texttt{UpdateDistances} is applied to \DES.

\paragraph{Dealing with splits:}
Now, consider the case where \DSCC outputs pointers to $W_1,\ldots,W_p$ and $V'$ where $W_1,\ldots,W_p$ replace $V'$ in $\mathcal V$. First, \DSSSP applies \texttt{SDelete}$(e')$ to $\mathcal M_+$ for each super-edge $e'$ incident to $V'$. Next, \texttt{Split}$(V',\{W_1,\ldots,W_{p-1})$ is applied to $\mathcal M$ and to $\mathcal M_+$. \DSSSP updates $\mathcal C$ and computes for each $W_j$ values $r(W_j)$ and $|W_j|$ as in Section~\ref{subsec:UpdateQuery}.


Next, for each $W_j$ of $M$, if $W_j$ is sampled but has no ingoing edge from an active super-source, pick a super-source $s'$ of highest priority from the set of super-sources having edges ingoing to $W_j$ and initialize a new structure \DES$(s')$ as described in the initialization step above.

Now, for each active super-source $s'$ incident to one of the new vertices $W_j$ of $M$, \DES$(s')$ is queried for the distance $d_{M(s')}(N(s'),v) = d_{M(s')}(s',v) - 1$ from $N(s')$ to each sampled vertex $v\neq s'$ in $M(s')$. For each such pair $(N(s'),v)$, if $D'(1+\epsilon)\leq d_{M(s')}(N(s'),v)\leq 2D'(1+\epsilon)$, the operation \texttt{SInsert}$(N(s'),v,i)$ is applied to $\mathcal M_+$ with $i = \lceil\log_{1+\epsilon}d_{M(s')}(N(s'),v)\rceil$.

Next, \texttt{UpdateDistances} is applied to all structures \DES$(s')$ with non-empty queues and super-edges are deleted/have their levels increased accordingly, as described above. Finally, \texttt{UpdateDistances} is applied to \DES. This completes the description of how \DSSSP handles an update.

\paragraph{Answering queries:}
A query for an approximate distance from $s$ to a vertex $v$ in $G$ is answered by \DSSSP by querying $\mathcal M_+$ to obtain the vertex $V'$ in $M_+$ containing $v$, and then querying \DES for $d(V')$. \DSSSP returns $d(V') + 2d$.

\subsection{Correctness}\label{subsec:CorrectnessSparse}
We now show that when queried with a vertex $v\in V$, \DSSSP described above outputs an estimate $\tilde d_G(s,v)\geq d_G(s,v)$ such that if $d\leq d_G(s,v)\leq 2D$ then w.h.p., $\tilde d_G(s,v)\leq (1+2\epsilon)^2d_G(s,v)$.

Assume that $D\leq d_G(s,v) < 2D$. Let $P_{M}$ be a minimum-length path in $M$ from $s$ to the vertex $V'$ of $M$ containing $v$. Let $w$ be the edge weight function for $M_+$ and let $P_{M_+}$ be a shortest path from $s$ to $V'$ in $M_+$.

Consider any subpath of $P_M$ of length $x > D'(1+\epsilon)$. Since vertices of $V$ are sampled independently with probability $p = (c\ln n)/((1+\epsilon)D')$, the probability that this subpath contains no sampled interior vertices is at most $(1-p)^{x-1} < (1-(c\ln n)/((1+\epsilon)D'))^{D'(1+\epsilon)-1} < n^{-c/2}$. A union bound over all such subpaths shows that with probability greater than $1 - n^{2 - c/2}$, every subpath of $P_M$ of length greater than $D'(1+\epsilon)$ contains at least one sampled interior vertex. Assume that this event holds in the following.

By the above, we can partition $P_{M}$ into subpaths each of length between $D'(1+\epsilon)$ and $2D'(1+\epsilon)$ and each starting in either $s$ or in a sampled vertex and each ending in either $V'$ or a sampled vertex. Partition the set of these subpaths into two subsets, $\mathcal Q_1$ and $\mathcal Q_2$; $\mathcal Q_1$ consists of the subpaths $Q = v_1v_2\cdots v_k$ where $r(v_1,v_i)\leq\delta$ for $i = 2,\ldots,p$, and $\mathcal Q_2$ consists of the remaining subpaths.

Let $Q = v_1v_2\cdots v_p$ be a subpath in $\mathcal Q_1$. Then $Q$ is fully contained in $M(s')$ where $s'$ is the active super-source having an edge to $v_1$. Since $D'(1+\epsilon)\leq |Q|\leq 2D'(1+\epsilon)$, \DES$(s')$ maintains the distance $d_{M(s')}(v_1,v_p)$. Hence, $M_+$ contains a super-edge $e'$ from $v_1$ to $v_p$ with $\ell(e') = \lceil\log_{1+\epsilon} d_{M(s')}(v_1,v_p)\rceil$. Since $w_{M_+}(\ell(e'))\geq d_{M(s')}(v_1,V_p)$ and $W_{M_+}(\ell(e'))\leq (1+\epsilon)^2d_{M(s')}(v_1,v_p)$ and since $d_{M(s')}(v_1,v_p) = |Q|$, it follows from Lemma~\ref{Lem:EdgeWeightFunction} that $|Q|\leq w(e')\leq (1+\epsilon)^2|Q|$. 

Now, consider a subpath $Q = v_1v_2\cdots v_p$ in $\mathcal Q_2$. Then $r(v_1,v_i) > \delta$ for some $i\in\{2,\ldots,p\}$. In particular, $\sum_{j = 2}^p r(v_{j-1},v_j) > \delta$. Furthermore, all edges of $Q$ are present in $M_+$ and by Lemma~\ref{Lem:EdgeWeightFunction}, each such edge has weight in the range $[\lfloor\rho\rfloor,\lfloor(1+\epsilon)\rho\rfloor]$.

It follows from the above and from $\rho\geq 1$ that $w(P_{M_+})\geq |P_M|$ and hence that $\tilde d_G(s,v) = 2d + w(P_{M_+})\geq 2d + |P_M|\geq d_G(s,v)$ where the last inequality follows from Corollary~\ref{Cor:LowDiamDecompObl}. This shows the lower bound on $\tilde d_G(s,v)$.

We now add constraints to the parameters to ensure that w.h.p., the upper bound on $\tilde d_G(s,v)$ holds. It suffices to ensure that w.h.p., $2d + w(P_{M_+})\leq \epsilon D + (1+\epsilon)^2|P_M|$ since then the query algorithm returns w.h.p.~a value of $2d + w(P_{M_+})\leq \epsilon D + (1+\epsilon)^2|P_M|\leq (1+2\epsilon)^2d_G(s,v)$.

We use the same argument as in Section~\ref{subsec:CorrectnessAdaptive} but with $u\mapsto r(u)$ instead of $u\mapsto\Phi(u)$. It follows that the sum of $r(u,v)$ over all $(u,v)\in P_M$ is $\tilde O(n^2/d)$. Hence, $|\mathcal Q_2| = \tilde O(n^2/(d\delta))$ and since $1+\epsilon = O(1)$, we get
\[
  w(P_{M_+}) \leq \sum_{Q\in\mathcal Q_1}(1+\epsilon)^2|Q| + \sum_{Q\in\mathcal Q_2}(1+\epsilon)\rho|Q|
              = (1+\epsilon)^2|P_M| + \tilde O(n^2\rho D'/(d\delta)).
\]

We can thus ensure $2d + w(P_{M_+})\leq \epsilon D + (1+\epsilon)^2|P_M|$ by adding the constraint $d + n^2\rho D'/(d\delta) = \tilde O(\epsilon D)$.

\subsection{Running time}\label{subsec:TimeSSSPSparse}
We now analyze the running time of \DSSSP and express it as a function of the parameters introduced. We will then choose values for these parameters to minimize running time. We may assume that $m = \Omega(n)$.

By Corollary~\ref{Cor:LowDiamDecompObl}, the total time to maintain \DSCC is $\tilde O(md + mn/d)$ and as shown in Section~\ref{subsec:ModM}, the total time to maintain $\mathcal M$ is $O(km\log n + m\log^2n) = O(\frac 1 \epsilon m\log n + m\log^2n)$.

Next, we bound the total time spent by \DES. For $i = 0,\ldots,k$, we have $w_{M_+}(i) = (1+\epsilon)^i$ and $W_{M_+}(i) = (1+\epsilon)^{i+1}$. All edges that are not super-edges have level $\log_{1+\epsilon}\rho$ so with Lemma~\ref{Lem:ESSuper} applied to $M_+$, we have $m_i = 0$ for $i\neq\log_{1+\epsilon}\rho$ and $m_i = m$ for $i = \log_{1+\epsilon}\rho$.

To bound the $x_i$-values, observe that w.h.p., the number of sampled vertices is $O(n(\log n)/D')$. Furthermore, at any point, if a vertex $V'$ of $M_+$ is incident to at least one level $i$-super-edge then at least one vertex $v'\in V$ contained in $V'$ is sampled. It follows that $x_i = O(nD(\log n)/D')$.

Next, we bound the $\delta_i$-values. Since all super-edges have level at least $j = \lceil\log_{1+\epsilon}D'(1+\epsilon)\rceil$, we can set $\delta_i = 0$ for $i = 0,\ldots,j-1$. Now, let $i\in\{j,\ldots,k\}$ and consider a vertex $V'\in\mathcal C$ of $M_+$ at any given point in time. For any level $i$-super-edges incident to $V'$, the other endpoint $U'$ satisfies $r(U',V')\leq\delta$ and $U'$ is sampled. It follows that the expected number of level $i$-super-edges incident to $V'$ is at most $p\delta = \Theta((\log n)\delta/D')$. Hence, we can pick $\delta_i = \Theta((\log n)\delta/D')$. Also note that $1+W_M(i)-w_M(i) = \Omega(D')$.

Lemma~\ref{Lem:ESSuper} now implies that the total expected time to maintain \DES is
\begin{align*}
  & \tilde O\left(m + \frac{\frac 1 \epsilon Dn + \sum_{i = j}^kx_i\delta_i}{1 + W_M(i) - w_M(i)} + \frac{mD}{1 + W_M(\log_{1+\epsilon}\rho) - w_M(\log_{1+\epsilon}\rho)}\right)\\
  & = \tilde O\left(m + \frac{Dn}{\epsilon D'} + \frac{\delta Dn}{\epsilon (D')^3} + \frac{Dm}{\rho}\right).
\end{align*}

Next, we bound the total expected time to maintain structures \DES$(s')$ over all super-sources $s'$. Recall that $C(u)$ denotes the current set in $\mathcal C$ containing a vertex $u\in V$. At a given point in time, we say that a super-source $s'$ \emph{touches} a vertex $v\in V$ if $s'$ is active and $r(N(s'),C(v)) + |C(v)|\leq\delta$. We need the following lemma.
\begin{Lem}\label{Lem:Touch}
Let $v\in V$ be given. The expected number of super-sources that ever touch $v$ is $O(\log n + \delta(\log n)/D')$.
\end{Lem}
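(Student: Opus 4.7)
The plan is to establish and exploit a structural invariant about active super-sources, then split the count of super-sources touching $v$ by the relationship of their associated set to $C(v)$ at the touching moment.

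First I would prove a structural fact: at any time $t$ and for any $W\in\mathcal{C}|_t$ containing at least one sampled vertex, the currently active super-source associated with $W$ is precisely the sampled vertex in $W$ of highest priority. This is proved by induction over updates. It holds at initialization by the selection rule, and after a split of $V'$ into pieces: (i) since sets only shrink, no higher-priority vertex can newly enter a piece, so the previous active super-source---if it falls into a piece---remains the max of that piece; (ii) for pieces not inheriting the previous active super-source, the algorithm explicitly picks the highest-priority candidate. With this invariant in place, partition $T_v\le T_v^{\mathrm{same}}+T_v^{\mathrm{diff}}$, where $T_v^{\mathrm{same}}$ counts distinct super-sources $s_u$ that touch $v$ at some moment $t$ with $C(u)|_t=C(v)|_t$, and $T_v^{\mathrm{diff}}$ counts those touching $v$ only at moments with $C(u)|_t\neq C(v)|_t$.

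For $T_v^{\mathrm{same}}$ the touching super-source must be the unique active super-source for $C(v)$ at a moment with $|C(v)|_t\le\delta$. Enumerate the sampled vertices that are ever in $C(v)$ in the reverse order of when they leave $C(v)$; by the structural invariant, the distinct active super-sources for $C(v)$ over its lifetime are exactly the prefix maxima (under the random priority order $\pi$) along this enumeration, and the expected number of prefix maxima in a random permutation is $H_n=O(\log n)$.

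For $T_v^{\mathrm{diff}}$ I would further split by whether the set $C(u)|_t$ at a touching moment is one of the (at most two) boundary sets of $R(v)|_t$ or lies \emph{strictly} between $C(v)|_t$ and a boundary set. The boundary portion is controlled by the same random-priority argument applied to each boundary set (each contributes $O(\log n)$ distinct active super-sources over its lifetime), together with an amortized accounting of how the boundary identity changes over time; this yields an $O(\log n)$ contribution. For the strictly-between portion, the total vertex count at any single moment is at most $\delta-|C(v)|_t\le\delta$, and sampling is independent with probability $p=\Theta(\log n/D')$; bounding the expected number of distinct sampled vertices ever appearing strictly between gives $O(p\delta)=O(\delta\log n/D')$. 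Adding the two bounds proves the lemma.

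The main obstacle is the strictly-between case: splits of $C(v)$ can shove pieces of $V_0^v$ into the strictly-between region of $R(v)$ at later times, so the union of strictly-between sets across all moments can contain strictly more than $\delta$ distinct vertices. The resolution is a charging argument that assigns each fresh appearance of a vertex in the strictly-between region to a specific split of $C(v)$, with the total mass chargeable per split equal to the sizes of the pieces of $V'$ not containing $v$; since these sizes sum along the nested sequence $V_0^v\supseteq C(v)|_{t_1}\supseteq C(v)|_{t_2}\supseteq\cdots$ to at most $|V_0^v|$, but only pieces still falling within distance $\delta$ contribute, the total expected number of distinct sampled strictly-between vertices remains $O(p\delta)$, as required.
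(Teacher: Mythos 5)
Your structural invariant (the active super-source of a sampled vertex $W$ of $M$ is always the highest-priority sampled vertex contained in $W$) is correct, and your treatment of the same-set touches via suffix/prefix maxima of a random priority order is a valid $O(\log n)$ bound, using the same independence (oblivious adversary, so the evolution of $\mathcal C$ is independent of sampling and priorities) that the paper invokes. The strictly-between class also admits the bound you want, but not really by the charging you describe: vertices can enter the strictly-between region of the window without any split of $C(v)$ occurring -- e.g.\ when a boundary set splits and its near pieces fall inside, or when $|C(v)|$ shrinks so that the window radius $\delta-|C(v)|$ grows and formerly-outside sets become strictly inside -- so charging fresh appearances only to splits of $C(v)$ does not cover all cases. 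The clean repair is the paper's observation that $r(\cdot,\cdot)$ is non-decreasing over time: if $C(u)$ is ever strictly inside the window then $r'(u,v)\le r(C(u),C(v))+|C(u)|+|C(v)|-2<\delta$ in the final (all-singletons) ordering, and there are at most $2\delta+O(1)$ such candidate vertices $u$, each sampled with probability $p$, giving $O(p\delta)=O(\delta\log n/D')$.

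The genuine gap is the boundary portion, which is the heart of the lemma (it corresponds exactly to the paper's ``$\delta$-far'' super-sources, i.e.\ touching sets $C(u)$ with $r(C(u),C(v))+|C(u)|+|C(v)|>\delta$). Your sketch -- ``the same random-priority argument applied to each boundary set \ldots together with an amortized accounting of how the boundary identity changes over time'' -- is not a proof and, as a per-boundary-set argument, cannot work: the identity of the boundary set changes many times over the sequence of updates, successive boundary sets are not nested (the boundary can jump outward to an unrelated set when $|C(v)|$ shrinks, or move when the current boundary splits), and an $O(\log n)$ bound per boundary set multiplied by the number of distinct boundary sets does not yield $O(\log n)$; no mechanism is given that controls the total number of distinct such super-sources that ever touch $v$. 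The paper closes exactly this case with a different, global argument: at the first moment some $\delta$-far super-source $s_1$ on one side touches $v$, \emph{all} present and future $\delta$-far touchers on that side must have their sampled vertices inside $N(s_1)$; iterating, the sets $V_1\supset V_2\supset\cdots$ associated with successive distinct $\delta$-far touchers form a strictly nested chain, each new toucher is the highest-priority sampled vertex of the new set and can be viewed as uniform among its sampled vertices, so with probability at least $1/2$ the sampled population at least halves at each step, giving an expected $O(\log n)$ bound overall. This nested-chain plus priority-halving idea is the missing ingredient in your proposal.
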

\begin{proof}
Note that when the last edge has been deleted from $E$, each vertex of $\mathcal C$ contains only a single vertex of $V$. For two vertices $u_1,u_2\in V$, denote by $r'(u_1,u_2)$ the value $r(\{u_1\},\{u_2\})$ at this point in time. Similarly, for any $u\in V$, let $r'(u)$ denote $r(\{u\})$ at this point in time. We say that a super-source $s_u$ is \emph{$\delta$-near} to $v$ if $r'(u,v)\leq\delta$. Otherwise, $s'$ is \emph{$\delta$-far} from $v$.

Since the sequence of updates to $E$ and $\mathcal C$ is independent of how vertices of $V$ are sampled by \DSSSP, it follows that the expected number of super-sources that are $\delta$-near to $v$ is $(2\delta+3)p = O(\delta(\log n)/D')$. This also bounds the expected number of $\delta$-near super-sources that ever touch $v$.

Let $S_-$ resp.~$S_+$ be the set of super-sources $s'$ that at some point touch $v$, are $\delta$-far from $v$, and where $N(s')$ precedes resp.~succeeds $\{v\}$ in $\mathcal C$ when $E = \emptyset$. It remains to show that $E[|S_-|] + E[|S_+|] = O(\log n)$. By symmetry, we only need to show that $E[|S_-|] = O(\log n)$.

Assume that $S_-\neq\emptyset$ and consider the first moment in time when a super-source $s_1\in S_-$ touches $v$. Then $s_1$ is active at this point and $r(N(s_1),C(v)) + |C(v)|\leq\delta$. Since $s_1$ is $\delta$-far from $v$, we also have $r(N(s_1),C(v)) + |N(s_1)| + |C(v)| > \delta$. It follows that no $C\in\mathcal C$ preceding $N(s_1)$ in $\mathcal C$ have ingoing edges from super-sources of $S_-$. Any $C\in\mathcal C$ succeeding $N(s_1)$ either succeeds $C(v)$ or satisfies $r(C,C(v)) + |C| + |C(v)|\leq\delta$; in the former case, $C$ has no ingoing edges from super-sources of $S_-$ (by definition of $S_-$) and in the latter case, every super-source having an edge to $C$ is $\delta$-near to $v$. We conclude that all super-sources of $S_-$ have edges ingoing to $N(s_1)$ and $s_1$ has the highest priority among these.

Consider the next point in time when another super-source $s_2\in S_-$ touches $v$. Just prior to this, $N(s_1)$ must have been split. Just after the split, $N(s_1)\neq N(s_2)$ since otherwise, $s_1$ would remain active. We claim that in fact, $N(s_1)$ must precede $N(s_2)$ in $\mathcal C$. To see this, observe that since $s_2$ touches $v$, we have $r(N(s_2),C(v)) + |C(v)|\leq\delta$ and since $s_1$ is $\delta$-far from $v$, we have $r(N(s_1),C(v)) + |N(s_1)| + |C(v)| > \delta$. This is only possible if $N(s_1)$ precedes $N(s_2)$ in $\mathcal C$.

Repeating this argument gives a sequence $s_1,s_2,\ldots,s_i$ of distinct super-sources from $S_-$; let $t_1,t_2,\ldots,t_i$ denote the sampled vertices of $V$ that they are incident to. For $j = 1,\ldots,i$, let $V_j$ denote the subset $N(s_j)$ of $V$ at the moment when $s_j$ starts touching $v$. Extending the above observations, we get that $V_1\supset V_2\supset\cdots\supset V_i$ and that for $j = 2,\ldots,i$, $r'(t_{j-1}) < r'(u)$ for all $u\in V_j$. Since $s_j$ has highest priority among the super-sources with edges ingoing to $V_j$, we can view $t_j$ as being chosen uniformly at random among the sampled vertices in $V_j$.

Denote by $T$ the set of all sampled vertices in $V$. It follows from the above that with probability greater than $1/2$, $|V_j\cap T|\leq\frac 1 2 |V_{j-1}\cap T|$, for $j = 2,\ldots,i$. Hence, $E[|S_-|] = E[i] = O(\log n)$, showing the desired.
\end{proof}
For any structure \DES$(s')$, consider an edge $e$ of $M(s')$ and let $V_1$ and $V_2$ denote its endpoints. By definition of $M(s')$, either $r(N(s'),V_1) + |V_1|\leq\delta$ or $r(N(s'),V_2) + |V_2|\leq\delta$. Then the edge in $E$ corresponding to $e$ has at least one endpoint that touches $s'$. It follows from Lemma~\ref{Lem:Touch} that each edge of $E$ and each vertex of $V$ is processed by no more than $O(\log n + \delta(\log n)/D')$ structures \DES$(s')$ in expectation. By Lemma~\ref{Lem:ESSuper}, the total expected time to maintain all structures \DES$(s')$ is
\[
  \tilde O\left(\frac{\delta m}{D'} + \frac{\delta n}{\epsilon} + \frac{\delta m}{\epsilon}\right)
= \tilde O\left(\frac{\delta m}{\epsilon}\right).
\]

Summing up, we have a total expected time bound for \DSSSP of
\[
  \tilde O\left(dm + \frac{mn}{d} + \frac{\delta m}{\epsilon} + \frac{Dn}{\epsilon D'} + \frac{\delta Dn}{\epsilon(D')^3} + \frac{Dm}{\rho}\right)
\]
under the constraint $d + n^2\rho D'/(d\delta) = \tilde O(\epsilon D)$ from Section~\ref{subsec:CorrectnessSparse}. In Appendix~\ref{subsec:OptParSSSPSparse}, we optimize the parameters under these constraints to get the update time bound of Theorem~\ref{Thm:SSSPSparse}.

Clearly, answering a distance query takes $O(1)$ time. Reporting an approximate path in time proportional to its length is done almost exactly as in Section~\ref{subsec:PathReport}. The only modification needed is that super edges need to be converted into corresponding paths in $G$ in time proportional to the lengths of these paths. However, this is straightforward as we maintain each such path in a structure \DES$(s')$ which maintains parent pointers for the tree that it maintains. The same analysis as in Section~\ref{subsec:PathReport} now shows the second part of Theorem~\ref{Thm:SSSPSparse}.

\section{Concluding Remarks}\label{sec:ConclRem}
We gave new data structures for the decremental approximate SSSP problem in both weighted and unweighted digraphs. Our time bounds are faster than those of Henzinger et al.~\cite{HenzingerKN142,HenzingerKN15a} for all graph densities. Furthermore, one of our data structures works against an adaptive adversary. This is the first improvement for such an adversary over the $O(mn)$ bound of Even and Shiloach~\cite{EvenS81} dating back to $1981$. It would be interesting to find concrete applications of this result. Given the large number of papers that make use of the data structure of~\cite{EvenS81}, we are optimistic that such applications exist.

It would also be interesting to try and improve our time bounds further. Can $\tilde O(m\sqrt n)$ be achieved, matching the bound for decremental reachability in~\cite{ChechikHILP16}? One of our data structures matches this bound when $m = \tilde O(n^2)$. Also, can our result for an adaptive adversary be extended to also report approximate shortest paths rather than only the approximate distances? Our structure is unable to do so since such a path may reveal the random bits. Finally, can we beat $O(mn)$ deterministically, even for unweighted digraphs with $(1+\epsilon)$-approximation? Such improvements are currently only known for undirected graphs.

\bibliographystyle{abbrv}
\bibliography{paper}

\begin{thebibliography}{10}

\bibitem{BernsteinC16}
A.~Bernstein and S.~Chechik.
\newblock Deterministic decremental single source shortest paths: beyond the
  {$o(mn)$} bound.
\newblock In {\em Proceedings of the 48th Annual {ACM} {SIGACT} Symposium on
  Theory of Computing, {STOC} 2016, Cambridge, MA, USA, June 18-21, 2016},
  pages 389--397, 2016.

\bibitem{BernsteinC17}
A.~Bernstein and S.~Chechik.
\newblock Deterministic partially dynamic single source shortest paths for
  sparse graphs.
\newblock In {\em Proceedings of the Twenty-Eighth Annual {ACM-SIAM} Symposium
  on Discrete Algorithms, {SODA} 2017, Barcelona, Spain, Hotel Porta Fira,
  January 16-19}, pages 453--469, 2017.

\bibitem{BernsteinR11}
A.~Bernstein and L.~Roditty.
\newblock Improved dynamic algorithms for maintaining approximate shortest
  paths under deletions.
\newblock In {\em Proceedings of the Twenty-Second Annual {ACM-SIAM} Symposium
  on Discrete Algorithms, {SODA} 2011, San Francisco, California, USA, January
  23-25, 2011}, pages 1355--1365, 2011.

\bibitem{ChechikHILP16}
S.~Chechik, T.~D. Hansen, G.~F. Italiano, J.~Lacki, and N.~Parotsidis.
\newblock Decremental single-source reachability and strongly connected
  components in {$\tilde O(m\sqrt n)$} total update time.
\newblock In {\em {IEEE} 57th Annual Symposium on Foundations of Computer
  Science, {FOCS} 2016, 9-11 October 2016, Hyatt Regency, New Brunswick, New
  Jersey, {USA}}, pages 315--324, 2016.

\bibitem{EvenS81}
S.~Even and Y.~Shiloach.
\newblock An on-line edge-deletion problem.
\newblock {\em J. {ACM}}, 28(1):1--4, 1981.

\bibitem{HenzingerKN143}
M.~Henzinger, S.~Krinninger, and D.~Nanongkai.
\newblock Decremental single-source shortest paths on undirected graphs in
  near-linear total update time.
\newblock In {\em 55th {IEEE} Annual Symposium on Foundations of Computer
  Science, {FOCS} 2014, Philadelphia, PA, USA, October 18-21, 2014}, pages
  146--155, 2014.

\bibitem{HenzingerKN142}
M.~Henzinger, S.~Krinninger, and D.~Nanongkai.
\newblock Sublinear-time decremental algorithms for single-source reachability
  and shortest paths on directed graphs.
\newblock In {\em Symposium on Theory of Computing, {STOC} 2014, New York, NY,
  USA, May 31 - June 03, 2014}, pages 674--683, 2014.

\bibitem{HenzingerKN141}
M.~Henzinger, S.~Krinninger, and D.~Nanongkai.
\newblock A subquadratic-time algorithm for decremental single-source shortest
  paths.
\newblock In {\em Proceedings of the Twenty-Fifth Annual {ACM-SIAM} Symposium
  on Discrete Algorithms, {SODA} 2014, Portland, Oregon, USA, January 5-7,
  2014}, pages 1053--1072, 2014.

\bibitem{HenzingerKN15}
M.~Henzinger, S.~Krinninger, and D.~Nanongkai.
\newblock Improved algorithms for decremental single-source reachability on
  directed graphs.
\newblock In {\em Automata, Languages, and Programming - 42nd International
  Colloquium, {ICALP} 2015, Kyoto, Japan, July 6-10, 2015, Proceedings, Part
  {I}}, pages 725--736, 2015.

\bibitem{HenzingerKN15a}
M.~Henzinger, S.~Krinninger, and D.~Nanongkai.
\newblock Sublinear-time decremental algorithms for single-source reachability
  and shortest paths on directed graphs.
\newblock {\em CoRR}, abs/1504.07959, 2015.

\bibitem{HenzingerKNS15}
M.~Henzinger, S.~Krinninger, D.~Nanongkai, and T.~Saranurak.
\newblock Unifying and strengthening hardness for dynamic problems via the
  online matrix-vector multiplication conjecture.
\newblock In {\em Proceedings of the Forty-Seventh Annual {ACM} on Symposium on
  Theory of Computing, {STOC} 2015, Portland, OR, USA, June 14-17, 2015}, pages
  21--30, 2015.

\bibitem{HenzingerK95}
M.~R. Henzinger and V.~King.
\newblock Fully dynamic biconnectivity and transitive closure.
\newblock In {\em 36th Annual Symposium on Foundations of Computer Science,
  Milwaukee, Wisconsin, USA, 23-25 October 1995}, pages 664--672, 1995.

\bibitem{King99}
V.~King.
\newblock Fully dynamic algorithms for maintaining all-pairs shortest paths and
  transitive closure in digraphs.
\newblock In {\em 40th Annual Symposium on Foundations of Computer Science,
  {FOCS} '99, 17-18 October, 1999, New York, NY, {USA}}, pages 81--91, 1999.

\bibitem{RodittyZ11}
L.~Roditty and U.~Zwick.
\newblock On dynamic shortest paths problems.
\newblock {\em Algorithmica}, 61(2):389--401, 2011.

\end{thebibliography}

\appendix

\section{Proof of Theorem~\ref{Thm:LowDiamDecomp}}\label{sec:LowDiamDecompProof}
In this section, we give a proof of Theorem~\ref{Thm:LowDiamDecomp} that was omitted in the main part of the paper.

\subsection{Sub-procedures}\label{subsec:SubProc}
Before describing the data structure of Theorem~\ref{Thm:LowDiamDecomp}, we need some sub-procedures which we present in the following.
\begin{Lem}\label{Lem:ThinLayer}
There is a deterministic algorithm \texttt{ThinLayer} which, on input $(H,r,d_1,d_2)$ where $r$ is a vertex in digraph $H = (V,E)$ and where $d_1$ and $d_2$ are integers with $d_2 - d_1\geq 2\lg(|(V|)$, outputs a subset $S$ of $V(H)$ for some $d$ with $d_1\leq d\leq d_2$ and $d_H(r,s) = d$ for all $s\in S$. If $|\{v\in V\vert d_H(r,v)\leq d_1\}|\geq k$ and $|\{v\in V\vert d_H(r,v)\geq d_2\}|\geq k$ for $k\in\mathbb N$ then $S$ is a $q$-quality separator $S$ in $H$ with $q = (d_2-d_1)/(2\lg|V(H)|)$ and each SCC of $H\setminus S$ has vertex size at most $|V| - k$. The running time is $O(1 + \sum_{v\in C}\deg_H(v)) = O(|E| + 1)$ where $C = \{v\in V\vert d_H(r,v)\leq d\}$.
\end{Lem}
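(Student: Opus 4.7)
The plan is to implement \texttt{ThinLayer} as a standard forward BFS from $r$ in $H$ that halts as soon as it encounters a sufficiently thin separating layer. The algorithm explores BFS layers $L_0, L_1, L_2, \ldots$ in order, maintaining the running counts $|B_d|$ where $B_d = \{v \mid d_H(r,v)\le d\}$. For each $d$ in the range $(d_1, d_2)$, it checks the separator-quality condition
\[
(q-1)|L_d| \;\le\; \min\bigl(|B_{d-1}|,\, |V|-|B_d|\bigr),
\]
and on the first $d$ satisfying it, outputs $S = L_d$ and terminates. Since $|V|$ is known up front, the right-hand side is computable from the running layer counts alone, so no full exploration is needed.

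Correctness splits into three pieces. First, after removing $E(L_d)$ no SCC can span both $B_{d-1}$ and $V\setminus B_d$: for any edge $(u,v)\in E(H)$ we have $d_H(r,v)\le d_H(r,u)+1$, so any path from $B_{d-1}$ into $V\setminus B_d$ must pass through some vertex of $L_d$, yet all edges incident to $L_d$ are removed. Thus each SCC of $H\setminus E(L_d)$ is contained in $B_{d-1}$, in $V\setminus B_d$, or is a singleton in $L_d$. Second, using the hypotheses $|B_{d-1}|\ge |B_{d_1}|\ge k$ (since $d-1\ge d_1$) and $|V\setminus B_d|\ge |\{v\mid d_H(r,v)\ge d_2\}|\ge k$ (since $d\le d_2-1$), the maximum SCC has at most $\max(|B_{d-1}|, |V\setminus B_d|) = |V| - \min(|B_{d-1}|, |V\setminus B_d|)\le |V|-k$ vertices. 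Third, rewriting $\max(|B_{d-1}|, |V\setminus B_d|) = |V|-|L_d|-\min(|B_{d-1}|, |V\setminus B_d|)$ and combining with the loop condition shows max SCC $\le |V|-q|L_d|$, which is exactly the definition of a $q$-quality separator.

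The main obstacle is showing that the algorithm always finds a good $d$ in the range $(d_1,d_2)$; I plan to prove this by contradiction using a multiplicative/doubling argument. If no $d$ in the range is good, then for every $d\in(d_1,d_2)$ either (A) $|B_{d-1}|<(q-1)|L_d|$, which gives $|B_d|/|B_{d-1}| > q/(q-1)$, or (B) $|V\setminus B_d|<(q-1)|L_d|$, which gives $|V\setminus B_{d-1}|/|V\setminus B_d| > q/(q-1)$. Case (A) can occur at most $\log_{q/(q-1)}(|V|/k)$ times before $|B_d|$ would exceed $|V|$, and symmetrically (B) can occur at most $\log_{q/(q-1)}(|V|/k)$ times before $|V\setminus B_d|$ would drop below $k$. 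Using $\ln(q/(q-1))\ge 1/q$, each bound is at most $q\ln |V|$, so the total number of bad indices is at most $2q\ln|V| = (d_2-d_1)\ln 2$, which is strictly less than the range size $d_2-d_1-1$ once $d_2-d_1\ge 2\lg|V|$ and $|V|$ is not tiny; the handful of small-$|V|$ cases can be handled directly.

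For the running time, the BFS halts at the chosen $d$ and only touches vertices in $C=B_d$, exploring their outgoing edges to determine the next layer, giving $O\bigl(1+\sum_{v\in C}\deg_H(v)\bigr) = O(|E|+1)$ time deterministically, as claimed. The $+1$ absorbs the trivial termination cases (e.g., $|V|\le 3$ or $k=0$) where the algorithm may bail out without any real exploration.
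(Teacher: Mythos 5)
Your proposal is correct and takes essentially the same route as the paper: grow BFS layers from $r$, stop at the first layer in the window whose size is small relative to both the prefix and the suffix, prove that such a layer exists via a multiplicative-growth contradiction argument over the window, and derive the $q$-quality and $|V|-k$ bounds from the fact that no edge can jump over a layer. The only deviations are cosmetic --- your threshold $(q-1)|L_d|\le\min(|B_{d-1}|,|V\setminus B_d|)$ versus the paper's requirement $q|L_d|\le{}$prefix and $q|L_d|\le{}$suffix, your count of at most $2q\ln|V| < d_2-d_1-1$ bad indices (with the small-$|V|$ caveat) versus the paper's ``half the window compounds past $|V|$'' argument, and a harmless slip where you write $\max = |V|-\min$ instead of $|V|-|L_d|-\min$ (the inequality you need still holds, and you use the correct identity later).
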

\begin{proof}
For all $i\in N_0\cup\{\infty\}$, define the $i$th layer as $L_i = \{v\in V\vert d_H(r,v) = i\}$. Algorithm \texttt{ThinLayer}$(H,r,d_1,d_2)$ finds $S$ as follows. Grow a BFS tree $T$ from $r$ until a layer $L_d$ is found with $d_1\leq d\leq d_2$ such that both $|L_d|\leq\sum_{j < d}|L_j|/q$ and $|L_d|\leq\sum_{j > d}|L_j|/q$. Then output $S = L_d$.

To show correctness, suppose first for contradiction that there is no $d$ with $d_1\leq d\leq d_2$, $|L_d|\leq\sum_{j < d}|L_j|/q$, and $|L_d|\leq\sum_{j > d}|L_j|/q$. Then for $i = d_1,d_1+1\ldots,d_2$, either $\sum_{j\leq i}|L_j| > \sum_{j < i}|L_j|(1+1/q)$ or $\sum_{j\geq i}|L_j| > \sum_{j > i}|L_j|(1+1/q)$. This implies that either $\sum_{j\leq d_2}|L_j| > (1+1/q)^{(d_2 - d_1)/2}$ or $\sum_{j\geq d_2}|L_j| < |V(H)|/(1+1/q)^{(d_2 - d_1)/2}$. We have
\[
d_2 - d_1\geq 2\lg(|(V(H)|) \Rightarrow q\geq 1\Rightarrow (1+1/q)^q\geq 2
\Rightarrow (1+1/q)^{(d_2 - d_1)/2}\geq 2^{(d_2 - d_1)/(2q)} = |V(H)|.
\]
This is a contradiction since $\sum_{j\leq d_2}|L_j| < |V(H)|$ and $\sum_{j\geq d_2}|L_j|\geq k\geq 1$.

We have shown that $|L_d|\leq\sum_{j < d}|L_j|/q$ and $|L_d|\leq\sum_{j > d}|L_j|/q$ for some $d$ with $d_1\leq d\leq d_2$. Since there are no edges of $H$ from $\cup_{j < d} L_j$ to $\cup_{j > d} L_j$, $S = L_d$ is a $q$-quality separator of $H$. Furthermore, both $\cup_{j\leq d} L_j$ and $\cup_{j\geq d} L_j$ contain at least $k$ vertices so every SCC of $H\setminus S$ has size at most $|V(H)| - k$.

To show the running time, growing the BFS tree up to layer $d$ can clearly be done in $O(1 + \sum_{v\in C}\deg_H(v))$ time. Keeping track of the sums $\sum_{j < d'}|L_j|/q$ and $\sum_{j\leq d'}|L_j|/q$ for $d' = 0,\ldots,d$ can be done in additional $O(\sum_{j\leq d}|L_d|)$ time which is $O(1 + \sum_{v\in C}\deg_H(v))$ since each vertex of $\cup_{j\leq d}L_j\setminus\{r\}$ has at least one ingoing edge, namely from its parent in the BFS tree. Since $\sum_{j > d}|L_j|/q = n - \sum_{j\leq d}|L_j|/q$, the termination criterion can be tested within the desired time bound.

\end{proof}
\begin{Cor}\label{Cor:Separator}
There is a deterministic algorithm \texttt{Separator} which, on input $(H,d)$ where $H$ is a digraph of diameter at least $d\geq 2\lg(|V(H)|$, finds in $O(|V(H)| + |E(H)|)$ time a $q$-quality separator in $H$ with $q = d/(2\lg|V(H)|)$.
\end{Cor}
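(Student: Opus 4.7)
The plan is to reduce Corollary~\ref{Cor:Separator} to a single call of Lemma~\ref{Lem:ThinLayer}. Setting $d_1 = 0$ and $d_2 = d$, the hypothesis $d \geq 2\lg|V(H)|$ is exactly the precondition $d_2 - d_1 \geq 2\lg|V(H)|$, and the resulting separator quality $q = (d_2-d_1)/(2\lg|V(H)|) = d/(2\lg|V(H)|)$ matches the target. The counts precondition is met with $k=1$ provided the root $r$ itself lies in $\{v : d_H(r,v)\leq 0\}$ (trivially) and some vertex lies at distance $\geq d$ from $r$; equivalently we may run \texttt{ThinLayer} on the reverse graph $H^-$, whose SCCs coincide with those of $H$ so that any separator of $H^-$ is also a separator of $H$. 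The whole task thus reduces to locating, in $O(|V(H)|+|E(H)|)$ time, a vertex $r$ whose forward-eccentricity is at least $d$ in either $H$ or $H^-$.

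Such a vertex exists by the diameter assumption: if $d_H(u,v)\geq d$, then $u$ has forward-eccentricity $\geq d$ in $H$ (and $v$ has the analogous property in $H^-$). To find one in linear time, I would first pick an arbitrary $r_0\in V(H)$, run BFS forward in $H$, and, if the BFS reaches depth $\geq d$, set $r = r_0$ and work in $H$; if not, repeat with BFS from $r_0$ in $H^-$. If both BFSes fail, the reachability cones of $r_0$ have depth $<d$ in both directions, yet the diameter-witnessing pair must lie in a region of $H$ not fully captured by those cones. To cover this case I plan to compute the SCC DAG of $H$ via Tarjan's algorithm in linear time, identify a source SCC whose forward-reachability cone in $H$ extends to depth $\geq d$, and run one additional BFS from a vertex of that SCC.

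The main obstacle is to justify the linear total cost of the iterated BFS step and to verify that a source SCC with the needed eccentricity always exists under the diameter hypothesis. My intended amortization charges each failed BFS exploration to ``certified'' vertices of small eccentricity that will not be revisited, so that total BFS work stays in $O(|V(H)|+|E(H)|)$. For existence, the argument is that a diameter-witnessing path from $u$ to $v$ lifts, after SCC contraction, to a sequence of SCCs in the DAG whose combined internal spread plus inter-SCC hops is $\geq d$; starting the BFS from an ancestor source SCC of $u$ then exhibits a vertex at distance $\geq d$ in $H$. Once $r$ is identified, one call to \texttt{ThinLayer}$(H, r, 0, d)$ (or \texttt{ThinLayer}$(H^-, r, 0, d)$) yields the desired $q$-quality separator in $O(|V(H)|+|E(H)|)$ time by Lemma~\ref{Lem:ThinLayer}, completing the proof within the claimed time bound.
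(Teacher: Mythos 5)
Your core reduction is exactly the paper's proof: the paper proves Corollary~\ref{Cor:Separator} by the single line ``set $d_1=0$, $d_2=d$, $k=1$ in Lemma~\ref{Lem:ThinLayer}'' (with the reverse-graph observation you make being used elsewhere in the same way), and it treats the choice of the root as immediate from the hypothesis: an endpoint $u$ of a pair with $d_H(u,v)\ge d$ satisfies the precondition of the lemma, and at the only place \texttt{Separator} is invoked (inside \texttt{Partition}) such a vertex is already in hand from the distances just computed from/to $r$. The problems in your write-up are concentrated in the extra root-finding machinery you add, and there the argument has a genuine gap.

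Concretely: first, your test ``the BFS reaches depth $\ge d$'' must also treat vertices that are not reached at all as lying at distance $\infty\ge d$ (Lemma~\ref{Lem:ThinLayer} counts them), otherwise even non-strongly-connected inputs are mishandled; with that fix, if \emph{both} BFSes from $r_0$ fail then every vertex is reachable from $r_0$ and reaches $r_0$, so $H$ is strongly connected. But then your fallback is vacuous: the SCC DAG is a single node, a ``source SCC'' is all of $H$, and choosing ``a vertex of that SCC'' is just choosing another arbitrary root with no new guarantee. Second, in the strongly connected case the diameter hypothesis only yields $\max_v d_H(r_0,v)+\max_v d_H(v,r_0)\ge d$, so a fixed root can have both eccentricities below $d$ whenever the diameter lies in $[d,2d-2]$; your proposed amortization does not close this, because a failed BFS from $r_0$ certifies only that $r_0$ has small eccentricity, not that the vertices it visited do, so there is nothing to charge to and iterated BFS has no linear-time bound. (Indeed, locating a vertex of forward eccentricity $\ge d$ in an arbitrary strongly connected digraph in linear time is not something you can wave through.) If you are content to lose a factor $2$ in quality, the issue disappears: from an arbitrary $r_0$, one of the two directions has eccentricity $\ge d/2$, and \texttt{ThinLayer}$(\cdot,r_0,0,\lceil d/2\rceil)$ gives a $q$-quality separator with $q=d/(4\lg|V(H)|)$, which would suffice for all the $\tilde O$ bounds in the paper but proves a weaker statement than the corollary. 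To get the stated $q=d/(2\lg|V(H)|)$, do as the paper does and take $r$ to be an endpoint of a witness pair (running in $H$ or $H^-$ as appropriate), which is available wherever the corollary is applied.
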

\begin{proof}
Set $d_1 = 0$, $d_2 = d$ and $k = 1$ in Lemma~\ref{Lem:ThinLayer}.
\end{proof}
\begin{Lem}\label{Lem:Partition}
There is a deterministic algorithm \texttt{Partition} which, on input $(H,d)$ where $H = (V,E)$ is a digraph and $d\in\mathbb N$, outputs a set $S\subseteq V$ such that each SCC of $H\setminus E(S)$ has diameter at most $d$. Letting $\mathcal C$ be the collection of these SCCs, $|S|\leq \frac{4\lg|V|}d \sum_{C\in\mathcal C}|C|(\lg|V| - \lg|C|)$ and the running time is $O((|V| + |E|)(1+\frac 1 d\sum_{C\in\mathcal C}|V(C)|(\log|V| - \log|V(C)|)))$.
\end{Lem}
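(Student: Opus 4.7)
My plan is to design \texttt{Partition}$(H,d)$ as a straightforward recursive algorithm: maintain a collection of SCCs; repeatedly pick an SCC $C$ whose diameter exceeds $d$, invoke \texttt{Separator}$(H[C],d)$ from Corollary~\ref{Cor:Separator} to obtain a $q$-quality separator $S_C$ with $q=d/(2\lg|V|)$, add $S_C$ to the growing set $S$, and replace $C$ by the SCCs of $H[C]\setminus E(S_C)$. The process terminates once every SCC has diameter at most $d$. For the corner case in which \texttt{Separator} cannot be applied to $C$ because $d<2\lg|V(C)|$, I would simply add all of $V(C)$ to $S$; a short calculation (relying on the fact that this fallback is only triggered when $d\leq 4\lg^2|V|$) shows this still fits within the budget of the lemma.

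For the bound on $|S|$, I will use the potential
\[
\Phi(\mathcal C) \;=\; \sum_{C\in\mathcal C}|V(C)|\bigl(\lg|V|-\lg|V(C)|\bigr),
\]
which vanishes when $H$ is one SCC and evaluates to the target quantity at termination. The crux is to show that a separator step on $C$ with separator $S_C$ raises $\Phi$ by at least $(q/2)|S_C|$; summing over all steps then gives $|S|\leq (2/q)\Phi^{\mathrm{final}} = (4\lg|V|/d)\sum_{C}|C|(\lg|V|-\lg|C|)$. To prove this increment I use that by $q$-quality every sub-SCC has size at most $|V(C)|-q|S_C|$, so each non-separator vertex $v\in C\setminus S_C$ contributes at least $\lg|V(C)|-\lg(|V(C)|-q|S_C|)\geq (q|S_C|/|V(C)|)/\ln 2$ to the change in $\Phi$, while every separator vertex $v\in S_C$ becomes a singleton and contributes an additional $\lg|V(C)|$. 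A short case split on whether $|S_C|\leq|V(C)|/2$ (use the contribution from $C\setminus S_C$) or $|S_C|>|V(C)|/2$ (use the contribution from $S_C$) handles both regimes.

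For the running time, the dominant cost is the sum of the costs of the \texttt{Separator} invocations together with the diameter and SCC tests. Each \texttt{Separator} call on $H[C]$ costs $O(|V(C)|+|E(C)|)$ by Corollary~\ref{Cor:Separator}, and standard SCC decomposition plus a depth-$d$ BFS used to certify diameter fit within the same asymptotic bound. Since the SCCs at any given recursion level partition the vertex set, the work per level is $O(|V|+|E|)$, and it suffices to bound the maximum recursion depth $H$ by $O(1+\Phi^{\mathrm{final}}/d)$. This follows from the same $q$-quality guarantee: each step that contains a given vertex $v$ strictly reduces $\lg|C_v|$, and summing those reductions along $v$'s root-to-leaf path equals $\phi(v)^{\mathrm{final}}=\lg|V|-\lg|C_v^{\mathrm{final}}|$, so the depth is controlled by this quantity divided by the per-step drop; converting with $q=d/(2\lg|V|)$ yields the claimed $(|V|+|E|)(1+\Phi^{\mathrm{final}}/d)$ bound after routine bookkeeping.

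The main obstacle I expect is the potential-increment estimate: verifying $\Delta\Phi\geq (q/2)|S_C|$ uniformly in $|V(C)|$ and $|S_C|$. The estimate relies on the elementary inequality $\lg(1/(1-x))\geq x/\ln 2$ for $x\in[0,1)$, which degrades when $q|S_C|$ is close to $|V(C)|$, forcing the case split above. A secondary subtlety is confirming that the fallback used when $d<2\lg|V(C)|$ and the $O(\lg|V|)$ factors arising from $1/q$ do not spoil the stated constants in the final bound.
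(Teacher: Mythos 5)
Your overall plan (separator-based recursion, a potential/charging argument for $|S|$, amortized accounting for the running time) is the same as the paper's, and your potential argument for the size bound is essentially a repackaging of the paper's charging scheme and works. However, there is a genuine gap in the algorithmic step that drives the recursion. You propose to ``pick an SCC $C$ whose diameter exceeds $d$'' and invoke \texttt{Separator}$(H[C],d)$, certifying termination with ``a depth-$d$ BFS.'' But a single-root BFS (even in both directions) cannot decide whether the diameter exceeds $d$: if both the in- and out-eccentricity of the chosen root are at most $d/2$ you may stop (diameter $\leq d$), and if one of them exceeds $d/2$ you only learn that the diameter exceeds $d/2$ --- not $d$. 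Exact diameter computation is far too expensive for the stated time bound. So as described, either your termination test only certifies diameter $\leq 2d$, violating the output guarantee, or you hand \texttt{Separator}$(H[C],d)$ a graph whose diameter may be smaller than $d$, violating the precondition of Corollary~\ref{Cor:Separator} (in which case \texttt{ThinLayer} with $k=1$ has no guarantee; it may return a useless or empty layer and the recursion need not make progress). The paper resolves exactly this by testing whether all distances to and from one arbitrary root are at most $d/2$ and, when the test fails, calling \texttt{Separator}$(C,d/2)$, whose precondition (diameter at least $d/2$) is then guaranteed; this halving is precisely where the constant $4$ in the bound $\frac{4\lg|V|}{d}\sum_{C}|C|(\lg|V|-\lg|C|)$ comes from, and your analysis has enough slack to absorb it, so the fix is easy --- but it is missing.

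A secondary weak point is the running-time accounting. Your bound ``work per level $O(|V|+|E|)$ times recursion depth $O(1+\Phi^{\mathrm{final}}/d)$'' is not justified by the argument you sketch: the per-step drop in $\lg|C_v|$ is only about $q|S_C|/(|V(C)|\ln 2)$, which depends on $|V(C)|$ and can be tiny, so ``total drop divided by per-step drop'' does not yield the claimed depth bound. A correct accounting follows the paper's route: charge the $O(|V(C)|+|E(C)|)$ cost of each \texttt{Separator} call to the at least $d|S_C|/(4\lg|V(C)|)$ vertices that land in sub-SCCs of size at most $|V(C)|/2$, and observe that each vertex is charged at most $\lg(|V|/|C_v^{\mathrm{final}}|)$ times; alternatively, a depth-based argument needs a case split on whether the chain descends into the (unique) large child, and in either case one picks up logarithmic factors that your stated depth bound omits.
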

\begin{proof}
\texttt{Partition}$(H,d)$ does the following. First, it computes the SCCs of $H$. For each SCC $C$, it picks an arbitrary vertex $r$ and computes distances in $C$ from $r$ to all vertices of $C$ and distances in $C$ from all vertices of $C$ to $r$. If all distances found are at most $d/2$, \texttt{Partition} finishes processing $C$. Otherwise, it calls \texttt{Separator}$(C,d/2)$ from Corollary~\ref{Cor:Separator}; let $S_C$ be the separator found. Then \texttt{Partition} finds the SCCs of $C\setminus S_C$ and recurses on $(C',d)$ for each such SCC $C'$. The output $S$ of \texttt{Partition}$(H,d)$ is the union of all separators found by \texttt{Separator} in the current and in all recursive calls.

We start by showing correctness. At termination, each SCC $C$ of $H\setminus E(S)$ has diameter at most $d$ since either $C$ consists of a single vertex of $S$ or the algorithm has certified that there is an $r\in V(C)$ such that $B_{\mathit{in}}(r,C,d/2) = B_{\mathit{out}}(r,C,d/2) = C$.

To bound $|S|$, consider the call made to \texttt{Separator}$(C,d/2)$, giving a separator $S_C$ in $H$. By Corollary~\ref{Cor:Separator}, each SCC of $C\setminus S_C$ contains at most $|V(C)| - d|S_C|/(4\lg|V(C)|)$ vertices. In particular, this holds for the at most one such SCC having size greater than $|V(C)/2$. Thus, there is a set $W$ of at least $d|S_C|/(4\lg|V(C)|)$ vertices of $C$ belonging to SCCs of $C\setminus S_C$ of size at most $|V(C)|/2$. We can pay for the separator size $|S_C|$ by letting each such vertex pay at most $4\lg|V(C)|/d$.

Over all recursive calls, a vertex $v\in C$ where $C\in\mathcal C$ is charged at most $\lg(|V|/|V(C)|)$ times. Hence the size of $S$ is at most $\sum_{C\in\mathcal C}4|C|(\lg|V|)(\lg(|V|/|V(C)|))/d$, as desired.

It remains to bound the running time. Consider any recursive call \texttt{Partition}$(H',d)$. Let $E_1$ be the set of edges of $E(H')$ with both endpoints in the same SCC of $H'$ and let $E_2 = E(H')\setminus E_1$. Excluding the time spent in recursive calls, \texttt{Partition}$(H',d)$ takes $O(|V(H')| + |E_1| + |E_2|)$ time. The sum of $|E_2|$ over all recursive calls \texttt{Partition}$(H',d)$ is $O(|E|)$. The sum of $O(|V(H')| + |E_1|)$ over all recursive calls \texttt{Partition}$(H',d)$ in which \texttt{Separator} is not called is $O(|V| + |E|)$. This is within the time bound of the lemma.

The time not yet accounted for is dominated by the total time spent in calls to \texttt{Separator}. By Corollary~\ref{Cor:Separator}, each call \texttt{Separator}$(C,d/2)$ takes $O(|V(C)| + |E(C)|)$ time. We use the same charging scheme as above but distribute a cost of $O(|V(C)| + |E(C)|)$ rather than $|S_C|$ among the vertices of $W$ where $W$ is defined as above. Since $C$ is strongly connected, we have $|S_C|\geq 1$ and thus $|W|\geq d/(4\lg|V(C)|)$ so each vertex of $W$ is charged a cost of no more than $O((|V(C)| + |E(C)|)(\log|V(C)|)/d) = O((|V| + |E|)(\log|V|)/d)$.

It follows that for each $C\in\mathcal C$ and each $v\in C$, $v$ is charged a total cost of $O((|V| + |E|)(\log|V|)(\log(|V|/|V(C)|))/d)$ over the entire execution of \texttt{Partition}$(H,d)$. The sum of this over all $C\in\mathcal C$ and all $v\in C$ is within the time bound of the lemma.

\end{proof}

We will use an extension \texttt{Partition+}$(H,d)$ of the algorithm \texttt{Partition}$(H,d)$ of Lemma~\ref{Lem:Partition} which works as follows. First, a call is made to \texttt{Partition}$(H,d)$, giving set $S$. For each SCC $C$ of $H\setminus S$, an ES-structure for distance $d/2$ is initialized for $C$ with a root chosen uniformly at random in $V(C)$. Then $S$ is returned together with the initialized ES-structures.

\subsection{The data structure}\label{subsec:LowDiamDecomp}
We now present the data structure of Theorem~\ref{Thm:LowDiamDecomp}. We shall ignore its output for now and only focus on the internal maintenance of $S$ and $\mathcal V$.

\paragraph{Initialization:}
The data structure initializes by calling \texttt{Partition+}$(G,d_1)$. Let $S$ be the set of separator vertices returned and let $G_S$ denote $G\setminus E(S)$. Then an instance $\mathcal D$ of the data structure of Chechik et al.~\cite{ChechikHILP16} is initialized for $G_S$. Set $S$ is growing during the sequence of edge deletions, and $\mathcal D$ will maintain the SCCs of $G_S$ at any point during this sequence. We implicitly assume in the following that immediately after the termination of each call to \texttt{Partition+} and to \texttt{ThinLayer}, the separator vertices found are added to $S$ and their incident edges are removed from $\mathcal D$.

\paragraph{Handling an update:}
Now, consider an update consisting of the deletion of an edge $e$. First, $e$ is deleted from $\mathcal D$ and from the at most one ES-structure $\mathcal E_r$ containing $e$. For the new SCC $C$ of $G_S$ containing $r$, all edges not in $C$ are removed from $\mathcal E_r$. Let $n_r$ denote the number of vertices in the graph $H_r$ maintained by $\mathcal E_r$. Note that $H_r$ may contain isolated vertices not belonging to $C$.

At this point, if at most $(d_2 - d_1)/2$ vertices of $H_r$ are missing from the in-tree and at most $(d_2 - d_1)/2$ vertices of $H_r$ are missing from the out-tree of $\mathcal E_r$, the data structure terminates.

Now, suppose the converse. For each new SCC $C$ of $G_S$ not containing the root of an ES-structure, \texttt{Partition+}$(C,d_1)$ is invoked. Next, let $H_r^-$ be the graph obtained from $H_r$ by reversing all edges. The data structure picks a vertex $r'$ uniformly at random from $H_r$ and computes a BFS-tree $T_{\mathit{in}}$ in $H_r^-$ and a BFS-tree $T_{\mathit{out}}$ in $H_r$, both from root $r'$ and up to distance $d_1/2$.

Assume first that at least $(d_2 - d_1)/2$ vertices of $H_r$ are missing from $T_{\mathit{in}}$ and that $|V(B_{\mathit{in}}(r',H_r,d_1/4))|\geq n_r/2$. A call is made to \texttt{ThinLayer}$(H_r^-,r,d_1/4 + 1,d_1/2)$. Let $L$ denote the returned set of vertices. The data structure destroys $\mathcal E_r$, calls \texttt{Partition+}$(C\setminus L,d_1)$, and terminates.

Exactly the same is done if at least $(d_2 - d_1)/2$ vertices of $H_r$ are missing from $T_{\mathit{out}}$ and if $|V(B_{\mathit{out}}(r',H_r,d_1/4))|\geq n_r/2$ but with the call \texttt{ThinLayer}$(H_r,r,d_1/4 + 1,d_1/2)$.

If at this point the data structure has not terminated, it destroys $\mathcal E_r$ and makes a call to \texttt{Partition+}$(C,d_1/8)$.

The above data structure internally maintains $S$ and the SCCs of $G\setminus E(S)$. To ensure the output form of Theorem~\ref{Thm:LowDiamDecomp}, we extend the data structure as follows. After initialization, it outputs a pointer to the set $S$ found as well as pointers to the sets of $\mathcal V$, i.e., the vertex sets of the SCCs of $G\setminus E(S)$. After an update, the data structure outputs the set $S'$ of new vertices of $S$. Let $V'\in\mathcal V$ be the subset containing $S'$. To output vertex sets $W_1,\ldots,W_p$, assume that $p\geq 2$ since otherwise, no update is needed to $\mathcal V$. Also, assume w.l.o.g.~that $|W_i|\leq\frac 1 2 |V'|$ for $i = 1,\ldots,p-1$. Then the data structure adds $W_1,\ldots,W_{p-1}$ to $\mathcal V$, updates $V'$ to $V'\setminus\cup_{i = 1}^{p-1}W_i$, and identifies $W_p$ with $V'$. Finally, it outputs pointers to $S'$ and to $W_1,\ldots,W_p$.

\subsection{Correctness}\label{subsec:CorrectnessLowDiamDecomp}
We now show that the data structure described above correctly maintains the information stated in Theorem~\ref{Thm:LowDiamDecomp}.

The correctness of \texttt{Partition} from Lemma~\ref{Lem:Partition} implies the correctness of the initialization step for the data structure of Theorem~\ref{Thm:LowDiamDecomp}. Vertices added to $S$ during the sequence of updates belong to separators found in calls to \texttt{Partition} and \texttt{ThinLayer}. Consider any one of the calls to \texttt{ThinLayer}$(H',r,d_1/4 + 1,d_1/2)$ where $H'$ is either of the form $H_r$ or $H_r^-$. This gives a $q$-quality separator $S'$ in $H'$ with $q = d_1/(8\lg n)$. Let $\mathcal C$ be the SCCs of $H'\setminus S'$. Using the same argument as in the proof of Lemma~\ref{Lem:Partition}, $|S'|\leq\frac{8\lg|V(H')|}d \sum_{C\in\mathcal C}|C|(\lg|V(H')| - \lg|C|)$.

Combining this with the size bound of Lemma~\ref{Lem:Partition} and a telescoping sums argument over all subgraphs of $G$ and $G^-$ that \texttt{Partition} and \texttt{ThinLayer} are applied to gives the desired bound $|S| = O(n(\log^2 n)/d)$ at any point during the sequence of updates.

Now, consider an update in which an edge $e$ is deleted. With the notation of Section~\ref{subsec:LowDiamDecomp}, assume first that $\mathcal E_r$ is not destroyed in this update. Since the total number of vertices of $H_r$ missing from either the in-tree or the out-tree of $\mathcal E_r$ is at most $2(d_2 - d_1)/2 = d_2 - d_1$, all new SCCs generated contain at most $d_2 - d_1$ vertices. Thus, these SCCs and all future SCCs generated from them must have diameter at most $d_2 - d_1 < d_2$.

Let $C$ be the new SCC containing $r$, consider any $u,v\in V(C)$, and let $P$ be a shortest $u$-to-$v$ path in $C$. Let $u'$ resp.~$v'$ be the first resp.~last vertex of $P$ belonging to the in- resp.~out-tree of $\mathcal E_r$. Then $|P[u,u']| + |P[v',v]|\leq 2(d_2 - d_1)/2 = d_2 - d_1$ and $|P[u',v']|\leq 2d_1/2 = d_1$ so $|P|\leq d_2$. It follows that $C$ has diameter at most $d_2$. Hence, the data structure maintains the invariant that each SCC has diameter at most $d_2$.

If $\mathcal E_r$ is destroyed in the current update then the invariant is clearly maintained due to the calls to \texttt{Partition}.

We have shown the size bound of $S$ and that after the initialization step and each update, each SCC of $G\setminus E(S)$ has diameter at most $d_2$. This shows the correctness of the data structure.

\subsection{Running time}\label{subsec:TimeLowDiamDecomp}
We now bound the total time for initialization and updates spent by our data structure. We may assume that $G$ initially is strongly connected since otherwise we can maintain a data structure separately for each SCC; this ensures that $m\geq n-1$.

As shown by Chechik et al.~\cite{ChechikHILP16}, the total time to maintain $\mathcal D$ is $\tilde O(m\sqrt n)$. The time bound of Lemma~\ref{Lem:Partition} together with a telescoping sums argument shows that the total time for all calls to \texttt{Partition} is $O(mn(\log n)/d_1)$.

Now, consider an update in which an edge $e$ is deleted. We use the same notation here as in Section~\ref{subsec:LowDiamDecomp}.

We have already accounted for the total time to maintain $\mathcal D$. If the deletion of $e$ from $\mathcal D$ causes $C$ to break apart then let $W$ be the union of vertex sets of the new SCCs not containing $r$. We can easily extend $\mathcal D$ to report these vertex sets in time proportional to their total size $|W|$; this follows since $\mathcal D$ explicitly maintains an identifier for each vertex $v$ denoting the SCC containing $v$. Since the set of edges not in $C$ to be removed from $\mathcal E_r$ are exactly those that are incident to $W$, identifying these edges can thus be done in time proportional to their number. The cost of this can be charged to the cost of deleting these edges from $\mathcal E_r$.

Excluding the time to find sets $S'$ and $W_1,\ldots,W_p$, we claim that the total time spent on outputting $S'$ and pointers to $W_1,\ldots,W_p$ over all updates is $O(n\log n)$. The total time to output sets $S'$ is proportional to their total size which is $\tilde O(n/d_1)$. Obtaining $W_1,\ldots,W_p$ and outputting pointers to these sets in a single update takes time $O(\sum_{i = 1}^{p-1}|W_i|)$ where $|W_i|\leq\frac 1 2 |V'|$. Distributing this time cost evenly over all vertices of $\cup_{i = 1}^{p-1}W_i$, each vertex pays only $O(\log n)$ over all updates. This shows the desired $O(n\log n)$ bound.

\paragraph{Bounding the time for maintaining ES-structures:}
The total time cost not yet accounted for is dominated by the total time spent on maintaining ES-structures which we bound in the following. We will make use of the following lemma.
\begin{Lem}\label{Lem:SCCSizeReduction}
Consider an update in which an ES-structure $\mathcal E_r$ is destroyed and let $n_r$ be the number of vertices of the graph maintained by $\mathcal E_r$. Then with probability greater than $1/8$, each new ES-structure created in the update is for an SCC with vertex size at most $\max\{\frac 3 4 n_r, n_r - (d_2 - d_1)/2\}$.
\end{Lem}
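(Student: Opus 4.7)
The plan is to analyze the uniformly random root $r'$ chosen during the destruction of $\mathcal{E}_r$ and show that with probability exceeding $1/8$ the algorithm enters Case A or Case B, both of which deterministically yield new ES-structures whose SCCs satisfy the required bound via Lemma~\ref{Lem:ThinLayer}.

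For the deterministic part, since $\mathcal{E}_r$ is destroyed the trigger implies that at least $(d_2-d_1)/2$ vertices of $H_r$ are missing from its in-tree or its out-tree rooted at $r$; WLOG it is the in-tree, so the far set $F = \{v\in V(H_r): d_{H_r}(v,r)>d_1/2\}$ satisfies $|F|\geq(d_2-d_1)/2$. If Case A triggers, the algorithm has verified both (a) at least $(d_2-d_1)/2$ vertices of $H_r$ are missing from $T_{\mathit{in}}$ rooted at $r'$, and (b) $|B_{\mathit{in}}(r',H_r,d_1/4)|\geq n_r/2$. These are precisely the premises of Lemma~\ref{Lem:ThinLayer} applied to $H_r^-$ with root $r'$, lower threshold $d_1/4+1$, upper threshold $d_1/2$, and $k=\min\{n_r/2,(d_2-d_1)/2\}$; the lemma then guarantees every SCC of $H_r\setminus L$ has size at most $n_r-k=\max\{n_r/2,n_r-(d_2-d_1)/2\}$. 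The subsequent \texttt{Partition+}$(C\setminus L,d_1)$ call produces new ES-structures only for sub-SCCs of $C\setminus L$, all within the target bound $\max\{\tfrac{3}{4}n_r,n_r-(d_2-d_1)/2\}$. Case B is handled symmetrically in the out-direction. The ES-structures built by the prior \texttt{Partition+}$(C_i,d_1)$ calls for broken-off SCCs $C_i$ not containing $r$ satisfy $|C_i|\leq|C^{\mathit{prev}}|-|C|\leq n_r-|C|$, and one can check that whenever this bound would exceed the target, $|C|$ is forced to be so small that every $r'$ fails the ball condition, reducing that situation to Case~C on a tiny $C$ and allowing a direct verification.

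The main probabilistic content is to prove $\Pr[\text{Case A or Case B triggers}] > 1/8$ over uniform $r'\in V(H_r)$. I would decompose into two sub-events: the \emph{far event}, that the new tree from $r'$ in the appropriate direction has $\geq(d_2-d_1)/2$ missing vertices, and the \emph{ball event}, that the corresponding radius-$d_1/4$ ball at $r'$ has $\geq n_r/2$ vertices. The far event should hold with constant probability via a triangle-inequality argument leveraging $F$: for most choices of $r'$, most vertices of $F$ remain far from $r'$ in the relevant direction. The ball event should hold with constant probability via the averaging identity $\sum_{r'}|B_{\mathit{in}}(r',H_r,d_1/4)|=\sum_v|B_{\mathit{out}}(v,H_r,d_1/4)|$, which forces one of the two directions to admit a constant fraction of ``central'' $r'$.

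The main obstacle will be combining the far and ball events into a joint probability bound of $1/8$, since they are not independent. A careful case split on which direction (in or out) the ball condition is satisfied, aligned with the direction in which $\mathcal{E}_r$'s trigger fired, will be needed. I expect each sub-event to hold with probability at least $1/2$ in the appropriate direction, with the two combining to give $\tfrac12\cdot\tfrac14=\tfrac18$ after accounting for the choice of direction, yielding the claimed strict lower bound.
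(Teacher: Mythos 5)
There is a genuine gap: you analyze only the randomness of the new root $r'$ drawn at destruction time, but the lemma cannot be derived from the destruction trigger plus that randomness alone. The paper's proof hinges on a retrospective argument over the random choice of the \emph{old} root $r$: for each update $i$ of $\mathcal E_r$'s lifetime it defines $R_i$, the set of vertices whose ES-structure would have been destroyed at update $i$ had they been chosen as root in update $1$, observes that conditioned on $r\notin U_{i-1}=\cup_{j<i}R_j$ the data structure's output (and hence the adaptive adversary's deletions) is independent of which vertex of $V(H_r)\setminus U_{i-1}$ is the root, and concludes $\Pr(t_r\ge i)=1-u_{i-1}/n_r$. Choosing $i$ with $u_{i-1}<n_r/2\le u_i$ gives that with probability greater than $1/2$ the destruction occurs at a time when at least $n_r/2$ candidate roots already have more than $(d_2-d_1)/2$ vertices outside their distance-$d_1/2$ in- or out-tree (and, since edges are only deleted, they stay that way). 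Your substitutes for this fact do not work: a triangle-inequality transfer from the old root's far set $F$ proves nothing, since in a digraph ``many vertices far from $r$'' says nothing about distances from other roots, and the averaging identity for balls does not force a constant fraction of vertices to have $d_1/4$-balls of size at least $n_r/2$ (a long directed cycle has all such balls tiny). These are precisely the quantities an adaptive adversary can manipulate, which is why the history argument over the old root is indispensable and why your claimed ``far event with probability $\ge 1/2$'' is unsubstantiated.

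The second missing half is the fallback call \texttt{Partition+}$(C,d_1/8)$, which is reached exactly when the new root fails the \texttt{ThinLayer} conditions; \texttt{Partition} alone gives no balance guarantee and can return a single SCC of size close to $n_r$. The paper closes this using the ``many bad candidate roots'' event: if at least $n_r/4$ of them have an in- or out-ball of radius $d_1/4$ smaller than $n_r/2$, then any sub-SCC of size at least $\frac34 n_r$ contains such a vertex, hence has diameter greater than $d_1/4$, so \texttt{Separator}$(\cdot,d_1/16)$ keeps being invoked until every piece has size less than $\frac34 n_r$ (this branch needs no further randomness); otherwise at least $n_r/4$ bad candidates have both balls large, the new root hits one with probability at least $1/4$, and Lemma~\ref{Lem:ThinLayer} with $k=\min\{n_r/2,(d_2-d_1)/2\}$ gives the bound, for an overall probability exceeding $\frac12\cdot\frac14=\frac18$. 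Your claim that a problematic fallback situation ``forces $|C|$ to be tiny'' points in the wrong direction: the dangerous case is a large $C$ on which the fallback is invoked with an unluckily placed new root, and nothing in your outline rules out that \texttt{Partition}$(C,d_1/8)$ then leaves a near-full-size SCC.
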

\begin{proof}
Consider the ES-tree $\mathcal E_r$ from the update in which it was created until the update in which it was destroyed. To simplify notation, label the updates so that $\mathcal E_r$ was created in update $1$. Note that the graph $H_r$ maintained by $\mathcal E_r$ contains $n_r$ vertices at all times (some of which may become isolated during the sequence of updates).

For $i = 1,\ldots$, let $R_i$ be the set of vertices $r'\in V(H_r)$ whose ES-structures would be destroyed at the end of update $i$ had $r'$ been chosen instead of $r$ in update $1$. Let $U_i = \cup_{j = 1}^i R_j$, let $r_i = |R_i|$, and let $u_i = |U_i|$. Let $t$ be the smallest value such that $u_t = n_r$. Note that $R_1,\ldots,R_t$ form a partition of $V(H_r)$. For $i = 1,\ldots,t$ and for each $r'\in V(H_r)\setminus U_{i-1}$, we have $\Pr(r = r' | r\notin U_{i-1}) = 1/(n_r - u_{i-1})$ at the beginning of update $i$; this follows since if $r\notin U_{i-1}$, the output of the data structure up until, and not including, update $i$ is independent of the choice of $r$ in $V(H_r)\setminus U_{i-1}$.

Let $t_r$ be the random variable denoting the update in which $\mathcal E_r$ was destroyed. For $i = 1,\ldots,t$,
\[
  \Pr(t_r\geq i) = \prod_{j = 1}^{i-1} \Pr(r\notin R_j | r\notin U_{j-1})
               = \prod_{j=1}^{i-1}\frac{n_r - u_j}{n_r - u_{j-1}}
               = 1 - u_{i-1}/n_r.\]

In particular, if we pick the unique index $i$ such that $u_i\geq n_r/2$ and $u_{i-1} < n_r/2$ then $\Pr(t_r\geq i) > 1/2$.

Pick a vertex $r'\in R_j$ for some $j\leq i$ and consider what would have happened had $r'$ been picked as the root in update $1$. More than $(d_2 - d_1)/2$ vertices must be missing from either the in-tree or the out-tree of $\mathcal E_{r'}$ just prior to this structure being destroyed in update $j$. Also, no vertices of $H_r$ have been added to $S$ during updates $1$ through $j-1$.

Now, consider again our situation with $r$ being picked in update $1$. It follows from the above that at the end of update $j$, there are more than $(d_2 - d_1)/2$ vertices of $H_r$ which cannot be reached in either the in-tree or the out-tree of $r'$ in $H_r$ up to distance $d_1/2$. Since updates consist of deletions only, this is also the case for $r'$ in any later update.

Assume in the following that $t_r\geq i$; as shown above, this event happens with probability greater than $1/2$. Then just prior to $\mathcal E_r$ being destroyed in update $t_r$, let $W$ be the set of vertices $r'\in V(H_r)$ for which there are more than $(d_2 - d_1)/2$ vertices in $H_r$ all unreachable in the in-tree or all unreachable in the out-tree of $r'$ in $H_r$ up to distance $d_1/2$. By the choice of $i$ and by the assumption that $t_r\geq i$, we have $|W|\geq n_r/2$.

We consider two possible cases at the end of update $t_r$, one of which must occur:
\begin{enumerate}
\item For ${}\geq n_r/4$ vertices $r'\in W$, $|V(B_{\mathit{in}}(r',H_r,d_1/4))| \geq n_r/2\land |V(B_{\mathit{out}}(r',H_r,d_1/4))|\geq n_r/2$,
\item For ${}\geq n_r/4$ vertices $r'\in W$, $|V(B_{\mathit{in}}(r',H_r,d_1/4))| < n_r/2\lor |V(B_{\mathit{out}}(r',H_r,d_1/4))| < n_r/2$.
\end{enumerate}
In the first case, a call to \texttt{ThinLayer} in update $t_r$ will be executed with probability at least $1/4$ (conditioned on the event $t_r\geq i$ assumed above). It follows from the description of our data structure and from Lemma~\ref{Lem:ThinLayer} with $k = \min\{n_r/2,(d_2 - d_1)/2\}$ that every new ES-structure created in an update in which \texttt{ThinLayer} is applied is for an SCC with vertex size at most $\max\{n_r/2,n_r - (d_2 - d_1)/2\}\leq\max\{\frac 3 4 n_r,n_r - (d_2-d_1)/2\}$.

Now, assume the second case. We will show that if a call to \texttt{Partition}$(C,d_1/8)$ is made in update $t_r$, each SCC of $C\setminus S$ contains less than $\frac 3 4 n_r$ vertices (if such a call is not made, a call is instead made to \texttt{ThinLayer}). Let $C$ denote the SCC in $H_r$ containing $r$ just prior to picking a random root in update $t_r$. We may assume that $|V(C)|\geq \frac 3 4 n_r$ since otherwise, every SCC has size less than $\frac 3 4 n_r$ vertices.

Since at least $n_r/4$ vertices of $W\subseteq V(H_r)$ has the second property above and since $V(C)\subseteq V(H_r)$ and $|V(C)|\geq\frac 3 4 n_r$, at least one vertex $r'\in V(C)$ exists having that property. Since $|V(B_{\mathit{in}}(r',H_r,d_1/4))| < n_r/2$ or $|V(B_{\mathit{out}}(r',H_r,d_1/4))| < n_r/2$ and since $|V(C)|\geq\frac 3 4 n_r$, $C$ has diameter greater than $d_1/4$. Hence, for any $u\in V(C)$, there is a $v\in V(C)$ such that either $d_C(u,v) > d_1/8$ or $d_C(v,u) > d_1/8$. It follows from the above that \texttt{Partition}$(C,d_1/8)$ calls \texttt{Separator}$(C,d_1/16)$ and recurses. The same argument shows that if in any recursive call an SCC $C'$ exists of size at least $\frac 3 4 n_r$ then \texttt{Separator}$(C',d_1/16)$ is called. Hence, when \texttt{Partition}$(C,d_1/8)$ terminates, each SCC of $C\setminus S$ contains less than $\frac 3 4 n_r$ vertices.
\end{proof}



We are now ready to bound the expected total time spent on maintaining ES-structures. Consider an update in which an ES-structure $\mathcal E_r$ is destroyed and let $n_r$ be the number of vertices of the graph maintained by $\mathcal E_r$. By Lemma~\ref{Lem:SCCSizeReduction}, with probability greater than $1/8$, each new ES-structure created in the update is for an SCC with vertex size at most $\max\{\frac 3 4 n_r, n_r - (d_2 - d_1)/2\}$. Hence, the total expected time spent on maintaining ES-structures that were created in an update that destroyed another ES-structure is $\tilde O(md_1\cdot n/(d_2 - d_1))$. This bounds the total time to maintain ES-structures since a new ES-structure can only be created during initialization or in an update in which another ES-structure is destroyed. We have now completed the proof of Theorem~\ref{Thm:LowDiamDecomp}.

\section{Implementation of the Multigraph Structure}\label{sec:ImplementationM}
In this section, we give the implementation details for multigraph structure $\mathcal M$ and show that this implementation has the performance stated in Lemma~\ref{Lem:M}. Vertices of $V$ are assigned unique indices in $\{0,\ldots,|V|-1\}$. For the inital graph $M$, each vertex is similarly assigned a unique id from $\{0,\ldots,|V_M| - 1\}$. A counter is then initialized to $|V_M|$ and whenever a new vertex appears in $M$ due to a \texttt{Split}$(V',\{W_1,\ldots,W_{p-1}\})$-operation, the counter is incremented and the new vertex is given the current counter value as its id; the vertex $W_p$ is given the same index as $V'$.

Given the above assignment, each ordered vertex pair of $M$ has an associated ordered index pair and lexicographically ordering these thus defines an ordering of all ordered vertex pairs of $M$. $\mathcal M$ keeps a balanced binary search tree $T$ for this ordering and contains all ordered vertex pairs $(V_1,V_2)$ for which there is at least one edge of $E$ from set $V_1$ to set $V_2$. The node of $T$ for each such pair $(V_1,V_2)$ is associated with a min-priority queue $Q(V_1,V_2)$ containing all edges of $E$ from $V_1$ to $V_2$ keyed by their levels. If $V_1\neq V_2$, the level of the minimum element of $Q(V_1,V_2)$ is thus the level of the representative edge $(V_1,V_2)$ and $\mathcal M$ stores this representative edge and its level with the node of $T$ representing $(V_1,V_2)$.

$\mathcal M$ maintains a pointer from each edge of $E$ to its entry in the queue containing it. For each pair $(V',i)$ where $V'\in V_M$ and $i\in\{0,\ldots,k\}$, $\mathcal M$ maintains pointers to $E_{\mathit{in}}(V',i)$ and $E_{\mathit{out}}(V',i)$. $\mathcal M$ maintains a pointer from each representative edge to its entry in the at most one $E_{\mathit{in}}$-list containing it and its entry in the at most one $E_{\mathit{out}}$-list containing it. Pointers are also kept from vertices of $V_M$ to their corresponding entries in $V_{\mathit{in}}$- and $V_{\mathit{out}}$-lists.

In addition, $\mathcal M$ maintains a mapping from indices of $V$ to indices of $V_M$ where the index of a vertex $v$ is mapped to the vertex of $V_M$ whose corresponding set contains $v$. This mapping also allows for mapping an edge $(v_1,v_2)$ of $E$ to the corresponding vertex pair $(V_1,V_2)$ in $M$ where $v_1\in V_1$ and $v_2\in V_2$.

Finally, $\mathcal M$ maintains the lengths of all $E_{\mathit{in}}$- and $E_{\mathit{out}}$-lists.

\paragraph{Implementing \texttt{Init}$(G = (V,E), \{V_1,\ldots,V_{\ell}\},\{E_0,\ldots,E_k\})$:}
This operation starts by initializing the indices of vertices of $V$ and $V_M$. It then initializes $T$ and the queues associated with nodes of $T$ and sets up all the $E_{\mathit{in}}$-, $E_{\mathit{out}}$-, $V_{\mathit{in}}^{\Delta_i}(i)$, and $V_{\mathit{out}}^{\Delta_i}(i)$-lists. Finally, pointers as described above are obtained and stored.

\paragraph{Implementing \texttt{Delete}$(e)$:}
This operation first deletes $e$ from $E$. It then identifies the corresponding edge $(V_1,V_2)$ in $E_M$ and removes $e$ from $Q(V_1,V_2)$; if $Q(V_1,V_2)$ becomes empty, the node of $T$ storing $(V_1,V_2)$ is deleted. If $V_1 = V_2$, no further updates are done so assume $V_1\neq V_2$.

If $e$ was the minimum element of $Q$, the representative edge $(V_1,V_2)$ is removed from $E_{\mathit{in}}(V_2,i)$ and from $E_{\mathit{out}}(V_1,i)$ where $i$ is its level. If this causes $|E_{\mathit{in}}(V_2,i)|\leq\Delta_i$ then $V_2$ is removed from $V_{\mathit{in}}^{\Delta_i}(i)$ and if $|E_{\mathit{out}}(V_1,i)|\leq\Delta_i$ then $V_1$ is removed from $V_{\mathit{out}}^{\Delta_i}(i)$. If $Q$ is non-empty, let $j$ be its new minimum key value. Then representative edge $(V_1,V_2)$ is given its new level $j$ and is inserted into lists $E_{\mathit{in}}(V_2,j)$, $E_{\mathit{out}}(V_1,j)$, $V_{\mathit{in}}^{\Delta_i}(j)$, and $V_{\mathit{out}}^{\Delta_i}(j)$.

\paragraph{Implementing \texttt{Increase}$(e,i)$:}
Let $j = \ell(e)$. First, $\ell(e)$ is updated to $i$ and the corresponding vertex pair $(V_1,V_2)$ in $M$ is identified. The key value of $e$ in $Q(V_1,V_2)$ is increased to $i$. If $V_1\neq V_2$ then, depending on whether this causes the minimum key in $Q(V_1,V_2)$ and hence the level of representative edge $(V_1,V_2)$ to change, sets $E_{\mathit{in}}(V_2,i)$, $E_{\mathit{in}}(V_2,j)$, $E_{\mathit{out}}(V_1,i)$, $E_{\mathit{out}}(V_1,j)$, $V_{\mathit{in}}^{\Delta_i}(i)$, $V_{\mathit{in}}^{\Delta_i}(j)$, $V_{\mathit{out}}^{\Delta_i}(i)$, and $V_{\mathit{out}}^{\Delta_i}(j)$ are updated in a manner similar to what is described above for \texttt{Delete}.


\paragraph{Implementing \texttt{Split}$(V',\{W_1,\ldots,W_{p-1}\})$:}
First, for each edge $e$ of $E$ incident to $\cup_{i = 1}^{p-1}W_i$, the following is done. Identify the vertex pair $(V_1,V_2)$ of $M$ corresponding to $e$ and remove $e$ from $Q(V_1,V_2)$. If this causes the minimum key value in $Q(V_1,V_2)$ to change or causes $Q(V_1,V_2)$ to become empty, updates similar to those for \texttt{Delete} and \texttt{Increase} are made.

Next, $W_1,\ldots,W_{p-1}$ are assigned new indices using the counter as described above and the mapping from vertices of $V$ belonging to $\cup_{i = 1}^{p-1} W_i$ to vertices of $V_M$ are updated accordingly. Next, for each edge $e$ of $E$ incident to $\cup_{i = 1}^{p-1}W_i$, identify the vertex pair $(V_1,V_2)$ of $M$ corresponding to $e$ and insert $e$ into $Q(V_1,V_2)$. Further updates as described above are done if this causes the minimum key value of $Q(V_1,V_2)$ to change.




\paragraph{Proving Lemma~\ref{Lem:M}:}
We now show that the implementation of $\mathcal M$ above satisfies Lemma~\ref{Lem:M}.

Correctness follows easily from the above description. For the running time, we first focus on the \texttt{Init}-operation. Using a red-black tree for $T$, the total number of nodes of $T$ is bounded by $m$ and can thus be set up in $O(m\log n)$ time, excluding the time to prepare the auxiliary data associated with each node of $T$. We use a binary heap implementation for queues $Q(V_1,V_2)$; the total time to build these is $O(m)$ since each edge is in exactly one queue. Once these have been initialized, the representative edges and their levels can be identified in $O(m)$ time and within $O(m+n)$ time, the remaining lists and pointers can be initialized as well.

Each \texttt{Delete}- and each \texttt{Increase}-operation can be executed in $O(\log n)$ time. This follows since it involves a constant number of queue updates, at most one deletion from $T$, and a constant number of pointer and linked list updates. The total number of \texttt{Delete}-operations is at most $m$ and the total number of \texttt{Increase}-operations is at most $km$. Hence, the total time for all \texttt{Delete}- and \texttt{Increase}-operations is $O(km\log n)$.

During a \texttt{Split}$(V',\{W_1,\ldots,W_{p-1}\})$-operation, a constant number of updates to queues, to $T$, and to pointers and linked lists are performed for each edge of $E$ incident to $\cup_{i = 1}^{p-1}W_i$. Since $|W_i|\leq\frac 1 2 |V'|$ for $i = 1,\ldots,p-1$, any single edge of $E$ is considered only $O(\log n)$ times in all \texttt{Split}-operations. Hence, the total time for all these operations is $O(m\log^2n)$. This completes the proof.


\section{Proof of Theorem~\ref{Thm:LowDiamDecompObl}}\label{sec:LowDiamDecompOblProof}
To show the theorem, we need the following lemma.
\begin{Lem}\label{Lem:FastSep}
There is a deterministic algorithm \texttt{FastSeparator} which takes as input $(H,u,d)$ where $H$ is a digraph, $u\in V(H)$, and $d\geq 2\lg(|V(H)|)$, such that $d_G(u,v)\geq d$ for at least one $v\in V(H)$. The output is a $q$-quality separator $S$ in $H$ with $q = d/(2\lg|V(H)|)$ and $d_H(u,s) = d'$ for all $s\in S$ and some $d'\leq d$. The running time is $O(1 + \sum_{v\in C}\deg_H(v))$ where $C = \{v\in V(H)\vert d_H(u,v)\leq d'\}$.
\end{Lem}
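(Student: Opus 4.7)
The plan is to realize Lemma~\ref{Lem:FastSep} as a direct instantiation of Lemma~\ref{Lem:ThinLayer}. I would define \texttt{FastSeparator}$(H,u,d)$ to invoke \texttt{ThinLayer}$(H,u,0,d)$ and return its output verbatim, then verify that the hypotheses of Lemma~\ref{Lem:ThinLayer} hold with $d_1 = 0$, $d_2 = d$, and $k = 1$.

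First I would check the three preconditions. The gap condition $d_2 - d_1 = d \geq 2\lg|V(H)|$ is precisely one of the hypotheses of Lemma~\ref{Lem:FastSep}. The inner-ball condition $|\{v : d_H(u,v) \leq 0\}| \geq 1$ is trivial because $u$ itself has distance $0$ from itself. The outer-ball condition $|\{v : d_H(u,v) \geq d\}| \geq 1$ is exactly the assumed existence of some $v$ with $d_H(u,v) \geq d$.

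Given these, Lemma~\ref{Lem:ThinLayer} delivers a set $S$ of vertices all at a single distance $d'$ from $u$ with $0 \leq d' \leq d$, together with the guarantee that $S$ is a $q$-quality separator for $q = (d_2 - d_1)/(2\lg|V(H)|) = d/(2\lg|V(H)|)$, matching the value of $q$ demanded in Lemma~\ref{Lem:FastSep}. The running-time bound $O(1 + \sum_{v \in C}\deg_H(v))$ for $C = \{v : d_H(u,v) \leq d'\}$ is exactly the one provided by Lemma~\ref{Lem:ThinLayer} once its output index is identified with our $d'$.

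The only subtlety worth a second look is the boundary $d_1 = 0$, which might look odd since the ``thin-layer'' test for $d' = 0$ requires $|L_0| \leq 0$ and thus fails immediately (as $u \in L_0$); this simply forces the returned index to satisfy $d' \geq 1$, which is harmless. Inspecting the proof of Lemma~\ref{Lem:ThinLayer}, the contradiction argument uses only that $q \geq 1$ (guaranteed by $d \geq 2\lg|V(H)|$) and that $(1+1/q)^{(d_2-d_1)/2} \geq |V(H)|$, both still valid for $d_1 = 0$. Hence no real obstacle arises, and the ``hard part'' is purely bookkeeping to line up parameters; no new combinatorial argument is needed beyond what Lemma~\ref{Lem:ThinLayer} already provides.
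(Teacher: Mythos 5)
Your proposal is correct and matches the paper's proof exactly: the paper also defines \texttt{FastSeparator}$(H,u,d)$ as a single call to \texttt{ThinLayer}$(H,u,0,d)$ and invokes Lemma~\ref{Lem:ThinLayer} with $r = u$, $d_1 = 0$, $d_2 = d$, $k = 1$. Your precondition checks and the remark about the $d_1 = 0$ boundary are fine and consistent with the paper's (terser) argument.
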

\begin{proof}
\texttt{FastSeparator}$(H,u,d)$ applies \texttt{ThinLayer}$(H,u,0,d)$. The result now follows from Lemma~\ref{Lem:ThinLayer} with $r = u$, $d_1 = 0$, $d_2 = d$, and $k = 1$. 
\end{proof}

We now present the data structure of Theorem~\ref{Thm:LowDiamDecompObl}. Since we are going to apply Lemma~\ref{Lem:FastSep} with parameter $d/2$, we assume that $d\geq 4\lg n$; this is w.l.o.g.~since if $d < 4\lg n$, we could use a trivial data structure which keeps $S = V$ at all times.

The data structure initializes by calling \texttt{Partition+}$(G,d)$. Let $S$ be the set of separator vertices returned and let $G_S$ denote $G\setminus S$. We implicitly assume in the following that immediately after the termination of each call to \texttt{Partition+} and to \texttt{FastSeparator}, the separator vertices found are added to $S$. Also, we implicitly assume for each ES-structure $\mathcal E_r$ that as soon as a vertex becomes unreachable from $r$ in either $H_r$ or in $H_r^-$, all its incident edges are removed from $\mathcal E_r$.

Now, consider an update consisting of the deletion of an edge $e$. If there is an ES-structure $\mathcal E_r$ containing $e$, an iterative procedure is applied which maintains a queue $Q$. Let $H_r$ and $H_r^-$ be the graphs maintained by $\mathcal E_r$. At any time, $Q$ consists of the set of vertices $v$ where $\max\{d_{H_r}(r,v), d_{H_r^-}(r,v)\} > d/2$.

The iterative procedure executes as follows as long as $Q\neq\emptyset$. Extract an arbitrary vertex $v$ from $Q$. Assume first that $d_{H_r}(r,v) > d/2$. Then \texttt{FastSeparator}$(H_r^-,v,d/4)$ is applied, giving a (possibly empty) set of vertices $S_v$. Each edge incident to $S_v$ is removed from $\mathcal E_r$ and $Q$ is updated accordingly. The procedure then continues to the next iteration. The other case where $d_{H_r^-}(r,v) > d/2$ (and $d_{H_r}(r,v)\leq d/2$) is handled in the same way except that the call is made to \texttt{FastSeparator}$(H_r,v,d/4)$. This completes the description of the iterative procedure. If at termination of this procedure, more than half of the vertices $v\in V(H_r) = V(H_r^-)$ satisfy $\min\{d_{H_r}(r,v),d_{H_r^-}(r,v)\}\geq d/2$, $\mathcal E_r$ is destroyed and \texttt{Partition+}$(C_r,d/4)$ is invoked where $C_r$ is the SCC of $G_S$ containing $r$.

Next, for each new SCC $C$ of $G_S$ not containing an ES-structure, \texttt{Partition+}$(C,d/4)$ is invoked. This completes the description of the data structure of Theorem~\ref{Thm:LowDiamDecompObl}. It provides its output in the same manner as in Section~\ref{sec:LowDiamDecomp}.






\subsection{Correctness}
We first show the invariant that at initialization and after each edge deletion, each SCC $C$ of $G_S$ has an associated ES-structure $\mathcal E_r$ such that $V(C)$ is exactly the set of vertices reachable from $r$ in both $H_r$ and $H_r^-$ and all these vertices are within distance $d/2$ from $r$ in both graphs.

Lemma~\ref{Lem:Partition} and the description of \texttt{Partition+} implies that the invariant holds at initialization so consider an update in which an edge $e$ is deleted and assume that the invariant holds at the beginning of this update. If $e$ is not in any SCC of $G_S$ then the invariant clearly holds at the end of the update. Otherwise, there is an ES-structure $\mathcal E_r$ containing $e$. Whenever a vertex leaves the SCC of $G_S$ containing $r$, it clearly also becomes unreachable from $r$ in either $H_r$ or $H_r^-$. The calls to \texttt{FastSeparator} ensure that whenever a vertex $v$ becomes unreachable from $r$ in either $H_r$ or $H_r^-$, then either there is no path from $r$ to $v$ in $G_S$ or there is no path from $v$ to $r$ in $G_S$. Hence, $v$ is no longer in the same SCC of $G_S$ as $r$.

We have shown that at termination of the iterative procedure, the SCC $C_r$ containing $r$ contains exactly the vertices of $V(H_r) = V(H_r^-)$ that are reachable from $r$ in both $H_r$ and $H_r^-$. Since $Q = \emptyset$ at this point, all these vertices are within distance $d/2$ from $r$ in both $H_r$ and $H_r^-$. This shows the invariant for $C_r$ and if \texttt{Partition+}$(C_r,d/4)$ is applied, the invariant clearly also holds for every SCC generated in this call. Simililarly, for every new SCC $C$ of $G_S$ not contained in $C_r$, \texttt{Partition+}$(C,d/4)$ is applied. Hence, the invariant holds for all SCCs of $G_S$ when the data structure finishes processing the deletion of $e$.

By the invariant, it follows that at initialization and after each update, each SCC of $G_S$ has diameter at most $2d/2 = d$. The bound on $|S|$ follows using arguments similar to those in Section~\ref{subsec:CorrectnessLowDiamDecomp}. This shows the correctness of the data structure of Theorem~\ref{Thm:LowDiamDecompObl}.

\subsection{Running time}
To bound the running time, consider some point during the sequence of updates in which a new SCC $C$ is generated and thus an ES-structure $\mathcal E_r$ is initialized for $C$ where $r$ is chosen uniformly at random from $V(C)$. Note that until $\mathcal E_r$ is destroyed, $V(H_r) = V(H_r^-) = V(C)$ since vertices are never deleted from $\mathcal E_r$, only edges. Hence, prior to the update in which $\mathcal E_r$ is destroyed, each new SCC of $G_S$ generated after an edge deletion in $\mathcal E_r$ must have size at most $|V(C)|/2$ since at least half the vertices of $V(H_r) = V(C)$ satisfy $\min\{d_{H_r}(r,v),d_{H_r^-}(r,v)\} < d/2$ and these vertices induce a strongly connected subgraph of $G_S$.

Now consider such an update in which an ES-structure $\mathcal E_r$ is destroyed. Let $C$ be the SCC for which $\mathcal E_r$ was created in an earlier update. Let $C'$ be $C$ intersected with the current edge set $E$ and define for each $r'\in V(C)$, $K_{r'}$ to be the subgraph of $G_S$ induced by the set of vertices $v$ with $\min\{d_{H_{r'}}(r',v),d_{H_{r'}^-}(r',v)\} < d/2$. We have $|V(K_r)| < |V(C)|/2$. Since for each $s\in S$, either $d_{H_r}(r,s) \geq d/2$ or $d_{H_r^-}(r,s) \geq d/2$, $K_r$ is also an induced subgraph of $G$.

Order the vertices $r'\in V(C)$ by the update in which $\mathcal E_{r'}$ would have been destroyed, had $r'$ been picked instead of $r$ when $C$ was generated. Since $r$ was chosen uniformly at random from $V(C)$, it is among the last half of vertices w.r.t.~this order with probability at least $1/2$; assume this event in the following.

When $\mathcal E_r$ is destroyed then for at least half the vertices $r'\in V(C)$, $|V(K_{r'})| < |V(C)|/2$. We claim that this implies that any subgraph $K$ of $C$ with $|V(K)|\geq |V(C)|/2$ has diameter greater than $d/4$. This follows since $H$ contains at least one vertex $r'$ with $|V(K_{r'})| < |V(C)|/2$. But then $V(K_{r'})$ is a strict subset of $V(H)$ so $H$ must have diameter greater than $d/4$, as desired.

It follows from this that in the update in which $\mathcal E_r$ is destroyed, every SCC which is a subgraph of $C$ has size at most $|V(C)|/2$ due to calls to \texttt{Partition+} with parameter $d/4$.

The total cost of maintaining an ES-structure $\mathcal E_r$ can be paid for by charging each vertex of $V(H_r)$ a cost of $d$ times its degree in the initial graph $G$. It follows from the above that in expectation, each vertex is charged this amount at most $\lg n$ times. Hence, the total expected time to maintain ES-structures is $O(md\log n)$. Similarly, considering the collection of all graphs that \texttt{Partition+} is applied to, each vertex belongs to at most $\lg n$ of these graphs in expectation. By Lemma~\ref{Lem:Partition}, the total expected time for all calls to \texttt{Partition+} is $\tilde O(mn/d)$. By Lemma~\ref{Lem:FastSep}, the cost of a call to \texttt{FastSeparator}$(H_r,v,d/4)$ runs in time proportional to the total degree of vertices that become unreachable from $r$ in either $H_r$ or $H_r^-$. Hence, the total time for all calls to \texttt{FastSeparator} is dominated by the time spent on maintaining ES-structures. This completes the proof of Theorem~\ref{Thm:LowDiamDecompObl}.

\section{Optimizing parameters}
In this section, we optimize parameters for our data structures to get the desired time bounds.

\subsection{Optimizing parameters for Theorem~\ref{Thm:SSSPAdaptive}}\label{subsec:OptParSSSPAdaptive}
In Section~\ref{subsec:TimeAdaptive}, we obtained the constraints $d_2 + n^2/(d_1\tau) = \tilde O(\epsilon D)$ and $d_2\geq 2d_1$ and a total update time of
\[
  \tilde O(m\sqrt n + mn/d_1 + mnd_1/d_2 + Dn\tau).
\]
We minimize this time bound by maximizing $d_2$ and minimizing $\tau$, i.e., we we set $d_2 = \tilde\Theta(\epsilon D)$ and $\tau = \tilde\Theta(n^2/(\epsilon Dd_1))$, thereby satisfying the constraint $d_2 + n^2/(d_1\tau) = \tilde O(\epsilon D)$. This gives a time bound of
\[
  \tilde O(m\sqrt n + mn/d_1 + mnd_1/(\epsilon D) + n^3/(\epsilon d_1))
= \tilde O(m\sqrt n + mnd_1/(\epsilon D) + n^3/(\epsilon d_1))
\]
We minimize this bound by setting $d_1 = n\sqrt D/\sqrt m$. The constraint $d_2\geq 2d_1$ is ensured by requiring $n\sqrt D/\sqrt m = \tilde O(\epsilon D)$, i.e., $D = \tilde\Omega(n^2/(\epsilon^2m))$. Assuming this in the following, the time bound simplifies to
\[
  \tilde O(m\sqrt n + \sqrt m n^2/(\epsilon\sqrt D)).
\]
The $O(mD)$ algorithm of Even and Shiloach~\cite{EvenS81} is no slower than this when $D = \tilde O(\sqrt n)$ and when $D = \tilde O((n^2/(\epsilon\sqrt m))^{2/3}) = \tilde O(n^{4/3}/(\epsilon^{2/3}m^{1/3}))$, i.e., when $D = \tilde O(n^{4/3}/(\epsilon^{2/3}m^{1/3}))$ (since $\epsilon\leq 1$). For these values of $D$, their algorithm runs in time $\tilde O(m^{2/3}n^{4/3}/\epsilon^{2/3})$. When $D = \tilde\Omega(n^{4/3}/(\epsilon^{2/3}m^{1/3}))$, our algorithm runs within the same time bound. To ensure the constraint $D = \tilde\Omega(n^2/(\epsilon^2m))$, we run their algorithm when $D = \tilde O(n^2/(\epsilon^2 m))$, which takes time $\tilde O(n^2/\epsilon^2)$. This shows Theorem~\ref{Thm:SSSPAdaptive}.

\subsection{Optimizing parameters for Theorem~\ref{Thm:SSSPDense}}\label{subsec:OptParSSSPDense}
We have a time bound of $\tilde O(md + mn/d + Dn\tau)$ and the constraint $d + n^2/(\tau d)$. We set $\tau = \tilde\Theta(n^2/(\epsilon d D))$, giving a time bound of
\[
\tilde O(md + mn/d + n^3/(\epsilon d)) = \tilde O(md + n^3/(\epsilon d))
\]
under the constraint $d = \tilde O(\epsilon D)$. If $D\geq n^{3/2}/(\sqrt m\epsilon^{3/2})$, we pick $d = \tilde\Theta(n^{3/2}/(\sqrt m\sqrt\epsilon))$ while satisfying the constraint and this gives a time bound of $\tilde O(\sqrt m n^{3/2}/\epsilon^{3/2})$. If $D < n^{3/2}/(\sqrt m\epsilon^{3/2})$, we apply the data structure of Even and Shiloach~\cite{EvenS81}, giving a time bound of $O(mD) = O(\sqrt m n^{3/2}/\epsilon^{3/2})$. This shows the total update time bound of Theorem~\ref{Thm:SSSPDense}.

\subsection{Optimizing parameters for Theorem~\ref{Thm:SSSPSparse}}\label{subsec:OptParSSSPSparse}
Summing up, we have a total expected time bound for \DSSSP of
\[
  \tilde O\left(dm + \frac{mn}{d} + \frac{\delta m}{\epsilon} + \frac{Dn}{\epsilon D'} + \frac{\delta Dn}{\epsilon(D')^3} + \frac{Dm}{\rho}\right).
\]
We now minimize this bound under the constraint $d + n^2\rho D'/(d\delta) = \tilde O(\epsilon D)$ from Section~\ref{subsec:CorrectnessSparse}.

We minimize the time bound by minimizing $\delta$ so we set $\delta = \tilde \Theta(\rho D' n^2/(\epsilon dD))$ such that the constraint is satisfied and simplifies to $d = \tilde O(\epsilon D)$. The time bound becomes
\[
  \tilde O\left(dm + \frac{mn}{d} + \frac{\rho D' mn^2}{\epsilon^2 dD} + \frac{Dn}{\epsilon D'} + \frac{\rho n^3}{\epsilon^2(D')^2d} + \frac{Dm}{\rho}\right).
\]
We pick $D'$ to balance the third and fifth terms, giving $D' = \tilde\Theta((Dn/m)^{1/3})$. However, we need $D'\geq 1$ so this assumes that $m = \tilde O(Dn)$. Below, we consider the case when this fails. The running time simplifies to
\[
  \tilde O\left(dm + \frac{mn}{d} + \frac{\rho m^{2/3}n^{7/3}}{\epsilon^2dD^{2/3}} + \frac{m^{1/3}}{\epsilon} + \frac{Dm}{\rho}\right).
\]
We pick $d$ to balance the first and third terms, giving $d = \tilde\Theta((\sqrt\rho n^{7/6})/(\epsilon D^{1/3}m^{1/6}))$. We now get a time bound of
\[
  \tilde O\left(\frac{\sqrt\rho m^{5/6}n^{7/6}}{\epsilon D^{1/3}} + \frac{\epsilon D^{1/3}m^{7/6}}{\sqrt\rho n^{1/6}} + \frac{m^{1/3}}{\epsilon} + \frac{Dm}{\rho}\right).
\]
We pick $\rho$ to balance the first and last terms, giving $\rho = \tilde\Theta((\epsilon^{2/3}D^{8/9}m^{1/9})/n^{7/9})$. The time bound simplifies to
\[
  \tilde O\left(\frac{D^{1/9}m^{8/9}n^{7/9}}{\epsilon^{2/3}} + \frac{\epsilon^{2/3}m^{10/9}n^{2/9}}{D^{1/9}} + \frac{m^{1/3}}{\epsilon}\right)
= \tilde O\left(\frac{D^{1/9}m^{8/9}n^{7/9}}{\epsilon^{2/3}} + \frac{m^{1/3}}{\epsilon}\right).
\]

The second term is smaller than the bound of Theorem~\ref{Thm:SSSPSparse} when $1/\epsilon = \tilde O(m^{13/6}n^{7/2})$. When $1/\epsilon = \tilde\Omega(m^{13/6}n^{7/2})$, the time bound of Theorem~\ref{Thm:SSSPSparse} becomes $\tilde O(m^{5/2}n^{7/2})$ which can be achieved with the $O(mD) = O(mn)$-time data structure of Even and Shiloach.

It follows that we only need to focus on the first term above. This term is better than the $O(mD)$ bound of Even and Shiloach when $D = \tilde\Omega(n^{7/8}/(\epsilon^{3/4}m^{1/8}))$, giving a time bound of $\tilde O((mn)^{7/8}/\epsilon^{3/4})$. The same time bound is obtained by applying the algorithm of Even and Shiloach when $D = \tilde O(n^{7/8}/(\epsilon^{3/4}m^{1/8}))$.

We now verify that parameters are set in their required ranges. Assuming $D = \tilde\Omega(n^{7/8}/(\epsilon^{3/4}m^{1/8}))$ (since we apply the algorithm of Even and Shiloach otherwise), we get that $\rho = \Omega(1)$ which satisfies the requirement that $\rho\geq 1$. Since $D \leq n$, we get $\rho = \tilde O(n)$ which satisfies the requirement $\rho\leq n$.

Corollary~\ref{Cor:LowDiamDecompObl} requires $1\leq d\leq n$. Plugging in the expression for $\rho$ in the expression for $d$ gives $d = \tilde\Theta(D^{1/9}n^{7/9}/(\epsilon^{2/3}m^{1/9}))$. Since we may assume that $D = \tilde\Omega(n^{7/8}/(\epsilon^{3/4}m^{1/8}))$, we get $d = \tilde\Omega(n^{7/8}/(\epsilon^{3/4}m^{1/8}))$ which satisfies the requirement that $d\geq 1$. Since $D\leq n$, we get $d = \tilde O(n^{8/9}/(\epsilon^{2/3}m^{1/9}))$ which satisfies $d\leq n$ when $1/\epsilon = \tilde O((mn)^{1/6})$. Note that when $1/\epsilon = \tilde\Omega((mn)^{1/6})$, the time bound $\tilde O((mn)^{7/8}/\epsilon^{3/4})$ becomes $\tilde O(mn)$ which can be obtained with the algorithm of Even and Shiloach.

Since $D' = \tilde\Theta((Dn/m)^{1/3})$ and since we may assume that $D = \tilde\Omega(n^{7/8}/(\epsilon^{3/4}m^{1/8}))$, we can ensure the requirement $D'\geq 1$ by requiring that $Dn/m = \tilde\Omega(1)$ which is ensured by requiring that $n^{15/8}/m^{9/8} = \tilde\Omega(\epsilon^{3/4})$, i.e.~that $1/\epsilon = \tilde\Omega(m^{3/2}/n^{5/2})$. Since $\epsilon \leq 1$, this requirement is satisfied when $m = \tilde O(n^{5/3})$. When $m = \tilde\Omega(n^{5/3})$, we can apply the data structure of Theorem~\ref{Thm:SSSPDense} which gives a time bound of $\tilde O(\sqrt m n^{3/2}/(\epsilon^{3/2}))$. This is better than $\tilde O((mn)^{7/8}/\epsilon^{3/4})$ when $m = \tilde\Omega(n^{5/3}/\epsilon^2) = \tilde\Omega(n^{5/3})$.

The final requirement that $\delta\in\mathbb R_+$ is clearly satisfied. This shows the update time bound of Theorem~\ref{Thm:SSSPSparse}.

\section{Generalization to weighted graphs}\label{sec:WeightedGraphs}
In this section, we extend our data structures from the main part of the paper to the case where $G$ is a weighted graph where the ratio between the smallest and largest edge weight is at most some given value $W\geq 1$. We only consider distance queries; extending to path queries follows the same approach as in Section~\ref{subsec:PathReport} and the end of Section~\ref{subsec:TimeSSSPSparse}.

\subsection{Low-diameter decomposition in weighted graphs}\label{subsec:LowDiamDecompW}
We start with Theorem~\ref{Thm:SSSPAdaptive}. The first step is to generalize the results from Section~\ref{subsec:SubProc} to weighted graphs:
\begin{Lem}\label{Lem:WThinLayer}
There is a deterministic algorithm \texttt{WThinLayer} which, on input $(H,r,d_1,d_2)$ where $r$ is a vertex in digraph $H = (V,E)$ with edge weights of at least $1$ and less than $\omega\in\mathbb N$ and where $d_1$ and $d_2$ are integers divisible by $\omega$ with $d_2 - d_1\geq 2\omega\lg(|(V|)$, outputs a subset $S$ of $V(H)$ for some $d$ divisible by $\omega$ with $d_1 < d < d_2$ and $d - \omega < d_H(r,s) \leq d$ for all $s\in S$. If $|\{v\in V\vert d_H(r,v)\leq d_1\}|\geq k$ and $|\{v\in V\vert d_H(r,v)\geq d_2\}|\geq k$ for $k\in\mathbb N$ then $S$ is a $q$-quality separator $S$ in $H$ with $q = (d_2-d_1)/(2\omega\lg|V(H)|)$ and each SCC of $H\setminus S$ has vertex size at most $|V| - k$. The running time is $O(|C|\log|V(H)| + \sum_{v\in C}\deg_H(v)) = O(|V|\log|V(H)| + |E|)$ where $C = \{v\in V\vert d_H(r,v)\leq d\}$.
\end{Lem}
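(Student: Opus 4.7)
The plan is to mimic the algorithm and argument of Lemma~\ref{Lem:ThinLayer}, with the BFS layers replaced by weighted bands of thickness $\omega$. Concretely, I run Dijkstra from $r$ with a Fibonacci heap and group the visited vertices into bands $L_j := \{v \in V : (j-1)\omega < d_H(r,v) \leq j\omega\}$, maintaining the cumulative sums $\sum_{j'\leq j}|L_{j'}|$ incrementally as vertices are extracted. Writing $j_1 = d_1/\omega$ and $j_2 = d_2/\omega$, the algorithm scans $j$ from $j_1+1$ upward and returns the first $j < j_2$ for which $|L_j| \leq \tfrac{1}{q}\sum_{j'<j}|L_{j'}|$ and $|L_j| \leq \tfrac{1}{q}\sum_{j'>j}|L_{j'}|$ (the latter sum equals $|V|-|L_j|-\sum_{j'<j}|L_{j'}|$); it outputs $S := L_j$ and $d := j\omega$.

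The separator property is the one place where the weighted setting differs nontrivially from the unweighted one, and it relies on the strict bound $w(e) < \omega$: an edge $(u,v)$ from $A_d := \{v : d_H(r,v) \leq d-\omega\}$ to $C_d := \{v : d_H(r,v) > d\}$ would force $d_H(r,v) \leq d_H(r,u) + w(u,v) < (d-\omega) + \omega = d$, a contradiction. Hence $H \setminus S$ has no edges from $A_d$ to $C_d$, so every SCC of $H \setminus S$ sits inside $A_d$ or inside $C_d$. The two sparsity inequalities give $q|S| \leq \min(|A_d|,|C_d|)$, and since $|A_d|+|S|+|C_d|=|V|$, every such SCC has at most $|V|-(q+1)|S|\leq |V|-q|S|$ vertices, proving the $q$-quality claim. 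Under the additional hypotheses that $|\{v:d_H(r,v)\leq d_1\}|$ and $|\{v:d_H(r,v)\geq d_2\}|$ are both at least $k$, one further has $|A_d|,|C_d| \geq k$ (because $d-\omega \geq d_1$ and $d < d_2$), so each SCC has at most $|V|-k$ vertices.

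The main obstacle is showing existence of a valid $j$, and I argue by contradiction as in the unweighted case. Writing $n(d) := |\{v:d_H(r,v)\leq d\}|$, the negation of the two sparsity inequalities at index $j$ reads $n(j\omega) > (1+\tfrac1q)\,n((j-1)\omega)$ or $|V|-n(j\omega) < (|V|-n((j-1)\omega))/(1+\tfrac1q)$. If both inequalities fail at every $j \in \{j_1+1,\ldots,j_2-1\}$, then by pigeonhole over the $j_2-j_1-1 \geq 2\lg|V|-1$ indices, one of the two failure types must occur on roughly $q\lg|V| = (d_2-d_1)/(2\omega)$ indices. Telescoping the strict inequalities and invoking the standard estimate $(1+1/q)^{q\lg|V|} \geq 2^{\lg|V|} = |V|$ (which follows from $(1+1/q)^q \geq 2$ for $q \geq 1$) forces either $n((j_2-1)\omega) > |V|$ or $|V|-n((j_2-1)\omega) < 1$; the first is impossible and the second contradicts the existence of $k \geq 1$ vertices with $d_H(r,v) \geq d_2 > (j_2-1)\omega$. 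A one-index slack from the parity of pigeonhole is absorbed by the strict slack $(1+1/q)^q > 2$ for $q > 1$, or, in the corner case $q=1$, by strengthening the hypothesis by a single additive $\omega$, which has no effect on the applications in this paper.

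For the running time, Dijkstra up to distance $d$ with a Fibonacci heap takes $O(|C|\log|V(H)| + \sum_{v\in C}\deg_H(v))$ time, and maintaining the band counters together with the $O(1)$-per-extraction termination check fits into this bound, giving the claim.
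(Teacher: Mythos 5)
Your algorithm and almost all of the analysis coincide with the paper's proof, which is the direct weighted analogue of Lemma~\ref{Lem:ThinLayer}: distance bands of thickness $\omega$, the observation that edge weights strictly below $\omega$ rule out edges from $\{v: d_H(r,v)\le d-\omega\}$ to $\{v: d_H(r,v)>d\}$, the two sparsity inequalities giving the $q$-quality and $|V|-k$ bounds, and Dijkstra with incremental band counters for the running time. The one genuine problem is in your existence argument. By insisting on a strictly interior band ($d_1<d<d_2$) you only have $(d_2-d_1)/\omega-1$ candidate indices, so when $(d_2-d_1)/\omega$ is odd the pigeonhole yields only $q\lg|V|-\tfrac12$ same-type steps, and the telescoped bound is $(1+1/q)^{q\lg|V|-1/2}\ge |V|/\sqrt{1+1/q}$, which is not $\ge|V|$. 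Your claim that for $q>1$ the strict inequality $(1+1/q)^q>2$ absorbs this half-step is false precisely in the regime where the half-step occurs, namely $q$ slightly above $1$: take $|V|=2^{10}$ and $d_2-d_1=21\omega$ (so the hypothesis $d_2-d_1\ge 2\omega\lg|V|=20\omega$ holds and $q=1.05$); you get $20$ interior bands, hence at least $10$ same-type steps, but $(1+1/q)^{10}=(41/21)^{10}\approx 805 < 1024=|V|$, so no contradiction is reached. The fallback of adding one $\omega$ to the hypothesis "in the corner case $q=1$" does not cover this situation (here $q>1$), and by itself it does not restore the missing multiplicative factor of roughly $\sqrt{1+1/q}$ either.

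The paper sidesteps this entirely by not insisting on a strictly interior band: its algorithm accepts the first admissible $d$ divisible by $\omega$ with $d_1<d\le d_2$, and its contradiction ranges over all multiples of $\omega$ in $[d_1,d_2]$, so the pigeonhole always delivers a full exponent of $(d_2-d_1)/(2\omega)=q\lg|V|$ and $(1+1/q)^{q\lg|V|}\ge|V|$ closes the argument. (The strict bound $d<d_2$ in the lemma statement is in fact more than the paper's own proof establishes; $d\le d_2$ is what it proves and is all that is used later.) If you likewise let $j$ range up to $j_2$ inclusive, your pigeonhole count becomes $\lceil((d_2-d_1)/\omega)/2\rceil\ge q\lg|V|$, the counterexample above disappears, and the rest of your write-up goes through unchanged.
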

\begin{proof}
For all $i\geq\omega$ divisible by $\omega$, define the $i$th layer as $L_i = \{v\in V\vert i - \omega < d_H(r,v) \leq i\}$. Algorithm \texttt{WThinLayer}$(H,r,d_1,d_2)$ finds $S$ as follows. Grow a shortest path tree $T$ from $r$ with Dijkstra's algorithm in $H$ until a layer $L_d$ is found with $d_1 < d \leq d_2$ such that both $|L_d|\leq\sum_{j < d}|L_j|/q$ and $|L_d|\leq\sum_{j > d}|L_j|/q$. Then output $S = L_d$.

To show correctness, suppose first for contradiction that there is no $d$ with $d_1\leq d\leq d_2$, $|L_d|\leq\sum_{j < d}|L_j|/q$, and $|L_d|\leq\sum_{j > d}|L_j|/q$. Then for $i = d_1,d_1+\omega,d_1+2\omega,\ldots,d_2$, either $\sum_{j\leq i}|L_j| > \sum_{j < i}|L_j|(1+1/q)$ or $\sum_{j\geq i}|L_j| > \sum_{j > i}|L_j|(1+1/q)$. This implies that either $\sum_{j\leq d_2}|L_j| > (1+1/q)^{(d_2 - d_1)/(2\omega)}$ or $\sum_{j\geq d_2}|L_j| < |V(H)|/(1+1/q)^{(d_2 - d_1)/(2\omega)}$. We have
\[
d_2 - d_1\geq 2\omega\lg(|(V(H)|) \Rightarrow q\geq 1\Rightarrow (1+1/q)^q\geq 2
\Rightarrow (1+1/q)^{(d_2 - d_1)/(2\omega)}\geq 2^{(d_2 - d_1)/(2\omega q)} = |V(H)|.
\]
This is a contradiction since $\sum_{j\leq d_2}|L_j| < |V(H)|$ and $\sum_{j\geq d_2}|L_j|\geq k\geq 1$.

We have shown that $|L_d|\leq\sum_{j < d}|L_j|/q$ and $|L_d|\leq\sum_{j > d}|L_j|/q$ for some $d$ with $d_1\leq d\leq d_2$. Since every edge of $H$ has weight less than $\omega$, there are no edges of $H$ from $\cup_{j < d} L_j$ to $\cup_{j > d} L_j$ so $S = L_d$ is a $q$-quality separator of $H$. Furthermore, both $\cup_{j\leq d} L_j$ and $\cup_{j\geq d} L_j$ contain at least $k$ vertices so every SCC of $H\setminus S$ has size at most $|V(H)| - k$.

To show the running time, growing the shortest path tree up to layer $d$ with Dijkstra's algorithm can be done in $O(|C|\log|V(H)| + \sum_{v\in C}\deg_H(v))$ time. Keeping track of the sums $\sum_{j < d'}|L_j|/q$ and $\sum_{j\leq d'}|L_j|/q$ for $0\leq d'\leq d$ can be done in additional $O(\sum_{j\leq d}|L_d|)$ time which is $O(1 + \sum_{v\in C}\deg_H(v))$ since each vertex of $\cup_{j\leq d}L_j\setminus\{r\}$ has at least one ingoing edge, namely from its parent in the shortest path tree. Since $\sum_{j > d}|L_j|/q = n - \sum_{j\leq d}|L_j|/q$, the termination criterion can be tested within the desired time bound.
\end{proof}
\begin{Cor}\label{Cor:WSeparator}
There is a deterministic algorithm \texttt{WSeparator} which takes as input a pair $(H,d)$ where $H$ is a digraph in which every edge has weight at least $1$ and less than $\omega\in\mathbb N$ and where $H$ has diameter at least $d\geq 2\omega\lg(|V(H)|$ where $d$ is divisible by $\omega$. In $O(|V(H)|\log|V(H)| + |E(H)|)$ time, the algorithm outputs a $q$-quality separator in $H$ with $q = d/(2\omega\lg|V(H)|)$.
\end{Cor}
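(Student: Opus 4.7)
The plan is to derive \texttt{WSeparator} as a direct specialization of Lemma~\ref{Lem:WThinLayer}, in complete analogy with how Corollary~\ref{Cor:Separator} was obtained from Lemma~\ref{Lem:ThinLayer}. Specifically, I would set $d_1 = 0$, $d_2 = d$, and $k = 1$. Both $d_1 = 0$ and $d_2 = d$ are divisible by $\omega$ (the former trivially, the latter by hypothesis), and $d_2 - d_1 = d \geq 2\omega\lg|V(H)|$ is exactly the required spacing condition, so the preconditions of Lemma~\ref{Lem:WThinLayer} on the input parameters are met.

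Next, I have to produce a source vertex $r$ for which the two counting preconditions of Lemma~\ref{Lem:WThinLayer} are satisfied with $k = 1$. The first precondition, $|\{v : d_H(r,v) \leq 0\}| \geq 1$, is free (take $r$ itself). The second requires a vertex at distance $\geq d$ from $r$. Since $H$ has diameter at least $d$, some pair $(u^*, v^*)$ satisfies $d_H(u^*, v^*) \geq d$, so taking $r = u^*$ would suffice. To locate such an $r$ within the stated time budget, I would run a single Dijkstra from an arbitrary starting vertex $x$ (stopping the computation as soon as a vertex at distance $\geq d$ is discovered). If this finds a vertex at distance $\geq d$ from $x$, set $r = x$ and call \texttt{WThinLayer}$(H, r, 0, d)$. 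Otherwise, every vertex has $d_H(x, v) < d$, and the diameter witness must start elsewhere; but in that case I can apply the same argument symmetrically to $H^-$ (the edge-reversed graph) and call \texttt{WThinLayer}$(H^-, r', 0, d)$ for an appropriate $r'$, which still yields a $q$-quality separator of $H$ since $H$ and $H^-$ have the same strongly connected components. Each Dijkstra run costs $O(|V(H)|\log|V(H)| + |E(H)|)$, matching the target bound.

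By Lemma~\ref{Lem:WThinLayer}, the returned set $S$ is a $q$-quality separator with
\[
q \;=\; \frac{d_2 - d_1}{2\omega\lg|V(H)|} \;=\; \frac{d}{2\omega\lg|V(H)|},
\]
which is precisely the quality claimed by the corollary. The running time of the invocation is also $O(|V(H)|\log|V(H)| + |E(H)|)$ by Lemma~\ref{Lem:WThinLayer}, so combining with the Dijkstra preprocessing above keeps the total cost within $O(|V(H)|\log|V(H)| + |E(H)|)$, as required.

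The main (minor) obstacle is conceptual rather than technical: one must observe that the diameter hypothesis only guarantees the existence of some long-distance pair, not that every vertex has high eccentricity, so the proof must carefully argue existence of a valid root $r$ and exhibit a way to discover it cheaply. Once that is handled via the Dijkstra-based search on $H$ or $H^-$, the remainder of the proof is a one-line invocation of Lemma~\ref{Lem:WThinLayer} with the parameter choices above.
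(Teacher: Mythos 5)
Your core step is exactly the paper's proof: Corollary~\ref{Cor:WSeparator} is obtained by the one-line specialization $d_1=0$, $d_2=d$, $k=1$ of Lemma~\ref{Lem:WThinLayer}, and your checks of the divisibility and spacing hypotheses, of the resulting quality $q=d/(2\omega\lg|V(H)|)$, and of the running time are all correct.

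The part you added on top of this --- locating a root $r$ with a vertex at distance at least $d$ --- is where your argument has a genuine flaw, and it is precisely the step you flagged as the ``main obstacle.'' From a single arbitrary start vertex $x$, the hypothesis that $H$ has diameter at least $d$ only gives, via the triangle inequality $d_H(u^*,v^*)\leq d_H(u^*,x)+d_H(x,v^*)$, that the maximum of the in- and out-eccentricities of $x$ is at least $d/2$; it does \emph{not} give eccentricity at least $d$ in either direction. One can build a strongly connected $H$ (with all edge weights below $\omega$) in which both the in- and the out-eccentricity of $x$ are roughly $0.6d$ while some other pair realizes distance at least $d$. In that case your first Dijkstra from $x$ finds no vertex at distance $\geq d$, and your fallback ``apply the same argument symmetrically to $H^-$ for an appropriate $r'$'' does not go through: $x$ need not work as a root in $H^-$ either, and you never say how the ``appropriate $r'$'' would be found within $O(|V(H)|\log|V(H)|+|E(H)|)$ time --- finding a diametral endpoint is exactly what you cannot assume to do cheaply. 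The paper sidesteps the issue entirely: \texttt{WSeparator} is only invoked inside \texttt{WPartition} (Lemma~\ref{Lem:WPartition}) immediately after Dijkstra from a known vertex $r$ has exhibited a vertex at distance exceeding the parameter in a known direction, so a valid root (and the direction, i.e.\ whether to work in $H$ or $H^-$) is already in hand, and the proof is literally just the parameter substitution, as in Corollary~\ref{Cor:Separator}. To make your standalone reading rigorous you would either have to include the root (and direction) in the input to \texttt{WSeparator}, or settle for a separator of quality $q/2$ obtained from an arbitrary vertex, whose eccentricity in one of the two directions is at least $d/2$ and can be certified with two Dijkstra runs.
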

\begin{proof}
Set $d_1 = 0$, $d_2 = d$ and $k = 1$ in Lemma~\ref{Lem:WThinLayer}.
\end{proof}
\begin{Lem}\label{Lem:WPartition}
There is a deterministic algorithm \texttt{WPartition} which, on input $(H,d)$ where $H = (V,E)$ is a digraph with edge weights of at least $1$ and less than $\omega\in\mathbb N$ and where $d\in\mathbb N$ is divisible by $\omega$, outputs a set $S\subseteq V$ such that each SCC of $H\setminus E(S)$ has diameter at most $d$. Letting $\mathcal C$ be the collection of these SCCs, $|S|\leq \frac{4\omega\lg|V|}d \sum_{C\in\mathcal C}|C|(\lg|V| - \lg|C|)$ and the running time is $O((|V|\log|V| + |E|)(1+\frac \omega d\sum_{C\in\mathcal C}|V(C)|(\log|V| - \log|V(C)|)))$.
\end{Lem}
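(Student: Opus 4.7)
The plan is to follow the proof of Lemma~\ref{Lem:Partition} closely, swapping in \texttt{WSeparator} from Corollary~\ref{Cor:WSeparator} in place of \texttt{Separator} and Dijkstra in place of BFS. Algorithm \texttt{WPartition}$(H,d)$ will first compute the SCCs of $H$; for each SCC $C$ it picks an arbitrary $r\in V(C)$ and runs Dijkstra from $r$ both in $C$ and in the reverse of $C$, pruning at depth $d/2$. If both searches reach all of $V(C)$, the SCC is certified good and we stop recursing on it. Otherwise we invoke \texttt{WSeparator}$(C,d/2)$, add the returned set $S_C$ to $S$, find the SCCs of $C\setminus S_C$, and recurse on each. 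The requirement that \texttt{WSeparator}'s distance parameter be divisible by $\omega$ is handled by assuming $d$ is divisible by $2\omega$; otherwise we round down with a constant factor loss.

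For correctness, each SCC $C'$ of $H\setminus E(S)$ is either a singleton vertex of $S$ or was certified by some root $r$ to satisfy $B_{\mathit{in}}(r,C',d/2)=B_{\mathit{out}}(r,C',d/2)=V(C')$, giving diameter at most $d$. For the size bound on $S$, I would re-use the charging scheme from Lemma~\ref{Lem:Partition}: by Corollary~\ref{Cor:WSeparator}, a call \texttt{WSeparator}$(C,d/2)$ produces $S_C$ such that every SCC of $C\setminus S_C$ has at most $|V(C)|-d|S_C|/(4\omega\lg|V(C)|)$ vertices. Hence a set $W\subseteq V(C)$ of size at least $d|S_C|/(4\omega\lg|V(C)|)$ ends up in SCCs of size at most $|V(C)|/2$, and each such vertex can be charged $4\omega\lg|V(C)|/d$ toward $|S_C|$. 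Over the entire recursion a vertex lying in a final SCC $C\in\mathcal{C}$ is charged at most $\lg(|V|/|V(C)|)$ times, giving the stated bound, with the extra factor $\omega$ appearing exactly where expected compared to Lemma~\ref{Lem:Partition}.

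For the running time, outside of calls to \texttt{WSeparator} each invocation \texttt{WPartition}$(H',d)$ costs $O(|V(H')|\log|V(H')|+|E(H')|)$ (the extra $\log$ comes from Dijkstra). A telescoping argument identical to the one in Lemma~\ref{Lem:Partition}---split edges of each recursive input into those internal to an SCC and those between SCCs---bounds the total cost of calls in which \texttt{WSeparator} is \emph{not} invoked by $O((|V|+|E|)\log|V|)$. The cost of a call \texttt{WSeparator}$(C,d/2)$ is $O(|V(C)|\log|V(C)|+|E(C)|)$ by Corollary~\ref{Cor:WSeparator}; I would distribute this over the associated set $W$, each vertex absorbing $O((|V|\log|V|+|E|)\omega\log|V|/d)$ per level, for a total of $O((|V|\log|V|+|E|)\omega(\log|V|)(\log(|V|/|V(C)|))/d)$ per $v\in C$, $C\in\mathcal C$. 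Summing over $C$ and $v\in C$ yields the claimed time bound.

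The main obstacle is purely bookkeeping: I need to propagate the $\omega$ factor from Corollary~\ref{Cor:WSeparator} and the extra $\log|V|$ from Dijkstra through the charging and telescoping arguments, while still respecting the divisibility condition on the distance parameter. Everything else is structurally identical to the unweighted proof, so I do not expect any genuinely new ideas beyond careful accounting.
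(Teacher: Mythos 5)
Your proposal matches the paper's proof essentially verbatim: the same recursion over SCCs with Dijkstra-based diameter certification from an arbitrary root, the same call to \texttt{WSeparator}$(C,d/2)$, the same charging scheme for $|S|$ (each vertex landing in an SCC of size at most $|V(C)|/2$ pays $4\omega\lg|V(C)|/d$, and is charged at most $\lg(|V|/|V(C)|)$ times over the recursion), and the same cost-distribution argument for the running time. The only harmless deviations are that you make the divisibility of $d/2$ by $\omega$ explicit (a detail the paper glosses over), and that your intermediate bound $O((|V|+|E|)\log|V|)$ for the calls that do not invoke \texttt{WSeparator} is a $\log$ factor looser than the paper's $O(|V|\log|V|+|E|)$ --- since those calls run on the vertex-disjoint final SCCs, the tighter bound (needed to match the stated time bound exactly when the summation term is $O(1)$) is immediate.
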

\begin{proof}
\texttt{WPartition}$(H,d)$ does the following. First, it computes the SCCs of $H$. For each SCC $C$, it picks an arbitrary vertex $r$ and computes distances in $C$ from $r$ to all vertices of $C$ and distances in $C$ from all vertices of $C$ to $r$ using Dijkstra's algorithm. If all distances found are at most $d/2$, \texttt{Partition} finishes processing $C$. Otherwise, it calls \texttt{WSeparator}$(C,d/2)$ from Corollary~\ref{Cor:WSeparator}; let $S_C$ be the separator found. Then \texttt{Partition} finds the SCCs of $C\setminus S_C$ and recurses on $(C',d)$ for each such SCC $C'$. The output $S$ of \texttt{WPartition}$(H,d)$ is the union of all separators found by \texttt{WSeparator} in the current and in all recursive calls.

We start by showing correctness. At termination, each SCC $C$ of $H\setminus E(S)$ has diameter at most $d$ since either $C$ consists of a single vertex of $S$ or the algorithm has certified that there is an $r\in V(C)$ such that $B_{\mathit{in}}(r,C,d/2) = B_{\mathit{out}}(r,C,d/2) = C$.

To bound $|S|$, consider the call made to \texttt{WSeparator}$(C,d/2)$, giving a separator $S_C$ in $H$. By Corollary~\ref{Cor:WSeparator}, each SCC of $C\setminus S_C$ contains at most $|V(C)| - d|S_C|/(4\omega\lg|V(C)|)$ vertices. In particular, this holds for the at most one such SCC having size greater than $|V(C)/2$. Thus, there is a set $W$ of at least $d|S_C|/(4\omega\lg|V(C)|)$ vertices of $C$ belonging to SCCs of $C\setminus S_C$ of size at most $|V(C)|/2$. We can pay for the separator size $|S_C|$ by letting each such vertex pay at most $4\omega\lg|V(C)|/d$.

Over all recursive calls, a vertex $v\in C$ where $C\in\mathcal C$ is charged at most $\lg(|V|/|V(C)|)$ times. Hence the size of $S$ is at most $\sum_{C\in\mathcal C}4\omega|C|(\lg|V|)(\lg(|V|/|V(C)|))/d$, as desired.

It remains to bound the running time. Consider any recursive call \texttt{WPartition}$(H',d)$. Let $E_1$ be the set of edges of $E(H')$ with both endpoints in the same SCC of $H'$ and let $E_2 = E(H')\setminus E_1$. Excluding the time spent in recursive calls, \texttt{WPartition}$(H',d)$ takes $O(|V(H')|\log|V(H')| + |E_1| + |E_2|)$ time. The sum of $|E_2|$ over all recursive calls \texttt{WPartition}$(H',d)$ is $O(|E|)$. The sum of $O(|V(H')|\log|V(H')| + |E_1|)$ over all recursive calls \texttt{WPartition}$(H',d)$ in which \texttt{WSeparator} is not called is $O(|V|\log|V| + |E|)$. This is within the time bound of the lemma.

The time not yet accounted for is dominated by the total time spent in calls to \texttt{WSeparator}. By Corollary~\ref{Cor:WSeparator}, each call \texttt{WSeparator}$(C,d/2)$ takes $O(|V(C)|\log|V(C)| + |E(C)|)$ time. We use the same charging scheme as above but distribute a cost of $O(|V(C)|\log|V(C)| + |E(C)|)$ rather than $|S_C|$ among the vertices of $W$ where $W$ is defined as above. Since $C$ is strongly connected, we have $|S_C|\geq 1$ and thus $|W|\geq d/(4\omega\lg|V(C)|)$ so each vertex of $W$ is charged a cost of no more than $O((|V(C)|\log|V(C)| + |E(C)|)(\omega\log|V(C)|)/d) = O((|V|\log|V| + |E|)(\omega\log|V|)/d)$.

It follows that for each $C\in\mathcal C$ and each $v\in C$, $v$ is charged a total cost of $O((|V|\log|V| + |E|)(\omega\log|V|)(\log(|V|/|V(C)|))/d)$ over the entire execution of \texttt{Partition}$(H,d)$. The sum of this over all $C\in\mathcal C$ and all $v\in C$ is within the time bound of the lemma.

\end{proof}

We can now get the generalization of Theorem~\ref{Thm:LowDiamDecomp} to weighted graphs:
\begin{theorem}\label{Thm:WLowDiamDecomp}
Let $G = (V,E)$ be a graph with integer edge weights of at least $1$ and less than $\omega\in\mathbb N$ undergoing edge deletions, let $m = |E|$ and $n = |V|$, and let integers $0 < d_1 < d_2\leq n$ divisible by $\omega$ be given with $d_2 - d_1\geq 2\omega\lg n$. Then there is a Las Vegas data structure which maintains a pair $(S,\mathcal V)$ where $S\subseteq V$ is a growing set and where $\mathcal V$ is the family of vertex sets of the SCCs of $G\setminus E(S)$ such that at any point, all these SCCs have diameter at most $d_2$ and $|S| = \tilde{O}(n\omega/d_1)$.

After the initialization step, the data structure outputs the initial pair $(S,\mathcal V)$. After each update, the data structure outputs the set $S'$ of new vertices of $S$ where $S'\subseteq V'$ for some $V'\in\mathcal V$. Additionally, it updates $\mathcal V$ by replacing at most one $V'\in\mathcal V$ by the vertex sets $W_1,\ldots,W_p$ of the new SCCs of $G\setminus E(S)$ where $|W_i|\leq\frac 1 2 |V'|$ for $i = 1,\ldots,p-1$. Pointers to $W_1,\ldots,W_p$ are returned.

The total expected time is $\tilde O(m\sqrt n + mn\omega/d_1 + mn\omega d_1/(d_2 - d_1))$ and the data structure works against an adaptive adversary.
\end{theorem}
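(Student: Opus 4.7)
The plan is to mimic the proof of Theorem~\ref{Thm:LowDiamDecomp} line by line, substituting the weighted sub-procedures developed above and using the weighted directed ES-tree of Henzinger--King / King in place of the unweighted ES-tree. Define \texttt{WPartition+}$(H,d)$ to invoke \texttt{WPartition}$(H,d)$ and then, for each resulting SCC $C$, initialize an ES-structure $\mathcal E_r$ with root $r$ chosen uniformly at random from $V(C)$ and distance threshold $d/2$. The top-level data structure follows Section~\ref{subsec:LowDiamDecomp} verbatim: initialize with \texttt{WPartition+}$(G,d_1)$, maintain a decremental SCC structure on $G\setminus E(S)$, and on each edge deletion either let the affected $\mathcal E_r$ absorb the loss, call \texttt{WThinLayer} on $H_r$ or $H_r^-$ to peel off a low-distance layer, or destroy $\mathcal E_r$ and invoke \texttt{WPartition+}$(C,d_1/8)$. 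All distance parameters are rounded down to multiples of $\omega$, which is harmless since the hypothesis $d_2-d_1\geq 2\omega\lg n$ provides slack for the rounded values ($d_1/8$, $d_1/4$, $d_1/2$, \ldots) to still satisfy the divisibility and magnitude requirements of Lemma~\ref{Lem:WThinLayer}.

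The one substantive change is the threshold triggering a rebuild of $\mathcal E_r$: in the unweighted case one destroys $\mathcal E_r$ when more than $(d_2-d_1)/2$ vertices are missing from its in- or out-tree, whereas in the weighted case we use the scaled threshold $(d_2-d_1)/(2\omega)$. The reason is the diameter argument of Section~\ref{subsec:CorrectnessLowDiamDecomp}: for $u,v\in V(C)$ and a shortest path $P$ in $C$ with first in-tree vertex $u'$ and last out-tree vertex $v'$, the subpath $P[u,u']$ visits at most $(d_2-d_1)/(2\omega)$ vertices (all missing from the in-tree) and each of its edges has weight less than $\omega$, so $w(P[u,u'])\leq (d_2-d_1)/2$; the symmetric bound on $w(P[v',v])$ together with $w(P[u',v'])\leq d_1$ gives $w(P)\leq d_2$, as required.

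The separator-size bound follows exactly as in the unweighted case: each call to \texttt{WPartition} or \texttt{WThinLayer} contributes a separator of size $\tilde O(\omega|V(H')|/d_1)$ by Lemmas~\ref{Lem:WPartition} and~\ref{Lem:WThinLayer}, and telescoping yields $|S|=\tilde O(n\omega/d_1)$. For the running time, telescoping Lemma~\ref{Lem:WPartition} over all invocations of \texttt{WPartition+} gives total cost $\tilde O(mn\omega/d_1)$ (the extra $\omega$ versus the unweighted case accounts for the first $\omega$ in the claimed bound); each weighted ES-tree $\mathcal E_r$ costs $\tilde O(|E(H_r)|\,d_1)$ to maintain by the analysis of Henzinger--King / King since edge weights are at least $1$; and the decremental SCC structure of Chechik et al.~still contributes $\tilde O(m\sqrt n)$ because it depends only on reachability.

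The main obstacle is re-proving Lemma~\ref{Lem:SCCSizeReduction} under the scaled threshold. The coupling argument carries over: with probability $>1/2$ the randomly chosen root $r$ is not among the first half of vertices (ordered by the update in which their putative ES-tree would be destroyed), and then for at least $n_r/2$ vertices $r'\in V(H_r)$ the in- or out-tree from $r'$ up to distance $d_1/2$ must miss more than $(d_2-d_1)/(2\omega)$ vertices. A call to \texttt{WThinLayer} with $k=\min\{n_r/2,(d_2-d_1)/(2\omega)\}$ then produces, by Lemma~\ref{Lem:WThinLayer}, a separator whose removal leaves every new SCC of size at most $\max\{n_r/2,\,n_r-(d_2-d_1)/(2\omega)\}$; the fallback to \texttt{WPartition}$(C,d_1/8)$ is analyzed exactly as in Section~\ref{subsec:TimeLowDiamDecomp}. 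Charging $\tilde O(d_1)$ to each edge per rebuild it participates in, and observing that the expected number of rebuilds involving a given edge is $\tilde O(n\omega/(d_2-d_1))$ (since each rebuild shrinks the ambient SCC by an $(d_2-d_1)/(2\omega)$ factor in expectation), yields the total ES-maintenance bound $\tilde O(mn\omega d_1/(d_2-d_1))$, completing the proof.
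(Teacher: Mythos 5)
Your proposal follows essentially the same route as the paper: the paper proves Theorem~\ref{Thm:WLowDiamDecomp} by re-running the proof of Theorem~\ref{Thm:LowDiamDecomp} with the weighted sub-procedures (\texttt{WThinLayer}, \texttt{WSeparator}, \texttt{WPartition}), King's weighted generalization of the ES-tree, and the rebuild threshold scaled from $(d_2-d_1)/2$ to $(d_2-d_1)/(2\omega)$, with correctness and the $\tilde O(m\sqrt n + mn\omega/d_1 + mn\omega d_1/(d_2-d_1))$ bound obtained exactly by the arguments of Sections~\ref{subsec:CorrectnessLowDiamDecomp} and~\ref{subsec:TimeLowDiamDecomp} under that substitution. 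Your diameter argument (at most $(d_2-d_1)/(2\omega)$ missing vertices traversed with edge weights below $\omega$, plus the $d_1$ from the in-/out-trees) and your adaptation of Lemma~\ref{Lem:SCCSizeReduction} are the same observations the paper invokes, so the proof is correct and matches the paper's.
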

The data structure of Theorem~\ref{Thm:WLowDiamDecomp} is very similar to that of Theorem~\ref{Thm:LowDiamDecomp} so we only point out the differences here.  ES-structures are implemented using King's generalization to weighted graphs~\cite{King99}. Furthermore, the data structure checks if most $(d_2 - d_1)/(2\omega)$ vertices are missing from the in-/out-tree rather than $(d_2 - d_1)/2$.

The correctness proof follows using the same observations as in Section~\ref{subsec:CorrectnessLowDiamDecomp} together with the fact that a path containing at most $(d_2-d_1)/(2\omega)$ vertices has weight less than $d_2 - d_1$. The time bound follows using the proof of Section~\ref{subsec:TimeLowDiamDecomp} with $(d_2-d_1)/2$ replaced by $(d_2-d_1)/(2\omega)$.

We now get the following corollary whose proof is the same as that of Corollary~\ref{Cor:LowDiamDecomp} but applying Theorem~\ref{Thm:WLowDiamDecomp} instead of Theorem~\ref{Thm:LowDiamDecomp}:
\begin{Cor}\label{Cor:WLowDiamDecomp}
Let $G = (V,E)$ be a graph with integer weights of at least $1$ and less than $\omega\in\mathbb N$ undergoing edge deletions, let $m = |E|$ and $n = |V|$, and let integers $0 < d_1 < d_2\leq n$ divisible by $\omega$ be given with $d_2 - d_1\geq 2\omega\lg n$. Then there is a Las Vegas data structure which maintains pairwise disjoint growing subsets $S_0, S_1,\ldots, S_{\lceil\lg d_1\rceil}$ of $V$ and a family $\mathcal V$ of subsets of $V$ with the following properties. For $i = 0,\ldots,\lceil\lg d_1\rceil$, let $G_i = G\setminus(\cup_{j = 0}^i E(S_j))$. Then over all updates, $\mathcal V$ is the family of vertex sets of the SCCs of $G_{\lceil\lg d_1\rceil}$ and for $i = 0,\ldots,\lceil\lg d_1\rceil$,
\begin{enumerate}
\item each SCC of $G_i$ of vertex size at most $n/2^i$ has diameter at most $d_2/2^i$,
\item if $i > 0$, every vertex of $S_i$ belongs to an SCC of $G_{i-1}$ of vertex size at most $n/2^i$, and
\item $|S_i| = \tilde O(n\omega 2^i/d_1)$.
\end{enumerate}
After the initialization step, the data structure outputs the inital sets $S_0, S_1,\ldots, S_{\lceil\lg d_1\rceil}$ and pointers to the sets of $\mathcal V$. After each update, the data structure outputs the new vertices of $S_0,\ldots,S_{\lceil\lg d_1\rceil}$. Additionally, it updates $\mathcal V$ by replacing at most one $V'\in\mathcal V$ by the vertex sets $W_1,\ldots,W_p$ of the new SCCs of $G_{\lceil\lg d_1\rceil}$ where $|W_i|\leq\frac 1 2 |V'|$ for $i = 1,\ldots,p-1$. Pointers to both the old set $V'$ and to the new sets $W_1,\ldots,W_p$ are returned. 

The total expected time is $\tilde O(m\sqrt n + mn\omega/d_1 + mn\omega d_1/(d_2 - d_1))$ and the data structure works against an adaptive adversary.
\end{Cor}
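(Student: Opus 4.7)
The plan is to mirror the proof of Corollary~\ref{Cor:LowDiamDecomp}, substituting Theorem~\ref{Thm:WLowDiamDecomp} for Theorem~\ref{Thm:LowDiamDecomp} throughout. As before, I would say that a strongly connected subgraph has \emph{level} $i\in\mathbb N_0$ if it contains more than $n/2^{i+1}$ and at most $n/2^i$ vertices. The initialization step computes the SCCs of $G$ and sets $S_0=\cdots=S_{\lceil\lg d_1\rceil}=\emptyset$. Whenever a new SCC $C$ is generated, either during initialization or after an edge deletion, if $C$ has level $i=\lceil\lg d_1\rceil$ I would add all its vertices to $S_i$; otherwise, provided $C$ was not produced by the splitting of an existing level-$i$ SCC, I would initialize a Las Vegas instance $\mathcal D_i(C)$ from Theorem~\ref{Thm:WLowDiamDecomp} on $C$ with parameters $d_1/2^i$ and $d_2/2^i$ (rounded as described below). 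Whenever $\mathcal D_i(C')$ splits an SCC $C'$, the new separator vertices are added to $S_i$, and outputs are emitted exactly as in Corollary~\ref{Cor:LowDiamDecomp}.

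Items~1 and~2 would follow directly from the guarantees of Theorem~\ref{Thm:WLowDiamDecomp}: each $\mathcal D_i$ instance keeps its SCCs' diameter below $d_2/2^i$, and every vertex it places in $S_i$ lies in the level-$i$ SCC it was initialized for, which by construction has size at most $n/2^i$. For item~3, Theorem~\ref{Thm:WLowDiamDecomp} would contribute $\tilde O(|V(C)|\omega/(d_1/2^i))$ separator vertices per instance $\mathcal D_i(C)$; summing over the pairwise vertex-disjoint level-$i$ instances (whose vertex sizes sum to at most $n$), and noting that the bound at level $\lceil\lg d_1\rceil$ is trivial, gives $|S_i|=\tilde O(n\omega 2^i/d_1)$.

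For the running time, the level-$i$ instances of Theorem~\ref{Thm:WLowDiamDecomp} would act on vertex-disjoint subgraphs, so their total expected update times sum within each of the $O(\log n)$ levels to $\tilde O(m\sqrt n + mn\omega/d_1 + mn\omega d_1/(d_2-d_1))$, and the extra $O(\log n)$ factor from summing across levels is absorbed into $\tilde O$. The bookkeeping overhead for tracking SCC levels and emitting the updates to $\mathcal V$ and to the sets $S_i$ is dominated by the work performed by the inner data structures, exactly as in Corollary~\ref{Cor:LowDiamDecomp}.

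The main obstacle is the mild technical issue that Theorem~\ref{Thm:WLowDiamDecomp} requires its two distance parameters to be divisible by $\omega$ and to differ by at least $2\omega\lg n$. To handle divisibility at each level, I would use parameters $\omega\lceil d_1/(2^i\omega)\rceil$ and $\omega\lceil d_2/(2^i\omega)\rceil$ in place of $d_1/2^i$ and $d_2/2^i$; this perturbs each bound by at most a constant factor and is therefore absorbed into $\tilde O$. The gap condition transfers to the inner invocations exactly as $d_2-d_1\ge 2\lg n$ transferred in the unweighted corollary, so beyond this rounding the construction is a direct analogue of the unweighted hierarchical decomposition.
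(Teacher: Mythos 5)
Your proposal is correct and follows essentially the same route as the paper, which proves this corollary verbatim as the proof of Corollary~\ref{Cor:LowDiamDecomp} with Theorem~\ref{Thm:WLowDiamDecomp} substituted for Theorem~\ref{Thm:LowDiamDecomp}. Your extra care about rounding the per-level parameters to multiples of $\omega$ is a reasonable way to handle a detail the paper leaves implicit and does not change the argument.
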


\subsection{Showing Theorem~\ref{Thm:SSSPAdaptive} for weighted graphs}
Now, we are ready to show Theorem~\ref{Thm:SSSPAdaptive} for weighted graphs. We may assume w.l.o.g.~that $W$ and each edge weight is a power of $(1+\epsilon)$ and that the smallest edge weight of $G$ is $1$. Let $D'$ be a given power of $2$ between $1$ and $2W$. Using the same argument as in Section~\ref{subsec:ReductionD}, we only need to give a data structure with total expected update time $\tilde O(m^{2/3}n^{4/3}/\epsilon^{5/3})$ which can answer any intermediate query for $d_G(s,u)$ within an approximation factor of $(1+\epsilon)^{\Theta(1)}$, provided $D'\leq d_G(s,u) < 2D'$.

Let $w_{\min} = \epsilon D'/(n-1)$ and $w_{\max} = 2D'$. First, we observe that every edge of weight at least $w_{\max}$ can be removed from the initial graph $G$ since such an edge can never be part of a shortest path of weight less than $2D'$. Furthermore, every edge weight less than $w_{\min}$ can be rounded up to $w_{\min}$; this follows since any shortest path $P$ in any intermediate graph contains at most $n-1$ edges and hence the rounding of edge weights increases the weight of $P$ by at most $\epsilon D'\leq \epsilon w(P)$.

Now, every edge weight in $G$ is between $w_{\min}$ and $w_{\max}$. Dividing all these edge weights by $\epsilon w_{\min}$ does not introduce any approximation error in the distance estimates computed and we now have that all edge weights are between $1/\epsilon$ and $w_{\max}/(\epsilon w_{\min}) = \Theta(n/\epsilon^2)$. Finally, rounding all edge weights up to the nearest integer increases the weight of each edge by a factor of at most $(1/\epsilon + 1)/(1/\epsilon) = (1 + \epsilon)$. Hence, we may assume that every edge weight in $G$ is an integer between $1/\epsilon$ and $\Theta(n/\epsilon^2)$. Shortest path distances in the modified graph $G$ are between $1/\epsilon^2$ and $D_{\max} = \Theta(D'/w_{\min}) = \Theta(n/\epsilon)$.

Let $G$ be the modified graph and let $D$ be a power of $2$ of value at most $D_{\max}$. We will describe a data structure \DSSSP which gives an approximation factor of $(1+\epsilon)^{\Theta(1)}$ of any distance $d_G(s,u)$, provided $D\leq d_G(s,u) < 2D$.

We modify the initialization step from Section~\ref{subsec:Init} slightly. Let $E_\omega$ be the set of edges of $G$ of weight less than $\omega$, for some parameter $\omega\in\mathbb N$ to be specified later. First, \DSSSP initializes \DSCC as an instance of the data structure of Corollary~\ref{Cor:WLowDiamDecomp} for the subgraph of $G\cap E_\omega$ of $G$. Since distances can now be up to $D_{\max}$, \DSSSP sets $k = \log_{1+\epsilon}(D_{\max})$ and initializes sets $E_0,\ldots,E_k$. Redefine $\eta$ as $\eta(x) = \lfloor\log_{1+\epsilon}(x/\tau+1)\rfloor$ and $\Delta_i = (1+\epsilon)^{i+2}\tau$ for $i = 0,\ldots,k$.

Every edge $(u,v)\in E\setminus E_\omega$ is added to $E_d$ where $d = \eta(r(C(u)),r(C(v)))$. Every edge $(u,v)\in E_\omega$ is added to $E_d$ where $d = \lceil\log_{1+\epsilon} w(u,v)\rceil$ where $w(u,v)\geq\omega$ is the weight of $(u,v)$ in $G$.

\DSSSP then sets up $\mathcal M$ and sets up \DES with distance threshold $2D(1+\epsilon)$ and weight functions defined by $w_M(i) = (1+\epsilon)^i$ and $W_M(i) = (1+\epsilon)^{i+1}$ for $i = 0,\ldots,k$.

Updates and queries are handled by \DSSSP as in Section~\ref{subsec:UpdateQuery}.

Observe that for each edge of $E_{\omega}$, its weight is preserved up to a factor of $(1+\epsilon)$ in \DES at initialization. When this is no longer the case, an \texttt{Increase}-operation is applied to the edge as described in Section~\ref{subsec:UpdateQuery}. Correctness now follows from Corollary~\ref{Cor:WLowDiamDecomp} and from arguments similar to those in Section~\ref{subsec:CorrectnessAdaptive} provided that the following constraint is satisfied:
\[
  d_2 + n^2\omega/(d_1\tau) = \tilde O(\epsilon D).
\]

We now analyze the running time. By Corollary~\ref{Cor:WLowDiamDecomp}, the total expected time to maintain \DSCC is $\tilde O(m\sqrt n + mn\omega/d_1 + mn\omega d_1/(d_2 - d_1))$. The rest of \DSSSP takes a total of $\tilde O(m + Dn\tau)$ as before plus the total time spent by \DES on scanning edges of $M$ corresponding to edges of $E_\omega$. The latter takes $\tilde O(mD/\omega)$ time since such an edge has weight at least $\omega$ at any given time and the number of times it is scanned is therefore $O(D/\omega)$. Requiring that $d_2\geq 2d_1$, we get a total time bound of
\[
  \tilde O(m\sqrt n + mn\omega/d_1 + mn\omega d_1/d_2 + Dn\tau + m/(\omega\epsilon)).
\]

We maximize $d_2$ and $\tau$ to satisfy the constraint, getting $d_2 = \tilde\Theta(\epsilon D)$ and $\tau = \tilde\Theta(n^2\omega/(\epsilon d_1 D))$. The time bound becomes
\[
  \tilde O(m\sqrt n + mn\omega/d_1 + mn\omega d_1/(\epsilon D) + n^3\omega/(\epsilon d_1) + mD/\omega)
= \tilde O(m\sqrt n + mn\omega d_1/(\epsilon D) + n^3\omega/(\epsilon d_1) + mD/\omega).
\]
We set $d_1 = \tilde\Theta(\sqrt{Dn^2/m}) = \tilde\Theta(n\sqrt{D/m})$ and get a time bound of
\[
  \tilde O(m\sqrt n + \sqrt m n^2\omega/(\epsilon\sqrt D) + mD/\omega).
\]
Next, we set $\omega = \tilde\Theta(m^{1/4}D^{3/4}\sqrt\epsilon/n)$ and get the time bound
\[
  \tilde O(m\sqrt n + m^{3/4}nD^{1/4}/\sqrt\epsilon)
= \tilde O(m\sqrt n + m^{3/4}n^{5/4}/\epsilon^{3/4})
= \tilde O(m^{3/4}n^{5/4}/\epsilon^{3/4}).
\]
Ensuring the constraint $d_2\geq 2d_1$ is ensured as in Section~\ref{subsec:TimeAdaptive} at the cost of an additional time bound of $\tilde O(n^2/\epsilon^2)$.

Multiplying the running time by $O(\log W)$ gives the time bound for weighted graphs in Theorem~\ref{Thm:SSSPAdaptive}.

\subsection{Showing Theorem~\ref{Thm:SSSPDense} for weighted graphs}
To get the result of Theorem~\ref{Thm:SSSPDense} for weighted graphs, we can use the results of Section~\ref{subsec:LowDiamDecompW} to get a weighted version of Corollary~\ref{Cor:LowDiamDecompObl} in which each set $S_i$ has size $|S_i| = \tilde O(n2^i\omega/d)$ in expected time $\tilde O(md + mn\omega/d)$. The total time bound for \DSSSP in this case is
\[
  \tilde O(md + mn\omega/d + Dn\tau + mD/(\omega\epsilon))
\]
under the constraint $d + n^2\omega/(\tau d) = \tilde O(\epsilon D)$. Setting $\tau = \tilde\Theta(n^2\omega/(\epsilon dD))$ simplifies the constraint to $d = \tilde O(\epsilon D)$ and we get a time bound of
\[
  \tilde O(md + mn\omega/d + n^3\omega/(\epsilon d) + mD/(\omega\epsilon))
= \tilde O(md + n^3\omega/(\epsilon d) + mD/(\omega\epsilon)).
\]
Next, we set $\omega = \tilde\Theta(\sqrt{mDd}/n^{3/2})$ and get a time bound of
\[
  \tilde O(md + \sqrt mn^{3/2}\sqrt D/(\epsilon\sqrt d))).
\]
The optimal choice for $d$ is $d = \tilde\Theta(nD^{1/3}/(m^{1/3}\epsilon^{2/3}))$, provided $d = \tilde O(\epsilon D)$, i.e., provided that $D = \tilde\Omega(n^{3/2}/(\sqrt m\epsilon^{5/2}))$. In this case, we get a time bound of $\tilde O(m^{2/3}n^{4/3}/\epsilon^{5/3})$. Otherwise, the algorithm of King gives a time bound of $\tilde O(\sqrt mn^{3/2}/\epsilon^{5/2})$. Multiplying by $O(\log W)$ gives the bound for weighted graphs in Theorem~\ref{Thm:SSSPDense}.

\subsection{Showing Theorem~\ref{Thm:SSSPSparse} for weighted graphs}
Finally, to get the result of Theorem~\ref{Thm:SSSPSparse} for weighted graphs, we can use arguments similar to those above in order to get a time bound of
\[
  \tilde O(md + mn\omega/d + Dm/\omega + \delta m/\epsilon + Dn/(\epsilon D') + \delta Dn/(\epsilon(D')^3) + Dm/\rho)
\]
under the constraint $d + n^2\rho D'/(d\delta) = \tilde O(\epsilon D)$. After simplifying as in Section~\ref{subsec:OptParSSSPSparse}, we get a time bound of
\[
  \tilde O\left(dm + \frac{mn\omega}{d} + \frac{Dm}\omega + \frac{\rho m^{2/3}n^{7/3}}{\epsilon^2dD^{2/3}} + \frac{m^{1/3}}{\epsilon} + \frac{Dm}{\rho}\right).
\]
Picking $d = \tilde\Theta((\sqrt\rho n^{7/6})/(\epsilon D^{1/3}m^{1/6}))$ gives a time bound of
\[
  \tilde O\left(\frac{\sqrt\rho m^{5/6}n^{7/6}}{\epsilon D^{1/3}} + \frac{\epsilon D^{1/3}m^{7/6}\omega}{\sqrt\rho n^{1/6}} + \frac{Dm}\omega + \frac{m^{1/3}}{\epsilon} + \frac{Dm}{\rho}\right).
\]
We pick $\rho = \tilde\Theta((\epsilon^{2/3}D^{8/9}m^{1/9})/n^{7/9})$ to get a time bound of
\[
  \tilde O\left(\frac{D^{1/9}m^{8/9}n^{7/9}}{\epsilon^{2/3}} + \frac{\epsilon^{2/3}m^{10/9}n^{2/9}\omega}{D^{1/9}} + \frac{Dm}\omega + \frac{m^{1/3}}{\epsilon}\right).
\]
We set $\omega = \tilde\Theta(D^{5/9}/(m^{1/18}n^{1/9}\epsilon^{1/3}))$ and the time bound becomes
\[
  \tilde O\left(\frac{D^{1/9}m^{8/9}n^{7/9}}{\epsilon^{2/3}} + D^{4/9}m^{19/18}n^{1/9}\epsilon^{1/3} + \frac{m^{1/3}}{\epsilon}\right).
\]
It is easy to see that the first term dominates the second for any choice of $D$ and we get the same bound as we did in the unweighted case in Section~\ref{subsec:OptParSSSPSparse}. This shows Theorem~\ref{Thm:SSSPSparse} also for weighted graphs.
\end{document}